\numberwithin{equation}{section}
\numberwithin{figure}{section}
\newcommand{\bs}{\boldsymbol}
\newcommand{\ud}{\underline\delta}
\newtheorem*{lemma*}{Lemma 3'}
\theoremstyle{plain}
\newtheorem{thm}{\protect\theoremname}
\newtheorem{lemma}{Lemma}
\newtheorem{example}{Example}
  \theoremstyle{definition}
  \newtheorem{defn}{\protect\definitionname}
\providecommand{\definitionname}{Definition}
\providecommand{\theoremname}{Theorem}
\newtheorem{corr}{Corollary}
\theoremstyle{plain}
\newtheorem{proposition}{Proposition}
\newtheorem{corollary}{Corollary}
\newtheorem{conjecture}{Conjecture}
\theoremstyle{remark}
\newcommand{\dd}{{{\rm{d}}}}
\newcommand{\dl}{\delta}
\newcommand{\al}{\alpha}
\newcommand{\la}{\lambda}
	\newcommand{\uy}{\underline{y}}
	\newcommand{\oy}{\overline{y}}
\DeclareMathOperator*{\co}{co}
\DeclareMathOperator*{\interior}{int}
\DeclareMathOperator*{\supp}{supp}
\newcommand{\ssymbol}[1]{^{\@fnsymbol{#1}}}
\begin{document}
\onehalfspacing
\title{BLACKWELL EQUILIBRIA IN REPEATED GAMES}

\author{Costas Cavounidis$\ssymbol{1}$}
\address{$\ssymbol{1}$ University of Warwick, c.cavounidis@warwick.ac.uk}

\author{Sambuddha Ghosh$\ssymbol{2}$}
\address{$\ssymbol{2}$ Shanghai University of Finance and Economics, CUHK, sghosh@cuhk.edu.hk}

\author{Johannes H{\"o}rner$\ssymbol{3}$}
\address{$\ssymbol{3}$  CNRS (TSE), johannes.horner@tse-fr.eu}

\author{Eilon Solan$\ssymbol{4}$}
\address{$\ssymbol{4}$ Tel Aviv University, eilons@tauex.tau.ac.il}

\author{Satoru Takahashi$\ssymbol{5}$}
\address{$\ssymbol{5}$ National University of Singapore, ecsst@nus.edu.sg}
\date{\today}

\begin{abstract}
We apply Blackwell optimality to repeated games. 
An equilibrium whose strategy profile is sequentially rational for all high enough discount factors simultaneously is a Blackwell (subgame-perfect, perfect public, etc.)  equilibrium.
The bite of this  requirement depends on the monitoring structure. Under perfect monitoring, a ``folk'' theorem holds relative to an appropriate notion of minmax. Under imperfect public monitoring, absent a public randomization device, any perfect public equilibrium generically involves pure action profiles or stage-game Nash equilibria only. Under private conditionally independent monitoring, in a class of games that includes the prisoner's dilemma, the stage-game Nash equilibrium is played in every round.
\bigskip

\noindent \textbf{Keywords.} Repeated games, Blackwell optimality.
\end{abstract}
\thanks{We wish to thank Gabriel Carroll, Aniruddha Dasgupta, Olivier Gossner, Bart Lipman,
Elliot Lipnowski, Francesco Nava, Jawwad Noor, Juan Ortner, Phil Reny, Takuo Sugaya, and Balazs Szentes,
as well as audiences at Boston University, the Econometric Society Summer Meetings, LSE, MIT, NYU, and the University of Warwick for useful comments and discussions.
Solan thanks the support of the Israel Science Foundation, Grants \#217/17 and \#211/22.
}
\maketitle
\bigskip{}


\newpage{}












\newpage

\section{Introduction}


By and large, the economic literature on repeated games has adopted discounting as the payoff criterion. It is technically convenient, and captures the idea that the distant future does not matter much for current decisions, which is certainly ``more realistic than its opposite,'' as a leading microeconomics textbook puts it.\footnote{See Mas-Colell, Whinston, and Green (1995), p.734.} Yet, at least in the context of repeated games, it has two consequences that are often viewed as undesirable. First, no action profile can typically be ruled out; in this sense little is said about
behavior.\footnote{See the discussion in Aumann and Maschler (1995), p.139.} Second, predictions depend on common knowledge of the exact
discount factor, undoubtedly a strong assumption.\footnote{This second issue has  led many game theorists (Aumann and Maschler, among many others) to favor undiscounted payoff criteria. This is throwing the baby out with the bathwater, as forsaking impatience reinstates the unrealistic ``opposite'' mentioned above.}

In this paper, we study \textit{Blackwell equilibria}, that is, equilibria whose strategy profiles are optimal for all high discount factors simultaneously. Hence, they preserve the property that time isn't free, and that every round matters for the player's payoff, yet, by definition, they cannot depend on the exact value of the discount factor. This is a payoff criterion, not a solution concept. For the latter, we adopt what is commonly used depending on the environment: subgame-perfect Nash equilibrium, perfect public equilibrium, sequential equilibrium, etc. With some abuse, we refer to the relevant notion of equilibrium under the Blackwell criterion as Blackwell equilibrium. The name ``Blackwell equilibrium'' derives from ``Blackwell optimality,'' the corresponding concept introduced for Markov decision processes by Blackwell (1962).

Robustness to the exact discount rate admits several interpretations. When the rate is thought of as arising from the random length of the actual interaction, there are many situations in which players are uncertain about exactly how long this interaction will take place, and this uncertainty might be sufficiently vague that modeling it explicitly seems futile.\footnote{As Aumann and Maschler (1995, p.133) put it, ``there is a limit to the amount of detail that can usefully be put into a model, or indeed that the players can absorb or take into account.''} The same applies when the discount rate pertains to the players' time preferences.\footnote{Admittedly, in that case, by revealed preference, a player ``knows'' his own discount factor. The case where each player knows his own discount factor, but not the others', is taken up in a separate short note available by request from the authors. } Uncertainty regarding future interest rates is both subjective and significant; and it has large, negative and persistent effects on the economy (see, \textit{e.g.}, Istrefi and Mouabbi, 2018).\footnote{As discussed below, our analysis is readily adjusted to the case in which the discount factor isn't constant over time, or deterministic.} When players in the game are convenient proxies for groups of agents (countries, political parties, firms, etc.), then Blackwell equilibria have the desirable feature that they are unanimously viewed as optimal by the constituents of each group, independent of exactly how patient each of them is, provided that they are all sufficiently patient. Finally, from the point of view of the analyst, they allow to predict, or explain, behavior that might apply to a variety of situations, which might differ in the details of the interaction length.

Our goal is to understand how this more restrictive payoff criterion affects the usual predictions about payoffs and action profiles in infinitely repeated games, under various monitoring structures. Our main result is that its (relative) bite is increased as monitoring ``worsens,'' so to speak. Loosely speaking, this is because robustness to discounting makes it difficult to enforce mixed actions. Yet, the role of mixed strategies becomes progressively more important as the information structure shifts from perfect monitoring, to imperfect public monitoring, and finally to imperfect private monitoring.

 In games of perfect monitoring, Blackwell (subgame-perfect) equilibria still span a large set of equilibrium payoffs, but not as large as under the (limit of) discounting criterion. Indeed,  the standard folk theorem (see Fudenberg and Maskin, 1986, hereafter FM) requires punishments that, depending on the stage game, might involve mixing by the punishing players. To make the players indifferent over the support of their actions, these players must be compensated in the continuation game, as a function of the action they have chosen. To achieve indifference, this compensation must be finely tuned to the discount factor. We show that there is no way around this difficulty. As a result, a new notion of minmax payoff must be introduced, capturing the fact that punishing players must be myopically indifferent across all actions within the support of their mixed action (but not necessarily over all actions available to them).

We show that this is the only adjustment that must be made to the ``standard'' statement of the folk theorem -- indeed, mixed actions play no other role in the usual proofs under perfect monitoring. The construction involves the same ingredients as the proof of  FM, and can be achieved using a variation of simple strategies (Abreu, 1988). Not too surprisingly, this folk theorem can be extended to imperfect public monitoring, in the special case in which monitoring satisfies product structure, individual full rank, and a public randomization device is available.\footnote{Admittedly, the product structure is very special, but it applies to important classes of games, such as games with one-sided imperfect monitoring, e.g. principal-agent games, and games with adverse selection and independent types.}

In general, under imperfect public monitoring, it is known that unpredictable behavior serves another purpose. Mixed actions enlarge the set of detectable deviations, and hence affect the sufficient conditions under which the folk theorem usually holds (Fudenberg, Levine, and Maskin, 1994, hereafter FLM). Yet, the impossibility of fine-tuning continuation payoffs in order to compensate players for mixing in a way that would be independent of the discount rate further restricts the action profiles that can be implemented. Absent a public randomization device, only stage-game Nash equilibria and pure action profiles can be played in a (perfect public) Blackwell equilibrium (generically, see Proposition \ref{prop:generic}). That is, the only mixed actions that can be played are stage-game Nash equilibria: it no longer suffices that players be myopically indifferent over the support of their mixed action. This is because, unless the action profile is a Nash equilibrium of the stage game, the continuation play must depend on the realized signal, which makes it impossible for players to be indifferent over multiple actions (as they generically induce distinct distributions over public signals), even if they are myopically indifferent over those.

A major difficulty in the analysis under imperfect monitoring is that such games are usually studied via recursive techniques involving the set of equilibrium payoffs (see Abreu, Pearce, and Stacchetti, 1990). Because payoffs of a given strategy profile, and  the ``self-generation'' operator itself, depend on the discount rate, standard results are not as helpful here, since optimality must hold for an entire range of discount rates simultaneously.\footnote{The alternative route in the literature involves review strategies, following Radner (1985). Unfortunately, review strategies don't specify behavior fully, making it difficult to ensure that behavior is independent of the discount rate.} Hence, our analysis must tackle directly the issue of the action profiles that can be enforced.

Finally, we show that behavior is further constrained once monitoring is private. In a  class of games that includes the prisoner's dilemma, when monitoring satisfies conditional independence, the only Blackwell equilibrium outcome consists in the repetition of the stage-game Nash equilibrium. This is because indifference between actions is known to play a further role under private monitoring. Under conditional independence, with pure strategies, a player cannot tell, even statistically, whether his opponent is supposed to ``punish or reward'' him; therefore, his opponent cannot be incentivized to select one or the other continuation strategy as a function of the signals he receives, unless he happens to be indifferent across those (see Matsushima, 1989). Hence, any non-trivial sequential equilibrium must involve indifferences (whether a player actually mixes or uses his private history to select one or the other continuation strategy). For the same reason as under public monitoring, such indifference is inconsistent with the robustness to the discount rate.

\subsubsection*{Related Literature}
The Blackwell optimality criterion has been introduced by Blackwell (1962) for finite Markov decision processes, as a way of characterizing optimality in the undiscounted case. Blackwell shows that optimal policies exist, and provides (a pair of) optimality equations to solve for those. More recently, this criterion has been applied to stochastic games, both in discrete time (Singh,  Hemachandra and Rao, 2013) and in continuous time (Singh, Hemachandra, 2016).  The focus of these papers is to provide conditions under which (Nash) equilibria exist under this payoff criterion. They do this for games in which a single player controls the transitions and the payoff of the non-controller is additive in the players' actions.
Indeed, existence is a non-trivial problem in the environments they consider. In repeated games, this is immediate, as the repetition of stage-game Nash equilibria is a Blackwell equilibrium. In contrast to the above papers, we are interested in characterizing the set of such equilibria under different monitoring structures.

The motivation of our paper is related to Gossner (2020). Gossner's goal is also to define equilibria that are robust to slight perturbations in the repeated game.  Gossner introduces incomplete penal codes as partial descriptions of equilibrium strategies, and studies to what extent such codes can be found, whose completion is allowed to depend on the fine details of the game. Because his class of perturbations includes not only the discount rate but also the payoff matrix itself, complete penal codes typically do not exist.

Slightly less related are some papers that focus on special classes of strategies.  Kalai, Samet and Stanford (1988) study reactive equilibria, which  are equilibria in which at least one player conditions his actions only on his own opponent's action and not on his own past actions. As they show, if a reactive strategy profile is robust to nearby discount factors, it must play the stage-game Nash equilibrium in the prisoner's dilemma. 


Section 2 illustrates our results with the prisoner's dilemma. Section 3, 4 and 5 state our general results for perfect, imperfect public, and imperfect private monitoring respectively.

 \section{An Example}
 
 To motivate our analysis and illustrate our results, it suffices to consider the normal-form game \textit{par excellence}, the prisoner's dilemma, which captures the trade-off between cooperation and competition, from economics to political science. To be concrete, let us adopt the parameters from Mailath and Samuelson (2006, Section 7.2), namely:
 
\begin{figure}[h]
\centering
	\begin{game}{2}{2}
		& $C$    & $D$ \\
		$C$ & $2,2$ & $-1,3$ \\
		$D$ & $3,-1$ & $0,0$ 
	\end{game}
\end{figure}
Under perfect monitoring, the infinitely repeated version of the game lacks subtlety: any strictly individually rational payoff vector can be supported as a strict equilibrium, involving a rotation among pure action profiles with the requisite frequencies, provided only that players are patient enough to ensure that the threat of Nash reversion counters the potential immediate gain from defecting. Given that incentives are strict, uniformly across histories, the specific discount rate is irrelevant, as long as it is common knowledge that it is below some critical threshold.

Cooperation gets more complicated to sustain when actions are not perfectly monitored. Again, we follow Mailath and Samuelson and consider the case in which the public signal is either $\oy$ or $\uy$, with the probability of observing the high signal $\oy$ given the action profile $a \in A:=\{C,D\}\times\{C,D\}$ being given by
$$
\pi(\oy \mid a)=
\begin{cases}
p \mbox{ if } a= CC,\\
q \mbox{ if }a =CD,DC\\
r \mbox{ otherwise}.
\end{cases}
$$
Specifically, we choose $p=9/10$, $q=8/10$ and $r=1/10$. Think of the two players as Cournot duopolists: the more they cooperate by restricting output, the more likely it is that the publicly observed market price is high.\footnote{See Mailath and Samuelson, Figure 7.2.1, for the \textit{ex post} payoff matrix, defined in terms of the signals only, giving rise to the payoffs considered here.}

What are the equilibrium payoffs that can be supported by strict equilibria, provided only that players are  patient enough? The characterization of Fudenberg and Levine (1994) provides the  tool necessary to identify this set $\lim_{\delta\rightarrow 1}E^p_\delta$.\footnote{That is, provided attention is restricted to  public perfect equilibria, as we do here.} As they show, it suffices to consider the parameterized program, for $\lambda \in \mathbf{R}^2$, $\| \lambda \|=1$, given by
\[
k(\lambda)=\sup_{\substack{ a \in A,\ v\in \mathbf{R}\\\{x(y) \in \mathbf{R}^2\}_{y=\uy,\oy} }} \lambda \cdot v,
\]
such that (``$v$ is a pure-strategy Nash payoff of the game augmented by $x$'')\footnote{The set of Nash equilibria of the game where players in $N$ pick action profiles in $A$, resulting in payoffs according the utility functions $\hat{u}$, is denoted here by $\textrm{NE} (N, A, \hat{u} )$.}
\[
v = \hat{u}(a^*) \mbox{ for some } a^* \in \textrm{NE} (N, A, \hat{u} ), \textrm{ where } \hat{u}(a) := u(a)+\sum_y \pi(y \mid a)x(y);
\]
and $\lambda \cdot x(y) \le 0,\ y=\uy,\oy$  (``weighted budget-balance'').

Let  $\mathcal{H}:=\cap_\la \{ v \in \mathbf{R}^2 \mid \lambda \cdot v \le k(\la)\}$. If $ \mathcal{H}$ has non-empty interior, then $\lim_{\delta\rightarrow 1}E^p_\delta= \mathcal{H}$. While there are infinitely many ``directions'' $\lambda$ to consider, it isn't difficult to solve for $\mathcal{H}$ in practice: after all, for each of the four action profiles, this is a linear programme in $x$.\footnote{In fact, as shown in H\"{o}rner, Takahashi and Vieille (1994), by considering the dual program, only finitely many programmes must be solved, given that there are finitely many pure action profiles.} In this particular instance, 
\[
\lim_{\delta\rightarrow 1} E^p_\delta =\mathrm{co}\ \{(1/4,7/4),(7/4,1/4),(0,0)\}, 
\]
a set (given by the blue triangle $OPQ$ in Figure \ref{fig1}) whose upper edge coincides with the line $v_1+v_2 = 2$.\footnote{All formal calculations are in the supplementary appendix.}

\begin{figure}
\begin{center}

\begin{tikzpicture}[scale=1.5]

\coordinate (O) at (0,0);
\coordinate (A) at (-1,3);
\coordinate (B) at (2,2);
\coordinate (C) at (3,-1);

\coordinate (P) at (0.25, 1.75);
\coordinate (Q) at (1.75, 0.25);

\coordinate (P') at (0.15, 2);
\coordinate (Q') at (2, 0.15);
\coordinate (R') at (1.2, 1.2);

\draw[->] (-1.5,0) -- (3.5,0) node[below] {};
\draw[->] (0,-1) -- (0,3.5) node[left] {};

\draw[dashed] (A) -- (C);
\draw[dashed] (O) -- (B);

\draw[thick] (O) -- (A) -- (B) -- (C) -- cycle;

\draw[thick,blue] (O) -- (P) -- (Q) -- cycle;

\draw[thick,red] (O) -- (P') -- (R') -- (Q') -- cycle;

\fill (O) circle (1pt) node[below left] {$O = (0,0)$};
\fill (A) circle (1pt) node[left] {$(-1,3)$};
\fill (B) circle (1pt) node[above right] {$(2,2)$};
\fill (C) circle (1pt) node[right] {$(3,-1)$};
\fill (P) circle (1.5pt) node[below right] {$P$};
\fill (Q) circle (1.5pt) node[above left] {$Q$};
\fill (P') circle (1.5pt) node[above right] {$P'$};
\fill (R') circle (1.5pt) node[above left] {$R'$};
\fill (Q') circle (1.5pt) node[right] {$Q'$};

\end{tikzpicture}

\end{center}
\caption{Pure-strategy (blue) and mixed-strategy (red) limit payoff sets (Not to scale).}
\label{fig1}
\end{figure}
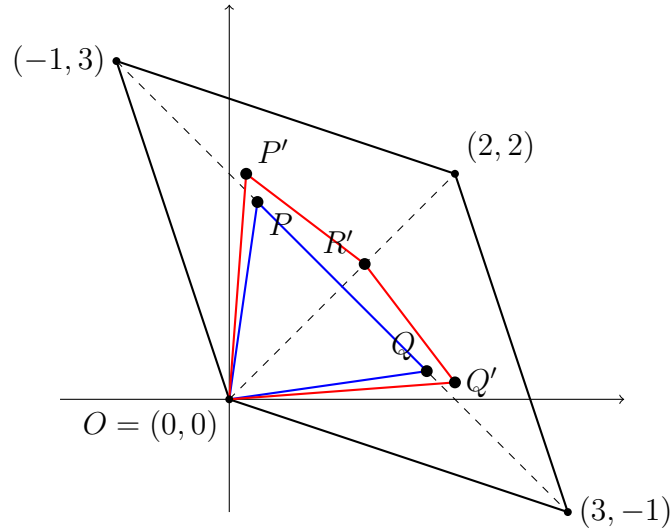

Note that payoffs are bounded away from the efficient payoff vector $(1,1)$, no matter how patient the  players. Because there are only two signals, deviations cannot be attributed to a particular player (formally, the monitoring structure fails \textit{pairwise full rank}). As a result, statistical evidence of defection must lead to surplus destruction, rather than to surplus shift  (from the ``statistically guilty'' to the ``statistically virtuous'' player). Since such evidence is bound to occur, so does such destruction.

Cooperation might be limited here, but at least it involves strict incentives at all times: as a result, here again, for any payoff vector (strictly) within this set to be an equilibrium payoff, it suffices that players commonly know that the discount rate be lower than some critical threshold: more information isn't called for.

Could our conspirators do better by playing mixed strategies? To answer this question, here again, it suffices to apply Fudenberg and Levine's algorithm, replacing $a \in A$ with $\al \in \Delta(A_1)\times \Delta(A_2)$ wherever $a$ occurs above. The programme becomes truly non-linear, as attention cannot be restricted to finitely many action profiles. Yet, in this instance, it isn't difficult to solve for the limit set of equilibrium payoff  $\lim_{\delta\rightarrow 1}E _\delta$ involving mixed strategies as well. Formally,
\begin{equation*}
 \resizebox{1\hsize}{!}{$  \lim_{\delta\rightarrow 1} 
\ E _\delta = \mathrm{co}  \left\{
(0,0),
\left(\frac{14-2\sqrt{15}}{36-3\sqrt{15}},\frac{98-14\sqrt{15}}{36-3\sqrt{15}}\right),
 \left(\frac{7-\sqrt{15}}{3}, \frac{7-\sqrt{15}}{3}\right), 
\left(\frac{98-14\sqrt{15}}{36-3\sqrt{15}},\frac{14-2\sqrt{15}}{36-3\sqrt{15}}\right)
\right\}$},
\end{equation*}
but  Figure  \ref{fig1} is certainly more eloquent --- see the red quadrilateral $OP'R'Q'$, which is not drawn to scale for the sake of visual clarity. 

The improvement is real even if it isn't blindingly obvious: using mixed actions increases efficiency. To see why, note that defecting might be bad for payoffs, but it drastically improves the statistical power of the signals: when both players defect, the low signal becomes very likely, whereas all other action profiles induce very similar signal distributions. Hence, randomizing allows players to improve on their ability to detect defections, at the cost of defecting themselves!

However, such sophisticated play requires players to be willing to randomize. Because defecting is dominant in the stage game, continuation play must be finely tuned to compensate them for cooperating. Intertemporal preferences matter, and so does the knowledge of the discount rate. One of our main results is to show that, no matter how patient players are commonly known to be, it isn't possible to implement such play, \textit{unless} the discount factor is precisely known, in a sense made precise below.

What if monitoring isn't public? Randomization is useful in that case as well. Indeed, the belief-based construction of Sekiguchi (1997), or the belief-free ones of Piccione (2002) or Ely and V\"{a}lim\"{a}ki (2002) (see Mailath and Samuelson, 14.1.1 for details in the context of our specific prisoner's dilemma example) rely on it. To be sure, it is possible that this randomization could be purified by using a player's private history as a purification device. Nonetheless, indifference over pure strategies is critical, as shown by Matsushima (1991), at least when monitoring has a product structure. Our last main result leverages his result and implies that, in our example, with product structure, absent precise knowledge of discounting, the only equilibrium play involves perpetual defection.

While this example provides a good illustration of our main results, it is too simple to capture all of them. In particular, mixing -- which as we show, is severely curtailed, but not necessarily precluded under imprecise discounting -- can also play a role under perfect monitoring, albeit a limited one, to the extent that punishing players might require punishers to randomize. Accordingly, this is where our formal analysis starts.

\section{Blackwell Equilibria Under Perfect Monitoring}

This section studies Blackwell equilibria  under perfect monitoring. First, we derive necessary conditions that such equilibria must satisfy. This leads to a modified notion of minmax payoff, and to a folk theorem relative to this notion.

\subsection{Notation and Definitions}

The set of players is $I=\{1,\ldots,n\}$. Player $i$'s finite set of actions is $A_{i}$, and $A:=\prod_{i\in I}A_{i}$ is the set of all pure action profiles. A mixed action of $i$ is $\alpha_{i}\in\Delta  A_{i}$, where $\Delta E$ is the set of all probability distributions on a set $E$; the set of (independent) mixed-action profiles is
$\mathcal {A} := \prod_i \Delta A_i$. Player $i$'s reward function is a map $g_{i}:A\rightarrow\mathbb{R}$, whose domain is extended to $\Delta A$ in the usual way, and $g:=(g_i)_{i=1}^n$. The set of feasible payoffs is $F:=\co(g(A))$, where $\co$ denotes the convex hull. Denote this normal-form game by $G=\left<I;A,g\right>$.

The stage game $G$ is played at each $t\in\mathbb{Z}_{+}$. Denoting by $a^{(t)}\in A$  the action profile chosen in each round $t$, the history at the end of round $t\in \mathbb{Z}_+$ is $h^{t}=( a^{(1)},\ldots, a^{(t)})\in A^{t}=:H^{t}$, with $h^{0}$   the empty history and $H:=\cup_{t=0}^\infty H^t$ the set of all histories. An outcome $h^\infty$ is an infinite sequence  $(a^{(t)})_{t=1}^\infty$. 
 Given discount factor $\delta_i \in[0,1)$, player $i$'s (average discounted)  payoff is defined as
\begin{equation}
U_{i} \left( h^\infty ,\delta_i\right)
:=(1-\delta_{i})
{\displaystyle \sum_{t=1}^{\infty}} {\delta_{i}}^{t-1}g_{i}(a^{(t)}).
\label{eq:discounted-u}
\end{equation}
This defines the repeated game $G^{\infty}({\bs{\delta}})$, where the vector $\bs{\delta}=(\delta_{1},\dotsc,\delta_{n})$ is referred to as the discount factor vector; $G^{\infty}(\delta)$ is the special case with common discount factor $\delta$.\footnote{Bold symbols are used for only those vectors whose scalar counterparts are also used, \textit{e.g.}, $\bs{\delta}$.}

A pure strategy of player $i$ is a function $s_i:H\rightarrow A_{i}$;
a behavioral strategy is a function $\sigma_i:H\rightarrow \Delta A_{i}$.
The set of (behavioral) strategies for $i$ is denoted $\Sigma_i$, and the set of (behavioral)
strategy profiles is denoted $\Sigma$. Player $i$'s expected payoff (or payoff, for short) given $\sigma$, $U_{i}(\sigma, \delta_i)$,  is defined the usual way. The payoff vector is denoted $ U(\sigma, \bs\delta)$.%

Unless mentioned otherwise, no public randomization device (PRD) is assumed.
%

\subsection{Blackwell Equilibrium}

Given that monitoring is perfect, the natural solution concept is subgame-perfect Nash equilibrium (SPNE).
\begin{defn}\label{defi} A strategy profile $\sigma \in \Sigma$ is a \textbf{Blackwell SPNE above} $\bs{\underline{\delta}}$ if there exists $\underline{\delta}\in[0,1)$ such that $\sigma$ is an SPNE of $G^{\infty} (\bs \delta)$ at any $\bs\delta \geq\underline{\delta}\cdot (1,\ldots,1)$.\\
A vector $v\in  \mathbb{R}^n$ is a \textbf{Blackwell SPNE payoff at} $\bs\delta$ if there exists a Blackwell SPNE $\sigma$ above some $\underline{\delta}$, with
$\bs\delta\geq\underline{\delta}\cdot (1,\ldots,1)$, such that $v=U(\sigma;\bs\delta)$.
\end{defn}
We say that a strategy profile is a Blackwell SPNE if it is a Blackwell SPNE above some $\underline{\delta}<1$; similarly, we refer to a Blackwell SPNE payoff. 
A more general definition would allow discounting to vary with time (or even with the history). That is, one might consider an evaluation $(\delta_t)_{t=1}^\infty$ (a probability distribution over positive integers), where the weight of any round $t$ is given by $\delta_t$ (see Renault, 2014). This might be particularly relevant in settings where discounting captures the uncertainty in the length of the interaction. The choice adopted here is made primarily for simplicity. Plainly, enlarging the set of discount sequences that the equilibrium must survive further restricts the set of equilibria; yet, our equilibrium constructions do not rely on the fact that the sequences we focus on are constant (or, for that matter, deterministic). Similarly, we could weaken the criterion without changing any result by requiring optimality to hold only for constant vectors $\bs \delta=(\dl,\ldots,\dl)$, if this is more appropriate in some context.

\subsection{A Necessary Condition\label{subsec:BE-necessary-PM}}

Subgame-perfection puts little restriction on action profiles specified by an equilibrium, as long as the feasible and individually rational payoff set has non-empty interior. Matters are different under the Blackwell criterion.

 The set $\mathcal{A}^\mathrm{MI}$ are those mixed action profiles $\alpha$ such that each player gets the same  reward  from each action in the support of $\alpha_i$, given $\alpha_{-i}$.
 This property is called \textbf{Myopic Indifference} (MI).
Formally,
\begin{equation}
\mathcal{A}^\mathrm{MI}
:=\left\{ \alpha\in\mathcal A\,\Big|\ g_{i}(a_{i},\alpha_{-i})
=g_{i}(\alpha)\ \forall
i\in I,\forall a_{i}\in\supp(\alpha_{i})\right\}.
\label{eq:X}
\end{equation}
To put it differently, $\alpha$ is in $\mathcal{A}^\mathrm{MI}$ if, and only if, it is a Nash equilibrium of the stage game   $\left<I;(\supp(\alpha_i))_{i\in I},(g_{i})_{i \in I}\right>$. 
Notice that $\mathcal{A}^\mathrm{MI}$ contains all pure-action profiles (with $\supp(a_i)=\{a_i\}$) and Nash equilibria (with $\supp(\alpha_i)=A_i$).

The motivation for the definition of $\mathcal{A}^\mathrm{MI}$ derives from the following result.
\begin{proposition}\label{prop:X}
If $\sigma$ is a Blackwell SPNE, then $\sigma(h)\in\mathcal{A}^{\mathrm{MI}}$ for any history $h\in H$.
\end{proposition}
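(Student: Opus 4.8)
The plan is to fix an arbitrary history $h$, a player $i$, and two actions $a_i,b_i$ in the support of $\alpha_i:=\sigma_i(h)$, and to show that $g_i(a_i,\alpha_{-i})=g_i(b_i,\alpha_{-i})$, where $\alpha:=\sigma(h)$; myopic indifference of $\sigma(h)$ then follows by averaging over the support and ranging over all $i$ and $h$. Since $\sigma$ is a Blackwell SPNE above some $\ud$, it is an SPNE of $G^\infty(\delta,\ldots,\delta)$ for \emph{every} common discount factor $\delta\in[\ud,1)$. Because the discounted repeated game is continuous at infinity, the one-shot deviation principle applies, so sequential rationality at $h$ forces player $i$ to be indifferent, at every such $\delta$, among all actions in $\supp(\alpha_i)$ when the rest of play is governed by $\sigma$. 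Writing $C_i(c_i;\delta)$ for the expected continuation payoff to $i$ from playing $c_i$ now against $\alpha_{-i}$ and reverting to $\sigma$ thereafter, this indifference reads
\[
(1-\delta)\,g_i(a_i,\alpha_{-i})+\delta\,C_i(a_i;\delta)=(1-\delta)\,g_i(b_i,\alpha_{-i})+\delta\,C_i(b_i;\delta),\qquad \delta\in[\ud,1).
\]

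The crucial observation is that $\sigma$ is held fixed across discount factors, so the probability distribution over continuation outcomes does not depend on $\delta$; only the evaluation does. Hence each continuation value is a power series in $\delta$: if $\rho_t(h')$ denotes the expected stage-$t$ reward to $i$ under $\sigma$ in the subgame at $h'$ (bounded by $\max_a|g_i(a)|$), then $U_i(\sigma|_{h'};\delta)=(1-\delta)\sum_{t\ge1}\delta^{t-1}\rho_t(h')$, and $C_i(c_i;\delta)$ inherits this form after averaging over $a_{-i}$ with weights $\alpha_{-i}$. Substituting into the displayed identity and dividing by $(1-\delta)>0$, the $(1-\delta)$ factors cancel and one is left with an identity of the form
\[
g_i(a_i,\alpha_{-i})-g_i(b_i,\alpha_{-i})=-\sum_{t\ge1}d_t\,\delta^{t},
\]
valid for all $\delta\in[\ud,1)$, where the bounded coefficients $d_t$ are the $\alpha_{-i}$-weighted differences of the stage-$t$ continuation rewards following $a_i$ versus $b_i$.

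The left side is constant in $\delta$ while the right side is a power series vanishing at $\delta=0$. The heart of the argument is then a rigidity step: the function $\delta\mapsto\big(g_i(a_i,\alpha_{-i})-g_i(b_i,\alpha_{-i})\big)+\sum_{t\ge1}d_t\delta^t$ is real-analytic on $(-1,1)$ and vanishes on $[\ud,1)$, a set with an accumulation point; by the identity theorem for power series all its coefficients vanish, in particular the constant term, giving $g_i(a_i,\alpha_{-i})=g_i(b_i,\alpha_{-i})$. I expect this passage---turning ``indifference for all large $\delta$'' into a statement about the myopic ($\delta\to0$) payoffs---to be the main obstacle and the conceptual core: a naive $\delta\to1$ limit only shows that the continuation values following $a_i$ and $b_i$ coincide in the limit, not that the stage rewards agree, whereas exploiting analyticity (equivalently, that a power series constant on an interval is globally constant, so its value at $\delta=0$ can be read off) is exactly what delivers $g_i(a_i,\alpha_{-i})=g_i(b_i,\alpha_{-i})$. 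The remaining steps---checking continuity at infinity to invoke the one-shot deviation principle, and the bookkeeping in expanding the continuation values as series in $\delta$---are routine.
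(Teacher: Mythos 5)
Your proposal is correct and follows essentially the same route as the paper's proof: exploit indifference over the support at every $\delta$ in an interval, expand the two continuation values as power series in $\delta$ (valid because $\sigma$, and hence the distribution over continuation outcomes, is fixed independently of $\delta$), and invoke the identity theorem for analytic functions to conclude that all coefficients, in particular the constant term $g_i(a_i,\alpha_{-i})-g_i(b_i,\alpha_{-i})$, vanish. The paper likewise reads off the stage-reward equality by evaluating the identically-zero power series at $\delta_i=0$, so your identification of the analyticity step as the conceptual core matches the paper exactly.
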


\begin{proof} If $\sigma$ is a Blackwell SPNE, it is an SPNE at all $\delta$ in an open interval $\mathcal{O}\subset(0,1)$. Fix any history $h^{t-1}$, player $i\in I$, and actions $a_{i},a_{i}^{\prime}\in \supp(\sigma_{i}(h^{t-1}))$. Let player $i$'s expected reward in round $\tau>t$ under the continuation strategy $\sigma|_{h^{t-1}}$ following the action $a_{i}$ (resp., $a_{i}^{\prime}$) at $t$ be $g_{i}^{(\tau)}$ (resp., $g_{i}^{\prime(\tau)}$).
Since player $i$ mixes over $a_i$ and $a'_i$, they yield the same payoff for any $\delta \in \mathcal O$:
\[
g_{i}(a_{i},\sigma_{-i}(h^{t-1}))+\sum_{\tau>t}\delta_{i}^{\tau-t}g_{i}^{(\tau)}
=g_{i}(a_{i}^{\prime},\sigma_{-i}(h^{t-1}))
    +\sum_{\tau>t}\delta_{i}^{\tau-t}g_{i}^{\prime(\tau)}\,~\forall\delta_{i}\in\mathcal{O},
\]
and hence
\begin{equation}
f(\delta_{i})\,
:=g_{i}(a_{i},\sigma_{-i}(h^{t-1})) -  g_{i}(a_{i}^{\prime},\sigma_{-i}(h^{t-1}))
    +  \sum_{\tau>t}\delta_{i}^{\tau-t}(g_{i}^{(\tau)} - g_{i}^{\prime(\tau)})
       =0\,~\forall\delta_{i}\in\mathcal{O}.\label{eq:power-series-eq}
\end{equation}
The Identity/Uniqueness Theorem (see Ahlfors (1953), p.127) implies that if the set of zeros of an analytic function has an accumulation point in its domain, then it is identically zero; since (\ref{eq:power-series-eq}) holds for an open interval of $\delta_i$, it follows that $f$ is identically zero in $(-1,1)$; in particular, setting $\delta_i=0$ gives:
\[
g_{i}(a_{i},\sigma_{-i}(h^{t-1})) =  g_{i}(a_{i}^{\prime},\sigma_{-i}(h^{t-1})).
\]
Thus, both $a_i$ and $a^\prime_i$ yield the same  reward; hence, $\sigma(h^{t-1})\in \mathcal{A}^\mathrm{MI}$. This shows that myopically indifferent action profiles must be played after any history.
{} \end{proof}
The strength of subgame-perfection is not needed for the conclusion: if attention is restricted to histories on path, the same holds for Nash equilibria. 

Standard constructions in the literature rely on action profiles that are not in $\mathcal{A}^\mathrm{MI}$. More specifically, such action profiles enter in the definition of the minmax payoff, namely
\begin{equation}
\underline{v}_i
:=\;\min_{\alpha_{-i}\in \prod_{j \neq i}(\Delta A_{j})}
\max_{a_{i}\in A_i}\; g_{i}(a_{i},\alpha_{-i}).~~
%
\end{equation}
To hold a player to this payoff, the other players may have to randomize over actions over which they are not myopically indifferent. 
This is typically the case for SPNE in FM, which are therefore not Blackwell SPNE. Since any state-game Nash equilibrium satisfies myopic indifference, SPNE in Friedman (1971) are Blackwell. 

In view of Proposition \ref{prop:X},
we introduce the following notion of $\textrm{MI}$-minmax payoff:
\begin{equation}
\underline{v}_{i}^{\mathrm{MI}} :=
\min_{\alpha\in\mathcal{A}^\mathrm{MI}}\;
\max_{a_{i}\in A_{i}}g_{i}(a_{i},\alpha_{-i}).\label{ri}
\end{equation}
Every pure action profile is in $\mathcal{A}^\mathrm{MI}$. It follows that
\[
\underline{v}_{i}^{\mathrm{MI}} \le \underline{v}_{i}^{\mathrm{pure}}:=\min_{ a_{-i}\in A_{-i}}\max_{a_{i}\in A_i}\; g_{i}(a_{i},\alpha_{-i}).
\]
Similarly, every Nash equilibrium of the stage game is included in $\mathcal{A}^\mathrm{MI}$. Hence, it also holds that $\underline{v}_{i}^{\mathrm{MI}} \le \underline{v}_{i}^{\mathrm{NE}}$, where $\underline{v}_{i}^{\mathrm{NE}}$ is player $i$'s lowest stage-game Nash equilibrium payoff.

The following example shows that the inequalities $\underline{v}_i \le \underline{v}_{i}^{\mathrm{MI}}  \le \underline{v}_{i}^{\mathrm{pure}} $ can be strict.

\begin{example}
\label{Ex} Consider the  payoff matrix given by Figure 1.

\begin{table}[h]
\centering
\begin{game}{2}{3}[Player 1][Player 2]
        &  $L$                 & $M$  & $R$   \\
     $T$  &~$\vphantom{\big[}(1,0)~$      & $~(0,0)~$ & $~(0,3)~$ \\
     $B$ &~$\vphantom{\big[}(0,0)$~ & ~$(3,0)$ ~& ~$(1,1)~$\\
\end{game}

\vspace{3mm}
\caption{\hspace{3mm} Figure 1: The game in Example~\ref{Ex}.}
\end{table}

To minmax player 1, player 2 must play $(\frac12L+\frac12R)$. However, as
$R$ is a dominant action for player 2, he cannot be myopically indifferent between $L$ and $R$. The 
$\textrm{MI}$-minmax of player 1 obtains when player 2 plays $(\frac34L+\frac14M)$, which is not as harsh a punishment, but
still worse for player 1 than pure minmaxing via (say) $R$, which is also player 2's action under Nash reversion. Hence, it holds that $\underline{v}_1 =\frac12 < \underline{v}_{1}^{\mathrm{MI}}=\frac34  < \underline{v}_{1}^{\mathrm{pure}}=\underline{v}_{1}^{\mathrm{NE}}=1 $.
\end{example}

Proposition \ref{prop:X} immediately implies the following.
\begin{corollary}
Every Blackwell equilibrium payoff $v$ satisfies $v_{i}\geq \underline v_{i}^{\mathrm{MI}}$, for all $i\in I$.
\end{corollary}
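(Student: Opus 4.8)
The plan is to run the classical individual-rationality argument for repeated games, but with $\underline{v}_i^{\mathrm{MI}}$ in place of the usual minmax value, the substitution being licensed entirely by Proposition \ref{prop:X}. Fix a Blackwell SPNE $\sigma$, a discount vector $\bs\delta \ge \underline\delta\cdot(1,\ldots,1)$ at which $\sigma$ is an SPNE, and set $v = U(\sigma;\bs\delta)$; fix also a player $i$. The first step is to extract a uniform per-round guarantee. By Proposition \ref{prop:X}, $\sigma(h)\in\mathcal{A}^{\mathrm{MI}}$ at \emph{every} history $h\in H$, so in particular $\sigma_{-i}(h)$ is the opponents' component of a profile in $\mathcal{A}^{\mathrm{MI}}$; the definition \eqref{ri} of $\underline{v}_i^{\mathrm{MI}}$ then gives
\[
\max_{a_i\in A_i} g_i\bigl(a_i,\sigma_{-i}(h)\bigr)\ \ge\ \underline{v}_i^{\mathrm{MI}}\qquad\text{for every }h\in H.
\]

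Next I would exhibit a deviation for player $i$ that secures this bound: let $\sigma_i'$ prescribe a stage-game best response $\sigma_i'(h)\in\argmax_{a_i} g_i(a_i,\sigma_{-i}(h))$ after every history. Under $(\sigma_i',\sigma_{-i})$ every realized history still lies in $H$, so the displayed inequality applies round by round and player $i$'s expected stage reward is at least $\underline{v}_i^{\mathrm{MI}}$ in each round. Since the discount weights $(1-\delta_i)\delta_i^{t-1}$ sum to one, this yields $U_i(\sigma_i',\sigma_{-i};\bs\delta)\ge\underline{v}_i^{\mathrm{MI}}$. Because $\sigma$ is in particular a Nash equilibrium of $G^\infty(\bs\delta)$, we have $v_i=U_i(\sigma;\bs\delta)\ge U_i(\sigma_i',\sigma_{-i};\bs\delta)\ge\underline{v}_i^{\mathrm{MI}}$, and since $i$ was arbitrary the corollary follows.

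The one point that requires care --- and the reason the bound is $\underline{v}_i^{\mathrm{MI}}$ rather than the ordinary minmax $\underline{v}_i$ --- is that the per-round inequality must hold at the \emph{off-path} histories reached by player $i$'s deviation, not merely on the equilibrium path. This is exactly what subgame-perfection buys us: Proposition \ref{prop:X} was established for all $h\in H$, so the opponents are \emph{never} prescribed a profile outside $\mathcal{A}^{\mathrm{MI}}$, and hence cannot respond to a deviation by mixing over actions they are not myopically indifferent over --- which is precisely what ordinary minmaxing would require. The deviator can therefore never be held below $\underline{v}_i^{\mathrm{MI}}$. I do not anticipate any genuine obstacle beyond keeping this on-/off-path distinction straight; the discounting estimate and the appeal to Nash optimality are routine.
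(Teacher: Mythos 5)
Your argument is correct and is exactly the standard individual-rationality deviation argument that the paper has in mind when it states the corollary as an immediate consequence of Proposition \ref{prop:X}: since $\sigma(h)\in\mathcal{A}^{\mathrm{MI}}$ at every history (on- and off-path, by subgame-perfection), a myopic best response guarantees player $i$ at least $\underline{v}_i^{\mathrm{MI}}$ per round, and Nash optimality of $\sigma$ transfers this bound to $v_i$. Nothing is missing, and your emphasis on the off-path applicability of Proposition \ref{prop:X} is precisely the point that makes the bound $\underline{v}_i^{\mathrm{MI}}$ rather than $\underline{v}_i$.
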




\subsection{A ``Folk'' Theorem\label{subsec:folk-PM}}
Under discounting, and subject to a mild dimensionality condition, FM establish a folk theorem for subgame-perfect Nash equilibrium:
given any $v\in F$ such that for all $i$, $v_i>\underline v_i$, there
exists $\underline{\delta}\in [0,1)$ such that, for any $\delta \in (\underline{\delta},1)$ there is a subgame-perfect Nash equilibrium $\sigma$ of $G(\delta)$ with payoff $U(\sigma,\delta)=v$. Clearly, the same cannot hold under the Blackwell criterion, given Proposition \ref{prop:X}.

Define
\begin{equation}
F^{\mathrm{MI}}:=\{v\in F \mid v_{i}>\underline v_{i}^{\mathrm{MI}},
\ \ \ \forall i\in I\}.\label{FRES}
\end{equation}
Whenever $F^{\mathrm{MI}}$ is full-dimensional, Proposition \ref{prop:X} implies that (the closure of) $F^{\mathrm{MI}}$ is an upper bound
 on the set of Blackwell SPNE payoff vectors. The following
theorem shows that this upper bound is tight in general.\footnote{Note that the statement of Theorem 1 refers to the equilibrium payoff vector evaluated at a common discount rate (yet, the strategy profile must be optimal for all possibly distinct discount factors high enough). This is because, as is well known (see Lehrer and Pauzner, 1999), the set of feasible payoffs evaluated at different discount factors can be larger than the convex hull of the stage-game payoffs, a topic that is orthogonal to our purpose.}



\begin{thm}\label{thm:perfect} Suppose that the dimension of $ F^{\mathrm{MI}}$ is $n$.\footnote{Unlike in FM, the full dimensionality assumption in Theorem \ref{thm:perfect} cannot be dropped for the case $n=2$ in general. Presumably, the general case (without interiority assumption) can be dealt with by adapting the notion of effective minmax (Wen, 1994) to account for the constraint that action profiles must be in ${\mathcal A}^{\textrm{MI}}$, along the lines of Fudenberg, Levine, and Takahashi (2007). } 
For any  $v\in F^{\mathrm{MI}}$, there exists $\underline \delta <1$ such that for all $\delta \in (\underline \delta,1)$, $v$ is a Blackwell SPNE payoff at $\delta$.
\end{thm}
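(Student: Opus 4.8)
The plan is to mimic the subgame-perfect folk-theorem construction of FM via simple strategies (Abreu, 1988), making exactly the two changes dictated by Proposition \ref{prop:X}: replace the minmax profiles used to punish deviators with profiles in $\mathcal{A}^{\mathrm{MI}}$ that hold each player $j$ down to $\underline{v}_j^{\mathrm{MI}}$, and arrange every continuation so that it never conditions on which action a player draws from the support of a prescribed mixed action. All play on the equilibrium path and in the reward phases will use pure action profiles, which lie in $\mathcal{A}^{\mathrm{MI}}$ automatically and at which incentives are strict; the only randomization occurs in the $n$ punishment phases, and there every player is myopically indifferent over his support by definition of $\mathcal{A}^{\mathrm{MI}}$. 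The whole point is that myopic indifference, combined with continuations that are measurable only with respect to whether a player stayed in his prescribed support, makes the relevant indifferences hold simultaneously for every $\delta$, whereas all remaining incentive constraints will be strict and therefore robust to $\delta$ once players are patient enough.

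Concretely, fix $v \in F^{\mathrm{MI}}$ and a target discount factor $\delta$. First I would build the normal phase: by the dispensability of public randomization (Sorin, 1986; Fudenberg and Maskin, 1991), for $\delta$ close to $1$ there is a deterministic sequence of pure action profiles whose $\delta$-discounted average equals $v$ exactly, which delivers $U(\sigma;\delta)=v$. Next, using $\dim F^{\mathrm{MI}}=n$, I would select, for each player $j$, a reward target $w^j \in F^{\mathrm{MI}}$ with $w^j_i>v_i$ for all $i\neq j$ (so punishers are strictly compensated) while keeping $w^j_j$ bounded above $\underline{v}_j^{\mathrm{MI}}$; full dimensionality is exactly what makes these $n$ directions independently adjustable, and this is where the hypothesis is used (cf.\ the footnote on $n=2$). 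The punishment phase for $j$ then consists of $N$ rounds of a profile $\alpha^j\in\mathcal{A}^{\mathrm{MI}}$ attaining $\underline{v}_j^{\mathrm{MI}}$, followed by a pure-action sequence approximating $w^j$. Transitions follow Abreu's simple structure: any detectable unilateral deviation by player $k$ restarts phase $k$, and crucially the transition after a punishment round ignores which support action each punisher realized.

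It remains to verify sequential rationality for all high $\delta$. For deviations off the support of a prescribed mixed action, and for deviations from pure on-path or reward play, the one-shot gain is bounded while the loss from triggering one's own punishment is bounded away from zero for $\delta$ near $1$; choosing $N$ large and $\delta$ above a single threshold $\underline{\delta}$ makes all these constraints strict, so the same $\sigma$ is an SPNE at every $\delta'\ge\underline{\delta}$, not only at the target $\delta$. For indifferences within a support, the continuation payoff is independent of the realized support action, so each such action yields the same total payoff at \emph{every} $\delta'$, which is precisely what Proposition \ref{prop:X} requires. The step I expect to be the main obstacle is the incentive of the \emph{punished} player during his own punishment phase: because $\alpha^j$ need not make $j$'s support a set of global best responses, $j$ could gain myopically by jumping outside his support. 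I would handle this either by choosing the minimizer $\alpha^j$ so that $\supp(\alpha^j_j)\subseteq\argmax_{a_j}g_j(a_j,\alpha^j_{-j})$ --- so $j$ is globally best-responding, has no myopic gain, and the played profile stays in $\mathcal{A}^{\mathrm{MI}}$ --- or, failing that, by making $N$ large enough that the bounded one-shot gain is dominated by the delay in reaching the reward $w^j_j>\underline{v}_j^{\mathrm{MI}}$, again uniformly in $\delta'\ge\underline{\delta}$. Combining these, $\sigma$ is a Blackwell SPNE above $\underline{\delta}$ with $U(\sigma;\delta)=v$, which proves the claim.
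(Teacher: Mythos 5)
Your architecture matches the paper's: simple strategies \`a la Abreu, punishment via profiles attaining $\underline v_j^{\mathrm{MI}}$, continuations that ignore which support action a mixing player realizes, and strict incentives everywhere else. You also correctly isolate the punished player's incentive during his own punishment phase; note, however, that your fix (a) is not available in general --- forcing $j$ to best-respond changes the profile against which the \emph{other} players' indifference is computed, so $(\mathrm{BR}_j,\alpha^j_{-j})$ may leave $\mathcal{A}^{\mathrm{MI}}$ --- which is exactly why the paper does \emph{not} ask the punished player to best respond and instead relies on your fix (b) (its condition IC-II(i)). A second, repairable slip: requiring $w^j_i>v_i$ for all $i\neq j$ is impossible when $v$ maximizes a positive combination of the payoffs of players other than $j$ over $F$; what is actually needed is Abreu--Dutta--Smith payoff asymmetry ($w^j_i>w^i_i$ for $j\neq i$) together with $w^i_i<v_i$, which full dimensionality delivers.

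The genuine gap is the discount robustness of the \emph{fixed pure-action paths}. Your on-path and reward phases are deterministic sequences of action profiles chosen so that their $\delta$-discounted value is right at one particular $\delta$; at a higher $\delta'$ the continuation payoffs of those same fixed sequences are different numbers, and nothing in your argument prevents them from drifting below the punishment payoffs or outside the individually rational region, which would break the incentive constraints you call ``strict.'' Asserting that all constraints hold ``for $\delta$ above a single threshold'' presupposes uniform control of these continuation values over the whole interval of discount factors, and that control is the new technical content of the theorem. The paper supplies it with two devices you lack: a self-accessibility construction (Dasgupta--Ghosh) guaranteeing that \emph{every} continuation payoff of the on-path sequence stays within $\varepsilon$ of $v$ at some $\hat\delta$, combined with a Patience Lemma showing that if all tail-discounted averages of a fixed real sequence lie in $[\underline x,\bar x]$ at $\delta$ then they lie in the same interval at every $\delta'>\delta$; and reward cycles reordered so the rewarded player's flow payoff is increasing within each cycle, yielding bounds on continuation values that are uniform in $\delta'$. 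Without an analogue of the Patience Lemma (or an equivalent uniform bound), the step from ``SPNE at $\delta$'' to ``SPNE at all $\delta'\ge\underline\delta$'' does not go through for history-dependent pure paths.
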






The  proof,
which appears in Appendix~A,
 follows FM (see also Abreu, 1988) in having stick-and-carrot punishment regimes, one for each player. Any unilateral deviation from the prescribed strategies leads to a ``stick-and-carrot regime.'' In FM, the stick phase involves minmaxing, during which the player who deviated is held to his minmax payoff. Play then moves to the carrot phase, in which all players earn strictly more than their minmax payoff.

In the case in which the target payoff $v$ is achieved by a pure action profile, our construction is a straightforward adaptation of this construction. During the stick phase of player $i$, replace standard minmaxing with an action profile $\alpha^i\in\arg\min_{\alpha\in\mathcal{A}^\mathrm{MI}}\;
\max_{a_{i}\in A_{i}}g_{i}(a_{i},\alpha_{-i})$. For each player $j\neq i$, actions within the support $\alpha_j^i$ yield $j$ the same reward, and the selected one within it is subsequently ignored, whereas actions outside of the support are deterred as in FM.

What if $ v$ is not the payoff of a pure action profile? Lacking a PRD, we follow Dasgupta and Ghosh (2021) to construct action profile paths that deliver the target payoff while also keeping continuation payoffs near the target. We then show that if continuation payoffs given a pure action path and a certain discount remain bounded above and below, those same bounds apply at larger discount factors.\footnote{This is reminiscent of the Arrow-Levhari (1969) stopping theorem: if the value of a discardable security is weakly positive when evaluated at a certain discount (given optimal discarding), it is weakly positive at greater discounts. The same intuition applies here: any short-term setbacks and gains are smoothed out at higher discount factors. Therefore, continuation payoffs remain nearby at all higher discount factors, and hence deviations are deterred by the same punishments.} This ensures that continuation payoffs are similar enough to be enforced by the same punishments for all high enough discounts.

\section{Imperfect Public Monitoring}\label{sec:IM}

\subsection{Generic Games}\label{section:generic}

This section turns to  imperfect public monitoring, starting with the same finite set of players $I=\{1,2,...,n\}$ and finite sets of actions $A_i$, $i \in I$, with reward function $g_i:A \rightarrow \mathbb{R}$. 
A monitoring structure $(Y, \pi) $ is a finite set of signals $Y$ and a function $\pi\colon A\to \Delta Y$
mapping action profiles $a \in A$ into distributions over $Y$,
indicating the probability that each signal $y \in Y$ is publicly observed.
Let $ G= \langle I; A , g; Y,\pi \rangle$. Given discount factor vector $\bs \delta$, we denote the infinitely repeated game by $G^{\infty} (\bs \delta)$.

For each player $i$, a private history of length $t$, $h_i^t$, is a sequence $(a_i^{(1)},y^{(1)},\ldots, a_i^{(t)},y^{(t)})$ in $H_i^t:= (A_i\times Y)^t$, and the set of $i$'s private histories is $H_i$. A public history $h^t$ is a sequence $(y^{(1)},\ldots,  y^{(t)}) \in H^t:=Y^t$, with the set of all public histories denoted $H$. A behavior strategy $\sigma_i \in \Sigma_i$ maps private histories to $i$'s mixed actions: $\sigma_i:H_i\rightarrow \Delta A_i$. It is public if, it is measurable with respect to $H$. We adopt perfect public equilibrium (PPE) as our solution concept: a strategy profile $\sigma$ is a PPE  if, for all $i$, $\sigma_i$ is public, and for all public histories $h^t$, $\sigma|_{h^t}$ is a Nash equilibrium of the infinitely repeated game (under some payoff criterion).

Definition \ref{defi} is extended the obvious way. A strategy profile $\sigma \in \Sigma$ is a  \textbf{Blackwell PPE} (above $\underline{\delta}$) if there exists $\underline{\delta}\in[0,1)$ such that $\sigma$ is a PPE of $G^{\infty} (\bs \delta)$ at any $\bs\delta \geq\underline{\delta}\cdot \bs (1,\ldots,1)$. A vector $v\in  \mathbb{R}^n$ is a \textbf{Blackwell PPE payoff} at $\bs\delta$ if there exists a Blackwell PPE $\sigma$ above some $\underline{\delta}$, with
$\bs\delta\geq\underline{\delta}\cdot (1,\ldots,1)$, such that $v=U(\sigma;\bs\delta)$, where as before $U(\sigma;\bs\delta)$ is the equilibrium payoff vector under $\sigma$ given $\bs\delta$.

The next proposition shows that under imperfect public monitoring, the Blackwell criterion more severely restricts the set of action profiles that can be played in equilibrium. It assumes there is no PRD. Its proof, as well as the proofs of other results in this section,
appear in Appendix B.


\begin{proposition}\label{prop:generic}Fix $I$, $A$, and $Y$. For almost all $(g,\pi)$, given any Blackwell PPE $\sigma$, $\sigma(h^t)$ is either a pure action profile or a stage-game Nash equilibrium, for all $t$, $h^t \in H^t$.
\end{proposition}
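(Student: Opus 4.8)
The plan is to combine the analyticity method behind Proposition~\ref{prop:X} with a genericity (transversality) argument, reducing the statement to the claim that a mixed profile in $\mathcal A^{\mathrm{MI}}$ that is \emph{not} a stage-game Nash equilibrium cannot be enforced in a perfect public equilibrium simultaneously for an interval of discount factors.

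First I would extract the structural consequence of the Blackwell requirement. Fix a public history $h^t$, write $\alpha=\sigma(h^t)$, and let $w_i(y;\bs\delta)=U_i(\sigma|_{(h^t,y)};\bs\delta)$ be the fixed-strategy continuation value of $i$ after signal $y$. For a player $i$ mixing over $a_i,a_i'\in\supp(\alpha_i)$, sequential rationality of the PPE requires, for every $\bs\delta$ in the Blackwell interval $\mathcal O$,
\[
(1-\delta_i)\,[g_i(a_i,\alpha_{-i})-g_i(a_i',\alpha_{-i})]+\delta_i\sum_{y}\bigl[\pi(y\mid a_i,\alpha_{-i})-\pi(y\mid a_i',\alpha_{-i})\bigr]\,w_i(y;\bs\delta)=0 .
\]
Expanding $w_i(\cdot;\bs\delta)$ as a power series in $\delta_i$ and invoking the Identity Theorem exactly as in Proposition~\ref{prop:X}, the left-hand side is analytic and vanishes on $\mathcal O$, hence identically; matching coefficients yields not only myopic indifference ($\alpha\in\mathcal A^{\mathrm{MI}}$) but also the per-round orthogonality conditions $\sum_y[\pi(y\mid a_i,\alpha_{-i})-\pi(y\mid a_i',\alpha_{-i})]\,w_i^{(\tau)}(y)=0$ for every $\tau>t$, where $w_i^{(\tau)}(y)$ is $i$'s expected round-$\tau$ reward conditional on signal $y$. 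In words, the continuation never rewards or punishes the mixing player for his choice within his own support: $\alpha$ must be a Nash equilibrium of the augmented one-shot game in which the mixing players' support-indifference comes entirely from the stage payoffs.

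Next I would isolate where ``not a stage-game Nash equilibrium'' bites. If $\alpha$ lies in $\mathcal A^{\mathrm{MI}}$ but is not a stage-game Nash equilibrium, some player has a strictly profitable one-shot deviation to an action outside his support, which in a PPE can be deterred only by letting continuation play respond to the public signal. The heart of the argument is then a dimension count over the finitely many \emph{combinatorial types} (which players mix, over which supports, and which outside-support deviation binds). For each type, membership in $\mathcal A^{\mathrm{MI}}$ already imposes $\sum_i(|\supp\alpha_i|-1)$ indifference equations on $g$, and these exhaust the degrees of freedom supplied by the mixing probabilities --- the usual reason mixed equilibria are isolated. The Blackwell requirement then contributes the per-round orthogonality equations as genuinely new constraints tying $\pi$ to the continuation rewards; imposing in addition that the continuation deter the profitable outside-support deviation, I would argue that the resulting polynomial system in $(g,\pi)$ is over-determined, so its solution set has measure zero. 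Taking the finite union over all combinatorial types (the choice of $h^t$ being immaterial, since the conditions depend only on the finite data $(g,\pi)$ and on the support structure) yields the ``almost all $(g,\pi)$'' conclusion.

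The main obstacle is that, unlike the $|Y|=2$ case --- where orthogonality to a single nonzero signal difference forces the mixing player's continuation to be constant across signals, and hence forces a stage best response --- for general $Y$ orthogonality only confines the continuation to a hyperplane, and a single-node analysis admits continuation values that simultaneously keep the mixing player indifferent over his support \emph{and} punish an outside-support deviation. The contradiction therefore cannot be produced at one node in isolation: the argument must use the full force of the ``for all $\bs\delta$'' requirement, i.e.\ the infinite family of per-round orthogonality conditions, together with the fact that the continuation values are not free but are generated recursively by the same fixed Blackwell profile (so each continuation node again lies in $\mathcal A^{\mathrm{MI}}$ and again satisfies orthogonality). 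Making this rigorous, and organizing the transversality count so that the endogenous continuation payoffs do not spuriously restore the degrees of freedom consumed by the indifference equations, is the crux, and I expect it to be where the bulk of the work lies.
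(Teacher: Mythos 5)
You correctly derive the necessary conditions (the Identity Theorem gives not just myopic indifference but round-by-round orthogonality of the signal-distribution difference against the expected round-$\tau$ rewards), and you correctly diagnose that with $|Y|>2$ no contradiction can be extracted at a single node. But the step you flag as ``the crux'' is genuinely missing, and the route you sketch for it would not close. A transversality count over combinatorial types of supports and binding deviations cannot be organized in $(g,\pi)$ alone, because the objects entering your orthogonality conditions --- the continuation reward profiles $w_i^{(\tau)}(\cdot)$ --- are endogenous and a priori range over an infinite-dimensional set; your finite list of ``types'' does not enumerate them, so the polynomial system you hope is over-determined is not even a fixed finite system. Moreover, the deterrence of the outside-support deviation is an inequality, which contributes no codimension to such a count; the paper uses it only to guarantee the \emph{existence} of an earliest round $T^*$ at which prescribed play depends on the first-period signal.

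The missing idea is a two-stage genericity that quantizes the continuation play before perturbing $\pi$. First perturb $g$: generically the stage game has finitely many Nash equilibria on every support, so $\mathcal{A}^{\mathrm{MI}}_g$ is a \emph{finite} set with pairwise distinct payoffs. Since (by the same analyticity argument) every profile prescribed by a Blackwell PPE lies in $\mathcal{A}^{\mathrm{MI}}_g$, the $T$-period truncation of any such equilibrium is one of \emph{finitely many} maps from signal histories into $\mathcal{A}^{\mathrm{MI}}_g$, for each $T$. Then perturb $\pi$: for each $i$, each pair $a_i'\neq a_i''$, each $T$, and each such truncation whose round-$T$ play varies with the first-period signal, the round-$T$ orthogonality condition $\mathbb{E}[v_i^{T}\mid a_i']=\mathbb{E}[v_i^{T}\mid a_i'']$ is a polynomial equation in $\pi$ that is non-trivial precisely because the elements of $\mathcal{A}^{\mathrm{MI}}_g$ have distinct payoffs; its zero set has measure zero, and the union over the countably many $(T,\sigma^T)$ is still null. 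Applying this at $T^*$ to the actual equilibrium truncation yields the contradiction. Without the first stage, the family of equations you must exclude is uncountable and the measure-zero conclusion is unavailable; this is exactly the degree-of-freedom leak you worried about, and it is resolved by genericity in $g$, not by a joint dimension count.
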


That is, the set of reward functions and monitoring structures such that, in some Blackwell equilibrium, after some public history, players choose an action profile that is not pure or a stage-game Nash equilibrium, has measure zero. Other action profiles, even those satisfying myopic indifference, cannot arise. The selected action can no longer be simply ignored.  To understand why, suppose that, after some history, some player, say $i$, is playing a (nondegenerate) mixed action $\alpha_i$ satisfying myopic indifference, yet some player $j$ (perhaps $i$ himself) is not playing a best-reply to $\alpha_{-j}$. By definition, actions $a_i,a_i' \in \supp \alpha_i$ yield the same reward. However, they induce different distributions over public signals, in general. Since $j$ isn't playing a best-reply in the stage game, the continuation strategy profile must depend on the public signal. This typically affects player $i$'s continuation payoff, and hence, his preference between $a_i$ and $a'_i$ in the repeated game.

\begin{proof}
The proof is divided into three parts; we define a non-generic set of reward functions, then a non-generic set of signal distributions, and show that, for any $(g,\pi)$ outside of this set, a Blackwell PPE specifies a pure action or Nash profile at every history.

Generically, a finite game possesses finitely many Nash equilibria (Harsanyi, 1973a). Because $\mathcal{A}^{\textrm{MI}}$ is  the union of sets of Nash equilibria over finitely many games (defined by the possible subsets of actions), there exists a subset $\mathcal{G} \subset \mathbb{R}^{I\times A}$ of measure zero such that, for any $g\in \mathbb{R}^{I\times A} \setminus \mathcal{G}$, the set $\mathcal{A}^{\textrm{MI}}_g$ (the subscript referring to the  reward)  is finite and has distinct payoffs, i.e. $\alpha, \alpha^\prime \in \mathcal{A}^{\textrm{MI}}_g$ implies that for all $i$, $g_i(\alpha)\neq g_i(\alpha^\prime)$.

Fix $g \in \mathbb{R}^{I\times A} \setminus \mathcal{G}$, a period $T>1$, and a $T$-period strategy profile $\sigma^{T} \colon U_{t=0}^{T-1}H^{t} \to \mathcal{A}^{\textrm{MI}}_g$  in which period-$T$ play varies with the first-round signal, \textit{i.e.},
\begin{equation}
  \exists y_1,y_1',y_2,\ldots,y_{T-1} \in Y \textrm{ s.t. }
  \sigma^{T}(y_1,y_2,\ldots,y_{T-1}) 
  \neq \sigma^{T}(y_1',y_2,\ldots,y_{T-1}).
  \label{eqn:non-trivial-dep}
  \end{equation}
For each $i\in I$, $a_i',a_i''\in A_i$ with $a_i'\neq a_i''$ let
  \begin{equation}
\Pi_{g}^{i,T,a_i',a_i'',\sigma^{T}} :=
\left\{ \pi \in (\Delta Y)^A \ \bigg|\ \mathbb{E}[v_i^T  \mid  a_i',\sigma^T]
= \mathbb{E}[v_i^T \mid a_i'',\sigma^T] \right\},
  \label{eqn:non-trivial-dep2}
  \end{equation} 
where $v_i^T = g_i\circ\sigma^{T}\mid_{Y^{T-1}}$ is player $i$'s reward in round $T$. Both sides of the equality in \eqref{eqn:non-trivial-dep2} are polynomials of $\left\{\pi(y\mid a) \mid y \in Y\setminus \{y_0\},a\in A\right\}$, where $y_0$ is an arbitrary signal (we omit  $y_0 \in Y$ because  the probability distribution $\{\pi(y\mid a) \mid y \in Y \}$  adds up to one). Since $g$ features distinct rewards, player $i$'s last-round reward varies with the first-round signal, and hence the polynomials are distinct. As the set of zeros of a non-zero polynomial, the set  $\Pi_{g}^{i,T,a_i',a_i'',\sigma^{T}}$ is of measure zero (Caron and Traynor, 2005; Neeb, 2011). Define
$$
\Pi_g := \bigcup_{i,T,a_i',a_i'',\sigma^T}\Pi_{g}^{i,T,a_i',a_i'',\sigma^{T}},
$$
where $a'_i,a_i'' \in A_i$, $a_i' \neq a_i''$ and $\sigma^T$ varies with the first-round signal. $T$ runs over a countably infinite set, for each element of which $\sigma^T$ runs over a finite set because $ g \notin \mathcal{G}$ ensures that the range $\mathcal{A}^{\textrm{MI}}_g$ of $\sigma^T$ is finite. Therefore, $\Pi_g$ has measure zero.

Consider a game $G$ with $g \notin \mathcal{G}$. Let $\sigma$ be a Blackwell equilibrium that prescribes an action profile that is neither pure nor a stage-game Nash equilibrium after some public history; without loss of generality, after $h^0$.  By hypothesis, there is a player $i\in I$ who mixes over (at least) two  distinct actions $a'_i$, $a''_i$, and a player $j$ who does not use a stage-best-reply to $\sigma_{-i}(h^0)$. Then there exists an earliest round $T^* \in \mathbb{N}$ in which the action profile depends  on the first-round signal; otherwise, $j$ has a profitable deviation in the initial round.
Since $i$ mixes at the initial history, the  payoff conditional on playing $a_i'$ is equal to the payoff following $a_i''$, for all $\delta$ in an open interval $\mathcal{O}\subset (0,1)$; as in Proposition \ref{prop:X}, it implies that the payoff at each round $t$ is the same. In particular, applying this to $T^*$, and denoting by $\sigma^{T^*}$ the $T^*$-period truncation of $\sigma$, this
implies that $\pi \in \Pi_{g}^{i,T^*,a_i',a_i'',\sigma^{T^*}} \subset \Pi_g$.
%
%
\end{proof}


Again, an immediate implication of  Proposition \ref{prop:generic} is a lower bound on equilibrium payoffs. Recall that $\underline {v}_i^{\textrm{pure}}$ is $i$'s pure minmax, and $\underline{v}_i^{\textrm{NE}}$ is $i$'s worst Nash payoff.
\begin{corollary}\label{cor:mmax}
Fix $A,I$ and $Y$. For almost all $(g,\pi)$, every Blackwell equilibrium payoff $v$ satisfies
$v_i \geq \min\{\underline{v}_i^{\mathrm{pure}},\underline{v}_i^{\textrm{NE}}\},$ for all $i\in I$.
\end{corollary}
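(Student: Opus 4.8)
The plan is to combine Proposition~\ref{prop:generic} with a standard argument bounding the worst continuation payoff that a player can be held to in a perfect public equilibrium. Fix $(g,\pi)$ outside the measure-zero set identified in Proposition~\ref{prop:generic}, so that every Blackwell PPE $\sigma$ prescribes, after each public history, an action profile that is either pure or a stage-game Nash equilibrium. Fix such a $\sigma$, a player $i\in I$, and a discount vector $\bs\delta$ at which $\sigma$ is a PPE. For each public history $h$, let $w_i(h)$ denote player $i$'s continuation payoff under $\sigma|_h$ evaluated at $\delta_i$, and set $\underline{w}_i := \inf_h w_i(h)$, the infimum over all public histories (finite since stage payoffs are bounded).

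First I would establish, at every history $h$, the one-step inequality $w_i(h) \geq (1-\delta_i)\min\{\underline v_i^{\mathrm{pure}}, \underline v_i^{\mathrm{NE}}\} + \delta_i \underline w_i$ by splitting on the two possibilities for $\sigma(h)$. If $\sigma(h)$ is a stage-game Nash equilibrium, player $i$ is myopically best-responding, so his round reward is $g_i(\sigma(h)) \geq \underline v_i^{\mathrm{NE}}$; since every continuation payoff is at least $\underline w_i$, the recursion $w_i(h) = (1-\delta_i) g_i(\sigma(h)) + \delta_i \mathbb{E}[w_i(\cdot)\mid \sigma(h)]$ delivers the claim with $\underline v_i^{\mathrm{NE}}$.

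If instead $\sigma(h)$ is a pure profile $a$, the played action $a_i$ need not be a stage best-reply, so the on-path stage reward $g_i(a)$ gives no useful bound, and here I would invoke optimality rather than the stage payoff. Letting $a_i^* \in \argmax_{a_i'} g_i(a_i', a_{-i})$, optimality of $\sigma_i$ implies that the one-shot deviation to $a_i^*$ (reverting to $\sigma$ thereafter) is unprofitable, so $w_i(h) \geq (1-\delta_i) g_i(a_i^*, a_{-i}) + \delta_i \mathbb{E}[w_i(h,y)\mid a_i^*, a_{-i}]$. The stage term is at least $\underline v_i^{\mathrm{pure}}$ because $a_{-i}$ is pure, and the continuation term is at least $\underline w_i$ because, $\sigma_i$ being public, after the deviation play proceeds along $\sigma$ at the realized public signal and each continuation $\sigma|_{(h,y)}$ is itself a PPE. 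Combining the two cases gives the asserted one-step inequality.

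Finally I would take the infimum over histories: for each $\varepsilon>0$ pick $h$ with $w_i(h)\leq \underline w_i+\varepsilon$ and substitute into the one-step inequality to obtain $\underline w_i + \varepsilon \geq (1-\delta_i)\min\{\underline v_i^{\mathrm{pure}}, \underline v_i^{\mathrm{NE}}\} + \delta_i \underline w_i$; dividing by $1-\delta_i$ and letting $\varepsilon\to 0$ yields $\underline w_i \geq \min\{\underline v_i^{\mathrm{pure}}, \underline v_i^{\mathrm{NE}}\}$, whence $v_i = w_i(h^0) \geq \underline w_i$ gives the corollary. I expect the pure-profile case to be the only genuinely delicate step: the bound cannot come from the (possibly stage-suboptimal) on-path action, so one must route it through the best-response deviation and verify that the post-deviation continuation payoff remains controlled by $\underline w_i$—which rests precisely on publicness of strategies, so that a deviation is undetectable except through the public signal and continuation play stays within the PPE.
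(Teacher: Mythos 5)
Your proof is correct. The paper treats this corollary as an immediate consequence of Proposition~\ref{prop:generic} and supplies no argument of its own; your one-step recursion on the infimum of continuation payoffs --- splitting on whether $\sigma(h)$ is a stage-game Nash equilibrium (bounded by the on-path reward, which is at least $\underline{v}_i^{\mathrm{NE}}$) or a pure profile (bounded via the unprofitable one-shot best-reply deviation, whose stage term is at least $\underline{v}_i^{\mathrm{pure}}$ and whose continuation stays within the PPE by publicness) --- is a correct and standard way of making that implication explicit, and is essentially the argument the authors have in mind.
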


\subsection{A ``Folk'' Theorem}
Given Proposition \ref{prop:generic} and Corollary \ref{cor:mmax}, a folk theorem under the Blackwell criterion must involve a smaller payoff set, and stronger assumptions than those imposed by FLM. First, player $i$'s equilibrium payoff is bounded below by his pure minmax, or his lowest Nash equilibrium payoff, whichever is lower.

Second, in general, mixed actions can help detect deviations, or discriminate among them. Hence, the identifiability assumptions  must be strengthened.
\begin{defn}
The monitoring structure $(Y,\pi)$ satisfies \textbf{pairwise full rank} for a profile $\alpha$ if for all $i,j\in I$ with $i\neq j$, the matrix
$\Pi_{i,j}(a)$ whose rows are $\{\pi^\top(\cdot\mid a_i',\alpha_{-i})\mid a_i'\in A_i\} \cup \{\pi^\top(\cdot\mid a_j',\alpha_{-j})\mid a_j'\in A_j\}$ has rank
$|A_i| + |A_j| - 1$.
\end{defn}

Denote
\[
F^*:=\{v \in F \mid v_i \ge \min\{\underline{v}_i^{\mathrm{pure}} , \underline{v}_i^{\textrm{NE}}\}\  \forall i\in I\}.
\]





\begin{thm}
\label{thm:NoPRD} Fix $I,A$ and $Y$. Suppose that $\pi$ satisfies pairwise full rank for all pure action profiles. Suppose also that there exists $a^* \in A$, $Y^*\subset Y$ such that  $\pi(Y^{*} \mid a_{-i}^{*},a_{i})<\pi(Y^{*}\mid a^{*})<1$, $\forall i\in I,~a_{i}\not =a^\ast_i$.
Then for any $ v\in \interior  F^*$, there exists  $\underline \delta <1$ such that for all $\delta \in (\underline \delta,1)$, $v$ is a Blackwell PPE payoff at $\delta$.
\end{thm}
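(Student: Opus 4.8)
The plan is to give a direct, constructive proof in the spirit of FLM and of Theorem~\ref{thm:perfect}, rather than to argue via self-generation: because the APS operator itself depends on $\delta$, it is ill-suited to a criterion that demands optimality across a whole range of discount factors. Fix $v\in\interior F^*$ and a target discount factor $\delta$. I would build a strategy profile $\sigma=\sigma_\delta$, using \emph{only} pure action profiles, such that $U(\sigma;\delta)=v$ and $\sigma$ remains a PPE at every $\bs\delta\ge\delta\cdot(1,\ldots,1)$; setting $\underline\delta:=\delta$ then exhibits $\sigma$ as a Blackwell PPE above $\underline\delta$ that attains $v$ at $\delta$, which is what the statement requires. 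The conceptual payoff of restricting to pure profiles, forced on us generically by Proposition~\ref{prop:generic}, is that every incentive constraint becomes a strict inequality rather than an indifference, and strict incentives are exactly what survive perturbations of, and limits in, the discount factor.

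For the on-path play I would follow Dasgupta and Ghosh (2021): choose a deterministic sequence of pure action profiles whose $\delta$-discounted value equals $v$ and whose continuation (tail) values all lie in a prescribed small neighborhood $N$ of $v$ with $N\subset\interior F^*$. To deter deviations along this sequence under imperfect monitoring, I attach to each on-path profile $a^{(t)}$ a signal-contingent continuation value $w^{(t)}(\cdot)\colon Y\to\mathbb{R}^n$ of FLM form: a common level chosen so that promise-keeping holds, i.e.\ $\mathbb{E}[w^{(t)}(y)\mid a^{(t)}]$ stays on target, plus player-specific, mean-zero transfers that strictly lower a deviator's expected continuation. Pairwise full rank at every pure profile is exactly what guarantees such transfers exist, can be taken in the hyperplane orthogonal to the direction being enforced (so they leave the on-path value and each other undisturbed), and can be fixed at a size independent of $\delta$, yielding each player a strictly positive continuation-value loss from any detectable deviation while keeping every $w^{(t)}(y)$ inside $N$.

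Detectable deviations trigger a stick-and-carrot regime as in Theorem~\ref{thm:perfect}: the deviator $i$ is held near $\min\{\underline v_i^{\mathrm{pure}},\underline v_i^{\mathrm{NE}}\}$ by pure minmaxing or reversion to the worst pure stage-game Nash equilibrium, followed by a carrot keeping all players strictly above their bounds. Since the regime uses only pure profiles, its own incentive constraints are strict and hence robustly enforceable. The hypothesis on $(a^*,Y^*)$ supplies the detectability that substitutes for a public randomization device: because every unilateral deviation from $a^*$ strictly lowers $\pi(Y^*\mid\cdot)$ while $\pi(Y^*\mid a^*)\in(0,1)$, a phase of $a^*$ with continuation play conditioned on whether $Y^*$ realized functions as an incentive-compatible jointly controlled lottery (no player wishes to manipulate the draw), which lets me convexify toward interior targets and coordinate transitions between phases using public signals and pure actions alone.

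The crux, and the step I expect to be the main obstacle, is verifying that the single profile $\sigma_\delta$ is a PPE not merely at $\delta$ but at all $\bs\delta\ge\delta\cdot(1,\ldots,1)$. Two facts combine. First, each incentive constraint has the form (continuation-value gap) $\ge \frac{1-\delta_i}{\delta_i}\cdot$(bounded myopic gain); since the transfers fix the gaps at a positive level independent of $\delta$ while the right-hand side decreases in $\delta_i$, any constraint satisfied at $\delta$ persists at all larger $\delta_i$. Second, I would invoke the Arrow--Levhari-type monotonicity underlying Theorem~\ref{thm:perfect}: along a pure path, continuation payoffs bounded within $N\subset\interior F^*$ at $\delta$ remain within such bounds at every higher discount factor, as short-run fluctuations are smoothed out as $\delta_i\uparrow1$, so promise-keeping and the requirement that all continuation values stay in the target set continue to hold. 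The difficulty that must be resolved is precisely the tension between the exact promise-keeping constraint, which pins the \emph{level} of $w^{(t)}$ to $\delta$, and incentive compatibility across an entire interval of discount factors; this is handled by decoupling the ($\delta$-dependent) level from the (fixed, small) transfers and by the monotone comovement of continuation payoffs with the discount factor, which together make the construction robust upward in $\bs\delta$ without any appeal to a discount-dependent self-generation argument.
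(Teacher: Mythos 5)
There is a genuine gap, and it sits exactly where you say you expect the main obstacle to be: the passage from incentive compatibility at $\delta$ to incentive compatibility at all $\bs\delta\ge\delta\cdot(1,\ldots,1)$. Your argument rests on two claims: (i) that the signal-contingent transfers ``fix the gaps at a positive level independent of $\delta$,'' so that a constraint of the form $\text{gap}\ge\frac{1-\delta_i}{\delta_i}\cdot(\text{myopic gain})$ persists upward in $\delta_i$; and (ii) that the Arrow--Levhari/Patience Lemma keeps continuation values inside $N$ at higher discount factors. Neither survives under imperfect public monitoring. For a \emph{fixed} strategy profile, the continuation value after signal $y$ is not a free parameter: it is the expected discounted payoff of the (signal-contingent) continuation strategy, evaluated at whatever $\delta_i$ the player actually has. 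The gap between continuation values after different signals is therefore an endogenous function of $\delta_i$, pinned down by how the continuation strategies differ, and in an FLM-style construction where all continuation values are held inside a small target set, that gap is of order $(1-\delta)$ and is calibrated to the particular $\delta$ used in the promise-keeping identity $w(y)=v/\delta-\frac{1-\delta}{\delta}x(y)$; at a different $\delta'$ the same strategy delivers different continuation values and there is no reason the recalibrated gap still dominates the myopic gain. The Patience Lemma cannot rescue this, because it applies to a \emph{deterministic} sequence of reals: once continuation play conditions on public signals (which it must, since your on-path profiles are not stage-game Nash), the induced path of play is a stochastic process whose law depends on the actions taken, and the lemma's monotonicity statement simply does not apply. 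This is precisely why the paper cannot run the perfect-monitoring argument here and says so explicitly.

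What the strict-incentive slack actually buys is only \emph{local} robustness: as in the paper's Lemma~\ref{lem:double-robust-reg}, a slack of $(1-\delta)\eta$ combined with the bound $\bigl|\frac{d}{d\delta_0}U_i(h^\infty,\delta_0)\bigr|\le\frac{2M}{1-\delta_0}$ protects incentives only on an interval around $\delta$ of length roughly $\eta(1-\delta)^2/(4M)$, not on all of $[\delta,1)$. The missing idea is the mechanism that converts this local robustness into global robustness: the paper uses the pair $(a^*,Y^*)$ not as a convexification device but to build binomial ``tests'' whose pass/fail outcome triggers a probabilistic \emph{reset} of the public history, so that a player with true discount factor $\delta^*$ faces effective discounting $\delta^{*(N+T+1)}(1-p^{\pm})$, which is engineered to land back inside the small interval $(\underline\delta,\overline\delta)$ on which the doubly robust region lives, uniformly over all $\delta^*\ge\delta_{\min}$ (this is the Reboot Lemma, adapted to an endogenous randomization device, together with the cycle/interleaving construction that keeps test rounds rare). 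Without a reboot-type step your construction establishes at best that $\sigma_\delta$ is a PPE on a small neighborhood of $\delta$, which is not a Blackwell PPE.
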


 The assumption that a pair $(a^*,Y^*)$ as stated in the theorem exists is technical; while it is relatively mild, it is needed in the proof, and we do not know whether some version of it is necessary for the result. It allows the players to emulate a PRD (and could be dispensed if a PRD was assumed).%
 \footnote{That is, the event $\{y \mid y \in Y^{*}\}$ is a public, binary random variable, whose likelihood is maximized if players use $a^*$. Hence, it suffices to make its occurrence desirable to ensure that players are willing to generate that signal.}
\medskip

Proving the theorem involves several steps. It is more instructive
to explain some of them in the special case in which monitoring takes a product structure, and assuming a
PRD. Hence, the proof of Theorem \ref{thm:NoPRD}
should be read after
the proof of Theorem \ref{thm:product} and the discussion at the end of this section.

\subsection{Public Randomization\label{subsec:Prod}}
If we now assume a PRD (a uniform draw from the unit interval), in the special case in which monitoring has a product structure, some non-Nash myopically indifferent mixed action profiles may be used. Hence, the Blackwell payoff set may exceed that of Theorem \ref{thm:NoPRD}. We say that $(Y,\pi)$ has a \textbf{product structure} if
$$Y = \prod_i Y_i\mbox{ and }
\pi(y\mid a) = \prod_i \pi_i(y_i\mid a_i),$$
 where $\pi_i(\cdot \mid a)$ is the marginal distribution of $\pi(\cdot\mid a)$ on $Y_i$.

The relevant lower bound on player $i$'s equilibrium payoff is given by the solution to the following program.


\begin{equation*}
(P_i^{\mathrm{MI},\pi_i}): \hspace{1cm}
\min_{\alpha \in \mathcal A^{\mathrm{MI}},  x_i\colon Y_i \to \mathbb{R}}
\left\{ g_i(\alpha) + \sum_{y_i\in Y_i}\pi_i(y_i\mid \alpha_i)x_i(y_i) \right\}
\label{v_prd_ps}
\end{equation*}
subject to
\begin{align}
g_i(\alpha) + \sum_{y_i\in Y_i}\pi_i(y_i\mid \alpha_i)x_i(y_i)
&\geq g_i(a_i,\alpha_{-i}) + \sum_{y_i\in Y_i}\pi_i(y_i\mid a_i)x_i(y_i) &&\forall a_i\in A_i,\\
x_i(y_i)&\geq 0 &&\forall y_i\in Y_i. \label{x_geq_0}
\end{align}
Let $\underline v_{i}^{\mathrm{MI},\pi_i}$ denote the minimum. We note that, since it is feasible to pick  a stage-game Nash equilibrium for $\alpha$, and to set $x_i(\cdot)=0$, $\underline v_{i}^{\mathrm{MI},\pi_i} \le \underline v_{i}^{\mathrm{NE}}$. Also, since $\mathcal A^{\mathrm{MI}}$ can be strictly larger than $A$, it is easy to find games such that $\underline v_{i}^{\mathrm{MI},\pi_i} <  \min\{\underline{v}_i^{\mathrm{pure}} , \underline{v}_i^{\textrm{NE}}\}$.

This program is nothing but the ``scoring algorithm'' (in the direction that minimizes $i$'s payoff) introduced by Fudenberg and Levine (1994), with the restriction that players $-i$ are constrained to choose from $\mathcal A^{\mathrm{MI}}$. Indeed, this constraint must be satisfied in a Blackwell equilibrium (given that this is already the case under perfect monitoring, see Proposition \ref{prop:X}, this should come as no surprise). It immediately follows that $\underline v_{i}^{\mathrm{MI},\pi_i}$ is a lower bound on $i$'s equilibrium payoff.

For this bound to be tight, a rank assumption is needed, for which we follow FLM.
\begin{defn}
A profile $\alpha$ satisfies \textbf{individual full rank} (IFR) if for all $i$ the vectors $\{\pi(a_i,\alpha_{-i}) \mid a_i\in A_i\}$  are linearly independent.
\end{defn}
Let
$$F^{\mathrm{MI}, \pi}
:= \{v\in \co (u(A)) \mid v_i \geq \underline{v}_i^{\mathrm{MI},\pi_i} \, \forall i\in I\}.$$
 The characterization is the following.
\begin{thm}\label{thm:product}
Assume a PRD. Suppose $(Y,\pi)$ has a product structure. Every Blackwell PPE payoff vector $v$ satisfies $v_i \geq \underline v_i^{\mathrm{MI},\pi_i}$ for all $i\in I$.\\
Conversely, if $\pi_i$ satisfies IFR for all $i$, then for any  $v\in \interior F^{\mathrm{MI}, \pi}$, there exists $\underline \delta<1$ such that for all
$\delta \in (\underline \delta,1)$, $v$ is a Blackwell PPE payoff vector at $\delta$.
\end{thm}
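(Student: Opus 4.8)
The plan is to prove the two assertions separately: the lower bound uses only the product structure, while the converse folk theorem additionally invokes individual full rank and the public randomization device (PRD).

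For the lower bound I would first observe that the analytic argument behind Proposition \ref{prop:X} applies verbatim to public histories, so along any Blackwell PPE $\sigma$ the prescribed profile at every realized public history lies in $\mathcal A^{\mathrm{MI}}$. Fix $\delta$ and let $\underline w_i$ be the infimum of player $i$'s continuation values over all public histories of $\sigma$. Select a history whose continuation value $w_i$ is within $\varepsilon$ of $\underline w_i$, with profile $\alpha\in\mathcal A^{\mathrm{MI}}$ and continuation values $w_i(y)$. Because monitoring has a product structure, a unilateral deviation by $i$ perturbs only the marginal $\pi_i(\cdot\mid a_i)$, so I would integrate out $y_{-i}$, set $W_i(y_i):=\mathbb E_{y_{-i}}[w_i(y_i,y_{-i})\mid\alpha_{-i}]\ge\underline w_i$, and define own-signal scores $x_i(y_i):=\tfrac{\delta}{1-\delta}\bigl(W_i(y_i)-\underline w_i\bigr)\ge 0$. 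Promise-keeping then reads $w_i-\delta\underline w_i=(1-\delta)\bigl[g_i(\alpha)+\sum_{y_i}\pi_i(y_i\mid\alpha_i)x_i(y_i)\bigr]$, and each one-shot constraint of $i$ becomes $w_i-\delta\underline w_i\ge(1-\delta)\bigl[g_i(a_i,\alpha_{-i})+\sum_{y_i}\pi_i(y_i\mid a_i)x_i(y_i)\bigr]$. Dividing by $1-\delta$ and letting $\varepsilon\to 0$ shows that $(\alpha,x_i)$ is feasible in the limit for $(P_i^{\mathrm{MI},\pi_i})$ with objective value $\underline w_i$; hence $v_i\ge\underline w_i\ge\underline v_i^{\mathrm{MI},\pi_i}$.

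For the converse I would exploit that myopic indifference and IFR together make incentives $\delta$-independent after normalization. Given $\alpha\in\mathcal A^{\mathrm{MI}}$, the MI property makes $g_i(a_i,\alpha_{-i})$ constant across $a_i\in\supp\alpha_i$, while IFR makes the map $x_i\mapsto\bigl(\sum_{y_i}\pi_i(y_i\mid a_i)x_i(y_i)\bigr)_{a_i\in A_i}$ onto $\mathbb R^{A_i}$. Hence I can choose own-signal transfers $x_i\colon Y_i\to\mathbb R$ that hold $i$ exactly indifferent over $\supp\alpha_i$, strictly deter each $a_i\notin\supp\alpha_i$, and deliver any prescribed expected value $\sum_{y_i}\pi_i(y_i\mid\alpha_i)x_i(y_i)$, all as conditions that never mention $\delta$. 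Taking continuation payoffs $w(y)=w+\tfrac{1-\delta}{\delta}z(y)$, with $z$ the $\delta$-independent transfer vector and $z_i$ depending only on $y_i$ (legitimate under product structure, which decouples incentive provision across players), promise-keeping forces $w_i=g_i(\alpha)+\mathbb E[z_i\mid\alpha]$, the incentive inequalities reduce to the $\delta$-free conditions above, and the perturbation $\tfrac{1-\delta}{\delta}z$ vanishes as $\delta\to1$.

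To build the equilibrium I would take a smooth, convex, full-dimensional set $\mathcal W$ with $v\in\interior\mathcal W\subseteq\interior F^{\mathrm{MI},\pi}$ and show it is self-generating for all high $\delta$ at once, following FLM but with the scoring program restricted to $\mathcal A^{\mathrm{MI}}$; the PRD is used to convexify, i.e. to mix over the finitely many relevant profiles (whose payoffs are distinct) and to decompose in every boundary direction. For a boundary point $w$ with outer normal $\lambda$, the $\lambda$-score over $\mathcal A^{\mathrm{MI}}$, together with the transfers above, keeps $w(y)$ on the inner side of the tangent hyperplane, and smoothness of $\mathcal W$ then places $w(y)\in\mathcal W$ once $\delta$ is past a common threshold. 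The main obstacle — where I would concentrate the work — is making this self-generation uniform in $\delta$: the continuation payoffs $w(y)$ move with the discount factor, so I must show the region they occupy stays inside $\mathcal W$ throughout $(\underline\delta,1)$, not merely at one discount. Here I would lean on the fact that the normalized incentive constraints are $\delta$-independent, so a profile decomposing $w$ at one discount continues to decompose a nearby payoff at every larger one with the same actions and transfers; an Arrow--Levhari-type smoothing, as invoked for Theorem \ref{thm:perfect}, then bounds continuation payoffs uniformly, delivering a strategy profile that is a PPE for all $\delta\in(\underline\delta,1)$ and attains $v$ at the prescribed $\delta$.
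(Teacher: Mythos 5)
Your necessity argument is sound and is essentially the paper's (Proposition \ref{prop:X} applied at public histories, plus the observation that the worst continuation value must be decomposable, under the product structure, by some feasible pair $(\alpha,x_i)$ of the program $P_i^{\mathrm{MI},\pi_i}$); the only detail to add is a compactness/limit-point step as $\varepsilon\to 0$, which is routine since $\mathcal A^{\mathrm{MI}}$ is compact and the implied $x_i$ are bounded for fixed $\delta$.

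The sufficiency half has two genuine gaps. First, when a nondegenerate $\alpha_i$ is played, it is not enough that the \emph{expected} continuation value $\mathbb E[z_i\mid a_i]$ be constant on $\supp\alpha_i$: distinct $a_i,a_i'$ in the support generically induce different distributions over $y_i$, hence over which continuation strategy is triggered, hence over future action paths. Equality of expected continuation values is an identity at the decomposition discount factor only; at any nearby $\delta'$ the two lotteries over continuation play have different values and indifference breaks. This is exactly why the paper imposes condition \eqref{eq:wdist} and implements it by ``garbling'' each $x_i$ through the PRD into a two-point random variable whose success probability is constant on $\supp\alpha_i$, so that the \emph{distribution} over continuation payoff vectors -- and therefore over future play -- is literally identical across support actions. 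Your construction omits this and would not survive perturbations of $\delta$. Second, your plan to make a single FLM-style profile a PPE on all of $(\underline\delta,1)$ cannot work as stated: the strict-incentive slack available is of order $(1-\delta)\eta$ in normalized payoffs, while moving the discount factor by $\ell$ changes a strategy's normalized payoff by up to order $\ell/(1-\delta)$, so the interval of discount factors on which a fixed profile remains incentive compatible shrinks like $(1-\delta)^2$ and cannot cover a neighborhood of $1$. The Patience Lemma does not rescue this, as it bounds continuation values of a \emph{fixed pure action sequence}, not the knife-edge indifference conditions of a history-dependent stochastic process. The paper's resolution is the Reboot Lemma (Lemma \ref{lemma:reboot}): build a profile that is a PPE only on a fixed interval $(\delta_0-\Delta,\delta_0+\Delta)$ bounded away from $1$ (Lemmata \ref{lem:ballSS} and \ref{lem:RobustEq}), then use the PRD to restart the game with probability $p$ each round, which maps every actual discount factor $\delta$ near $1$ to an effective discount factor $\delta(1-p)$ inside that interval. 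Without a device of this kind your uniform self-generation step is not merely incomplete but structurally blocked.
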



The proof of the necessity part matches the proof of Proposition \ref{prop:X}. The sufficiency part of the theorem has a two-step proof:
\begin{itemize}
 \item[1.] We show that at \textit{some} $\delta_0<1$ there is a \textit{robust equilibrium}, \textit{i.e.}, a strategy profile that is a PPE in a neighborhood of $\delta_0$ with payoff vector $v$ at $\delta_0$.
 \item[2.] We use the PRD to periodically restart (or ``reboot'') the game, discarding the history up to that point, which allows us  to lower the discount factor at which incentive compatibility must be checked.\footnote{This is somewhat in the spirit of what is known as ``Ellison's trick'' (Ellison, 1994).}
 That is, if $\sigma$ is a PPE at some $\delta_0 \in (0,1)$, we can construct a related  equilibrium at higher discount factors by rebooting appropriately to reduce the effective discount factor to $\delta_0$.
\end{itemize}


\subsubsection{Rebooting}

We begin with the second step, {\textit {rebooting}}. Fix $\sigma$ and $ p \in (0,1)$. Let $\sigma^p$ denote a strategy profile that follows $\sigma$ but reboots the game with probability $p$ at the end of each round, independently across rounds; \textit{i.e.}, if at the end of some round the value of the PRD  is less than $p$, we discard the history and restart playing $\sigma$. More precisely, let $\omega_1,\omega_2,\ldots$ be the sequence of PRD draws; then, given $(y_1,\omega_1,\ldots,y_t,\omega_t)$, we let
$$
\sigma^p(y_1,\omega_1,y_2,\omega_2,\ldots,y_t,\omega_t) := \sigma(y_{\tau+1},\ldots,y_t),
\mbox{ where } \tau := \max\{s \leq t \mid \omega_s \leq p\},$$
with the convention that $\max \emptyset = 0$.

If a player has a discount factor $\delta$, the payoff stream from $\sigma^p$ is evaluated at discount $\delta$; however, a simple calculation shows that a player's incentive to deviate from $\sigma^p$ at discount $\delta$ is the same as his incentive to deviate from $\sigma$ at discount $\delta (1-p)$.


This definition and the preceding discussion imply the following lemma, which essentially reduces a global robustness problem to a  local one.
\begin{lemma}[Reboot Lemma] \label{lemma:reboot}
If $\sigma$ is a PPE for all discount factors in some interval $(\delta_0 - \Delta, \delta_0 + \Delta) \subset [0,1)$, for some $\Delta>0$, then $\sigma^p$ is a Blackwell PPE above $\frac{\delta_0 - \Delta}{1-p}$ for $p \in (0,1)$ such that $\frac{\delta_0 - \Delta}{1-p} < 1 < \frac{\delta_0 + \Delta}{1-p}$, and $$ U(\sigma^p, \delta_0/(1-p)) = U(\sigma, \delta_0).$$
\end{lemma}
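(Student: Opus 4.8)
The plan is to prove the Reboot Lemma by reducing the behavior of $\sigma^p$ at an arbitrary discount factor $\delta$ to the behavior of the original profile $\sigma$ at the \emph{effective} discount factor $\delta(1-p)$, exactly as flagged in the text preceding the statement. First I would set up the payoff bookkeeping. Fix a player with discount factor $\delta$ evaluating the stream generated by $\sigma^p$. Conditioning on the next reboot time is the natural device: at the end of each round the game restarts independently with probability $p$, so the ``surviving'' weight placed on a round $k$ rounds into the future (before the next reboot) carries a geometric factor $(1-p)^{k}$ on top of the discount factor $\delta^{k}$. I would make this precise by computing, for a one-shot gain accruing in the current round against a continuation that is itself distributed according to the reboot process, that the marginal rate of substitution between today's payoff and the continuation equals that of $\sigma$ discounted at $\delta(1-p)$. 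The cleanest route is to show directly that the map ``payoff from $\sigma^p$ at discount $\delta$'' and ``payoff from $\sigma$ at discount $\delta(1-p)$'' agree up to the renormalization constants $(1-\delta)$ versus $(1-\delta(1-p))$, so that a deviation is profitable under one if and only if it is profitable under the other.

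Next I would translate incentive compatibility. Because $\sigma^p$ restarts $\sigma$ from scratch at each reboot, every public history of $\sigma^p$ decomposes as (history since the last reboot), and the continuation profile following that history is exactly $\sigma$ applied to the post-reboot public history. A one-shot deviation by any player at a given history of $\sigma^p$ therefore induces the same comparison of current reward versus continuation value as a one-shot deviation from $\sigma$ at the corresponding post-reboot history, but with $\delta$ replaced by the effective discount $\delta(1-p)$. Invoking the one-shot deviation principle, $\sigma^p$ is a PPE at $\delta$ precisely when $\sigma$ is a PPE at $\delta(1-p)$. Since $\sigma$ is assumed to be a PPE for all discount factors in $(\delta_0-\Delta,\delta_0+\Delta)$, it follows that $\sigma^p$ is a PPE at every $\delta$ for which $\delta(1-p)\in(\delta_0-\Delta,\delta_0+\Delta)$, i.e. for every $\delta\in\bigl(\tfrac{\delta_0-\Delta}{1-p},\tfrac{\delta_0+\Delta}{1-p}\bigr)$.

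I would then verify the quantitative claims in the statement. The chosen $p\in(0,1)$ satisfies $\tfrac{\delta_0-\Delta}{1-p}<1<\tfrac{\delta_0+\Delta}{1-p}$, so the interval of admissible $\delta$ contains a left-neighborhood of $1$ (indeed all of $(\max\{0,\tfrac{\delta_0-\Delta}{1-p}\},1)$), which is exactly what is required for $\sigma^p$ to be a Blackwell PPE above $\tfrac{\delta_0-\Delta}{1-p}$. For the payoff identity, I would evaluate $\sigma^p$ at $\delta=\delta_0/(1-p)$, so that the effective discount is $\delta(1-p)=\delta_0$; since the expected discounted payoff under $\sigma^p$ at $\delta$ is a renormalization of the $\sigma$-payoff at $\delta(1-p)$ and both are normalized to be average (the $(1-\cdot)$ prefactors cancel the respective geometric sums), one obtains $U(\sigma^p,\delta_0/(1-p))=U(\sigma,\delta_0)$.

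The main obstacle is the first paragraph: getting the reboot-vs-discount algebra exactly right, including the normalization constants, so that ``incentives under $\sigma^p$ at $\delta$'' and ``incentives under $\sigma$ at $\delta(1-p)$'' coincide rather than merely being ordinally related. The subtlety is that a reboot discards history but the \emph{player still discounts at} $\delta$ across the reboot, so one must confirm that the post-reboot continuation of $\sigma$ enters the deviation comparison with the correct compounded weight $\delta(1-p)$ and that no asymmetry arises between the deviation round and later rounds. Once this identity is established cleanly (the cited ``simple calculation''), the one-shot deviation principle and the interval arithmetic are routine.
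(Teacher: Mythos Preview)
Your proposal is correct and follows exactly the route the paper intends: the paper does not give a separate proof of the Reboot Lemma but simply states that it follows from the definition of $\sigma^p$ together with the ``simple calculation'' that a player's deviation incentive from $\sigma^p$ at discount $\delta$ coincides (up to the positive scalar $(1-\delta)/(1-\delta(1-p))$) with the deviation incentive from $\sigma$ at discount $\delta(1-p)$. Your write-up spells out precisely this calculation, invokes the one-shot deviation principle, and then reads off the interval and payoff identities, which is all that is needed.
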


\subsubsection{Constructing a Robust Equilibrium}

Given the Reboot Lemma, existence of the desired Blackwell equilibrium follows from the construction of a robust equilibrium. Existence of a robust equilibrium will be demonstrated by adapting arguments from Abreu, Pearce, and Stacchetti (1990) (henceforth APS 1990). Robustness requires incentives to hold for a range of discount factors; this motivates a stronger notion of self-generation than proposed in APS 1990. We provide some intuition for the two ways in which our definition needs to be stronger. The crux is that varying the discount factor, however slightly, could affect incentives.

If the current action profile is pure, changing the discount factor may reverse weak incentives to not deviate. This can be tackled by giving strict incentives (a slack of at least $(1-\delta)\eta$ in the definition below) not to deviate.

The problem is subtler when mixed actions are needed, such as when delivering $\underline v_{i}^{\mathrm{MI},\pi_i}$ to player $i$.
Suppose  $\alpha, x$ solve Program $P_i^{\mathrm{MI},\pi_i}$.
Since  $\alpha \in \mathcal A^{\mathrm{MI}}$, the current payoff $g_i(\alpha)$ is constant on $\supp(\alpha_i)$.
Distinct actions in the support of $\alpha_i$ could generate different distributions over various continuation payoff vectors $w$, but with the same expected value for the $i$-th component $w_i$.
But even if these actions induce the same distribution over $w_i$ at $\delta$, if they induce different distributions over continuation payoff vectors, they could induce different distributions over action paths. These could give different expected continuation payoffs for $i$ at discount factors even slightly away from $\delta$. To circumvent this, we need to ensure that any two such actions induce the same distribution over continuation payoff vectors, and hence the same probability distribution over action paths.

To this end we use the randomization device - a uniform draw (denoted $\nu$) from $[0,1]^n$ - to ``garble'' the distribution of continuation payoff vectors.

\begin{defn} For any $\eta >0$, $W \subset \mathbb{R}^n$, and $\delta \in (0,1)$, the set  $\mathcal B_{\eta}(W;\delta)$  comprises points $v \in \mathbb{R}^n $ such that  $v=(1-\delta) g(\alpha) + \delta \, \mathbb{E}(w\mid \alpha)$ for some continuation payoff function $ w: Y \times [0,1]^n \rightarrow W $
taking finitely many values, and a current action profile $\alpha$ that is a NE of the normal-form game
with payoffs $(1-\delta) g(a) + \delta \, \mathbb{E}(w\mid a)$, under the additional condition that 
for any $i \in I$, for any $a^\prime_i\in A_i$, at least one of the following is true:
\begin{align}
w|\alpha & \overset{d}{=}   w|a^\prime_i,\alpha_{-i} \label{eq:wdist} \\ 
v_i & \geq (1-\delta) g_i(a^\prime_i,\alpha_{-i}) + \delta \mathbb{E}_{\nu} \sum_{y\in Y}\pi(y\mid a^\prime_i,\alpha_{-i})w_i(y, \nu)
+ (1-\delta)\eta.
\label{eq:strictIC}
\end{align}
For any $\eta >0$, the mapping
$\mathcal B_{\eta}: 2^{\mathbb{R}^n} \times [0,1) \rightarrow 2^{\mathbb{R}^n}$ is called an `\textbf{$\eta$-strong APS mapping}'.
\end{defn}

The slack in (\ref{eq:strictIC}) does depend on the discount factor. Alternatively, we can view it as a slack of $\eta$ in un-normalized or total payoffs. Our definition requests that all pure actions on the support of $\alpha_i$ output the same distribution of continuation payoffs as in \eqref{eq:wdist}. Moreover, a deviating pure action $a_i$ off the support of $\alpha_i$ either (a) satisfies \eqref{eq:wdist} so that it leads to the same distribution over continuations, and is therefore unprofitable at any discount given that $\alpha$ is a Nash of the auxiliary game; or (b) satisfies \eqref{eq:strictIC} and therefore entails a loss of at least $(1-\delta)\eta$, which will allow us to show it is unprofitable at discount factors in some neighborhood of $\delta$.  Our  strengthened notion of self-generation follows.

\begin{defn}
For $\eta >0$, a set $ W \subset \mathbb{R}^n$ is said to be \textbf{$\eta$-strong self-generating} 
at $\delta$  if  $ W \subset \mathcal B_{\eta}(W;\delta) $ for a strong APS mapping $\mathcal B_{\eta}$.
\end{defn}

At this point the standard approach shows that any smooth set in the interior of the feasible and individually rational set is self-generating.
Our proof differs in three ways from this. First, we use the notion of $\eta$-strong self-generation, to leave ``wiggle room'' for varying discounting and achieving robustness. Second, we show this property for closed balls, rather than directly for all smooth sets; this is analytically more tractable. Points in the interior of a ball are generated by playing a Nash equilibrium of the stage game; the required continuations are in the ball for high enough discount factors.  Boundary points are harder --if the action profile chosen provides weak incentives, we ``mix in'' a small probability of continuation payoffs that provide strict incentives; these exist by individual full rank.
Third, to satisfy condition (\ref{eq:wdist}) we modify each player's continuation payoffs to take only the two extreme values, the only variation being the probabilities with which these two values are chosen for various $y \in Y$; these stochastic continuation values require a public randomization device to carry out ``garbling.''

Following FLM 1994, we prove a local self-generation property, and then leverage this property to prove we can construct robust equilibria.

\begin{lemma} \label{lem:ballSS}
If $\pi$ satisfies IFR, for any $c \in F^{\mathrm{MI}, \pi}$ and $r > 0$ satisfying $B(c,r) \subset \interior(F^{\mathrm{MI}, \pi})$ there exist $\underline \delta<1$ and $\eta>0$ for which the closed ball $B(c,r)$  is $\eta$-strongly self-generating at any $\delta \geq \underline{\delta}$.
\end{lemma}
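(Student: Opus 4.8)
The plan is to establish the inclusion $B(c,r)\subseteq \mathcal B_\eta(B(c,r);\delta)$ for a single pair $\underline\delta<1$, $\eta>0$ and every $\delta\ge\underline\delta$, by producing for each target $v\in B(c,r)$ an action profile $\alpha\in\mathcal A^{\mathrm{MI}}$ and a garbled continuation $w\colon Y\times[0,1]^n\to B(c,r)$ that decompose $v$ and satisfy, for every player and every deviation, \eqref{eq:wdist} or \eqref{eq:strictIC}. Points in the interior of the ball are immediate: I play any stage-game Nash equilibrium $\alpha^N\in\mathcal A^{\mathrm{MI}}$ with the constant continuation $w\equiv\big(v-(1-\delta)g(\alpha^N)\big)/\delta$, which lies in $B(c,r)$ once $\delta$ is close to $1$; the auxiliary game inherits the best replies of the stage game, and \eqref{eq:wdist} holds trivially because $w$ is constant. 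The real content is at, and in a thin collar around, the boundary, which I parametrize by the outer normal $\lambda=(v-c)/|v-c|$. Since $B(c,r)\subset\interior F^{\mathrm{MI},\pi}$, the support-function slack $\gamma:=\inf\{\,h_{F^{\mathrm{MI},\pi}}(\lambda)-\lambda\cdot v: |\lambda|=1,\ v\in B(c,r)\,\}$ is strictly positive by compactness, and this single number will drive the uniform choice of $\underline\delta$ and $\eta$.

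For a boundary point with normal $\lambda$, I draw the enforcing profile and transfers from the Fudenberg--Levine scoring problem restricted to $\mathcal A^{\mathrm{MI}}$; in the coordinate punishment direction $\lambda=-e_i$ this is exactly program $(P_i^{\mathrm{MI},\pi_i})$, whose optimizer $(\alpha,x_i)$ attains $\underline v_i^{\mathrm{MI},\pi_i}<v_i$ and is in general \emph{mixed}---the punishers must be myopically indifferent, since the continuations serving as promised payoffs are bounded below by $\underline v_i^{\mathrm{MI},\pi_i}$ and hence leave no room to deter a punished player who is not already best replying. I then take continuations $w(y,\nu)=\bar w+\tfrac{1-\delta}{\delta}\,z(y,\nu)$ with mean $\bar w=\mathbb E(w\mid\alpha)=v+\tfrac{1-\delta}{\delta}\big(v-g(\alpha)\big)$ and a bounded, $\alpha$-mean-zero perturbation $z$ that carries the incentive transfers. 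Individual full rank of each $\pi_i$ makes the map $w_i(\cdot)\mapsto(\mathbb E(w_i\mid a_i,\alpha_{-i}))_{a_i}$ onto, so the transfers---of order $O(1)$ in $z$, hence $O(1-\delta)$ in $w$---that equalize $\mathbb E(w_i\mid\cdot)$ on $\supp(\alpha_i)$ and deter off-support actions exist. Containment in the ball is then the curvature argument: every realized $w$ sits within $O(1-\delta)$ of $v$, while the slack $\gamma$ lets me place the mean a distance $\Theta(1-\delta)$ inside the tangent hyperplane along $\lambda$; since on a sphere of radius $r$ a point of normal depth $d$ and lateral spread $\ell$ is interior once $\ell^2\lesssim 2rd$, the lateral term $O((1-\delta)^2)$ is dominated by $\Theta(1-\delta)$ for $\delta$ near $1$.

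The genuinely new step is securing \eqref{eq:wdist} while retaining these incentives, and this is where the product structure and the device $\nu$ enter. For each player $j$ I let the $j$-component of $w$ take only two values, an inward pair separated by $O(1-\delta)$, realized through the coordinate $\nu_j$ with a signal-dependent probability $\rho_j(y_j)$. Under the product structure a change in $a_i$ alters only the marginal $\pi_i(\cdot\mid a_i)$, hence only the law of the $i$-component of $w$, which is Bernoulli with parameter $\bar\rho_i(a_i)=\sum_{y_i}\pi_i(y_i\mid a_i)\rho_i(y_i)$. Therefore, for $a_i'\in\supp(\alpha_i)$, the distributional requirement \eqref{eq:wdist} collapses to the single scalar identity $\bar\rho_i(a_i')=\bar\rho_i(\alpha_i)$, which is equivalent to equalizing $\mathbb E(w_i\mid a_i',\alpha_{-i})$ on the support---precisely the indifference I am already imposing; IFR of $\pi_i$ realizes the needed profile $(\bar\rho_i(a_i))_{a_i}$ with $\rho_i\in[0,1]^{Y_i}$ once the two values are $O(1-\delta)$ apart. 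For $a_i'\notin\supp(\alpha_i)$ I instead obtain the strict inequality \eqref{eq:strictIC}: the same IFR freedom makes the deviation strictly lower expected continuation, yielding a loss of at least $(1-\delta)\eta$ for a fixed $\eta>0$.

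I expect the main obstacle to be this simultaneous reconciliation. The two-value garbling must equalize the \emph{entire} continuation law across support actions (not merely $\mathbb E w_i$), the IFR transfers must still deliver indifference and off-support deterrence with a slack uniform in $\delta$, and every realized continuation must remain in $B(c,r)$. Keeping all three compatible is exactly what pins the perturbation scale at $\Theta(1-\delta)$ and what forces reliance on the product structure: it is the independence of players' signals that reduces \eqref{eq:wdist} to one scalar equation per player and lets the indifference constraint double as the distributional constraint. Without that decoupling, changing $a_i$ would perturb the joint law of $w$ through the correlations created by marginalizing over $y$, and the reduction---and with it the whole construction---would break. Finally, because $\gamma$, the transfer bounds, and the curvature threshold are uniform over the compact direction sphere and over $v\in B(c,r)$, a single $\underline\delta<1$ and $\eta>0$ witness $\eta$-strong self-generation at every $\delta\ge\underline\delta$, which is the claim.
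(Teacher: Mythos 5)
Your proposal is correct and follows essentially the same route as the paper's proof: interior points via a stage Nash equilibrium with constant continuations, boundary points via the Fudenberg--Levine score restricted to $\mathcal{A}^{\mathrm{MI}}$ with IFR-based transfers of order $1-\delta$, two-valued garbled continuations per player so that the product structure reduces \eqref{eq:wdist} to equality of the Bernoulli parameter (i.e., of $\mathbb{E}(w_i\mid a_i,\alpha_{-i})$ on $\supp(\alpha_i)$), and the quadratic expansion of $\|w-c\|^2$ to keep continuations in the ball. The only cosmetic difference is that you obtain uniformity of $\underline\delta$ and $\eta$ directly from compactness of the direction sphere, whereas the paper first proves local strong self-generation pointwise and then extracts a finite subcover; both yield the same conclusion.
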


The subsequent Lemma \ref{lem:RobustEq}, which is essentially a 
locally robust PPE folk theorem, follows immediately from Lemma \ref{lem:ballSS}, the continuity of payoffs in $\delta$, and familiar self-generation arguments \`a la APS 1990 and FLM 1994.

\begin{lemma}[Robust Equilibrium Exists] \label{lem:RobustEq}
With a product monitoring structure and a PRD, if $\pi$ satisfies IFR, for any compact $W \subset \interior (F^{\mathrm{MI},\pi})$ there exists a $\underline\delta\in (0,1)$, a $\eta>0$ and a $X\supset W$ such that for all $\delta\in (\underline \delta,1)$ we have $X\subset \mathcal B_{\eta}(X;\delta)$. Moreover,for any $\delta\in(\underline \delta,1)$ there is a strategy profile $\sigma$ such that $v = U(\sigma; \delta)$ and $\sigma$ is a PPE at all $\delta' \in (\delta-\Delta, \delta+\Delta)$ for some $\Delta>0$.
\end{lemma}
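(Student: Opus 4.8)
The plan is to establish the set-inclusion assertion by a compactness-and-monotonicity argument, and then to extract a robust equilibrium by the standard self-generation recursion, with the $\eta$-strong conditions supplying the robustness. First I would reduce the compact target set to a finite union of balls. Since $W$ is compact and contained in the open set $\interior(F^{\mathrm{MI},\pi})$, there exist finitely many closed balls $B(c_k,r_k)$, $k=1,\dots,m$, each with $B(c_k,r_k)\subset \interior(F^{\mathrm{MI},\pi})$, whose union $X:=\bigcup_{k}B(c_k,r_k)$ contains $W$. Applying Lemma~\ref{lem:ballSS} to each ball yields $\underline\delta_k<1$ and $\eta_k>0$ with $B(c_k,r_k)\subset \mathcal{B}_{\eta_k}(B(c_k,r_k);\delta)$ for all $\delta\ge\underline\delta_k$. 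Set $\underline\delta:=\max_k\underline\delta_k<1$ and $\eta:=\min_k\eta_k>0$. The mapping $\mathcal{B}_{\eta}(\,\cdot\,;\delta)$ is monotone increasing in its set argument (enlarging the target set only enlarges the admissible range of the continuation function $w$) and monotone decreasing in $\eta$ (a smaller slack in \eqref{eq:strictIC} is a weaker requirement). Hence for every $k$ and every $\delta\in(\underline\delta,1)$,
\[
B(c_k,r_k)\subset \mathcal{B}_{\eta_k}(B(c_k,r_k);\delta)\subset \mathcal{B}_{\eta}(X;\delta),
\]
so that $X\subset\mathcal{B}_{\eta}(X;\delta)$ for all such $\delta$; with $X\supset W$ this proves the first claim.

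Next I would run the familiar APS self-generation recursion to realize any target $v\in X$ at a given $\delta\in(\underline\delta,1)$. For each point of $X$ fix, once and for all, a decomposition guaranteed by $X\subset\mathcal{B}_\eta(X;\delta)$: an action profile $\alpha(\cdot)$ and a continuation function $w(\cdot)\colon Y\times[0,1]^n\to X$ with $v=(1-\delta)g(\alpha(v))+\delta\,\mathbb{E}(w(v)\mid\alpha(v))$, where $\alpha(v)$ is a Nash equilibrium of the auxiliary one-shot game with payoffs $(1-\delta)g(a)+\delta\,\mathbb{E}(w(v)\mid a)$ meeting conditions \eqref{eq:wdist}--\eqref{eq:strictIC}. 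Unfolding this map along public histories (which now include the PRD draws) defines a public strategy profile $\sigma$ with current target $v$; since all continuation targets remain in the compact set $X$, the discounted tail vanishes and $U(\sigma;\delta)=v$, while the one-shot-deviation principle together with the Nash property of each $\alpha$ shows $\sigma$ is a PPE at $\delta$.

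The main obstacle is the ``Moreover'' clause: showing this same $\sigma$ remains a PPE at all $\delta'$ in some interval $(\delta-\Delta,\delta+\Delta)$. I would split the one-shot deviations to $a_i'$ at any history into two types. If $a_i'$ satisfies \eqref{eq:wdist}, playing $a_i'$ induces exactly the same distribution over the (garbled) continuation payoff vector $w$, hence---because the decomposition map is fixed---the same distribution over continuation strategies, and thus over the entire future path; so $i$'s continuation payoff is identical at every discount factor, while the stage reward is equal (for $a_i'\in\supp\alpha_i$, since $\alpha\in\mathcal{A}^{\mathrm{MI}}$) or weakly lower (for off-support $a_i'$, by the Nash property, the continuation terms cancelling under \eqref{eq:wdist}); such deviations are unprofitable at every $\delta'$, requiring no continuity. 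The remaining deviations satisfy \eqref{eq:strictIC}, giving a strict gap of $(1-\delta)\eta$ between the equilibrium and deviation payoffs at $\delta$. Both payoffs are payoffs of fixed strategies, hence continuous in the discount factor, and on any neighborhood of $\delta$ bounded away from $1$ the map $\delta'\mapsto U_i(\,\cdot\,,\delta')$ is Lipschitz uniformly over all continuation strategies, with constant $2\bar{g}/(1-\delta-\Delta)$ where $\bar{g}:=\max_{i,a}|g_i(a)|$. Choosing $\Delta>0$ small enough that $2\cdot\tfrac{2\bar{g}}{1-\delta-\Delta}\,\Delta<(1-\delta)\eta$ preserves the strict inequality for all $\delta'\in(\delta-\Delta,\delta+\Delta)$; finiteness of players and actions and the uniform slack $(1-\delta)\eta$ let a single $\Delta$ work across all histories. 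The delicate bookkeeping is that the slack $(1-\delta)\eta$ and the Lipschitz constant both worsen as $\delta\to1$, but since only a local (possibly shrinking) $\Delta>0$ is needed for each fixed $\delta$, this causes no difficulty.
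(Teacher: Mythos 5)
Your proof is correct and follows essentially the same route as the paper: the first claim via Lemma \ref{lem:ballSS}, a finite subcover, the max of the $\underline\delta_k$'s and min of the $\eta_k$'s, and the monotonicity of $\mathcal{B}_\eta$ in both arguments; the ``Moreover'' clause via the standard APS unfolding plus the two-case deviation analysis (same distribution over continuations under \eqref{eq:wdist} versus a uniform $(1-\delta)\eta$ slack under \eqref{eq:strictIC} combined with the Lipschitz bound $2\bar g/(1-\delta')$ on payoffs in the discount factor). This is exactly the argument the paper invokes (and spells out in the analogous Lemma \ref{lem:double-robust-reg}), so nothing further is needed.
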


Theorem \ref{thm:product} follows from Lemmata \ref{lemma:reboot} and \ref{lem:RobustEq}. Proofs of Lemmata \ref{lem:ballSS} and \ref{lem:RobustEq} are in Appendix B.

\subsubsection{From here to Theorem \ref{thm:NoPRD}}
\label{ss:NoPRD Proof}

The full proof is contained in Appendix B, to which this section serves as a roadmap. Once again, we use a suitable version of self-generation to create robust equilibria at discount factors away from unity, and then use rebooting to create a Blackwell equilibrium out of these.
Generating a robust equilibrium involves  the same steps
as before. Complications arise from having to emulate a PRD through the public signal. Our assumption
on the existence of a suitable $(a^*,Y^*)$ guarantees that, with suitable rewards, all players can be induced to
maximize the probability of $Y^*$. ``Test phases'' during which players play $a^*$ are therefore introduced to
generate randomness for resets. Several difficulties arise;
we tackle these in turn.

First, unlike the exogenous PRD of Theorem \ref{thm:product},
our tests are open to manipulation. In that theorem, some players may
desire a reset, others not; but they are not given a choice.
Here, we need to provide incentives to players to choose the test action.
To offer rewards and yet still wipe the history, we use two self-generating
payoff sets, a ``punishment'' and a ``reward'' one; every payoff vector in the former
is Pareto-dominated by every payoff vector in the latter. By maximizing the probability of $Y^*$
during a test phase, players maximize the probability of getting to --or staying in-- the `reward' payoff set.

Second, to incentivize players as above, ``normal'' phase rounds --during which punishments and rewards are dished out-- must be far more common than ``test'' phase rounds. The Reboot Lemma only applies when there is exactly one test between two normal-phase rounds linked by incentives --but we necessarily have more normal phase rounds between punishment phases.
To fix this, we break up play into ``cycles.'' Thus if we have $N$ rounds in the normal phase and $T+1$ rounds of testing, rounds within a cycle are separated by $T+N+1$ rounds. Rounds that form part of the same normal phase are each linked to a different cycle, with continuation along that cycle coming $T+N+1$ rounds later. In other words, we play $N$ copies of the strategies given by self-generation, punctuated by test phases of length $T+1$ each. The outcome (pass vs.~fail) of each test phase affects every cycle identically.

Third, since the test is constructed to meet the need for a certain passing
probability, it may give payoffs different than those we wish to deliver in equilibrium.
As $T$ is given by this need, we keep it fixed and raise both
$N$ and the permissible discount factor so that overall payoffs are largely due to normal phase play. Then, payoffs from the normal
phase are adjusted to take into account expected payoffs during the test
phase so as to deliver an overall expected payoff equal to the target payoff.

Fourth, at the same time, the original robust equilibrium's incentives should apply; so $\delta^{T+N+1}(1-p)$, where $p$ is the reset probability, must lie in the robust range.

\section{Imperfect Private Monitoring}\label{sec:PM}

This section turns to  imperfect private monitoring. Again, we are given a finite set of players $I=\{1,2,...,n\}$, 
for each  $i \in I$ a finite set of actions $A_i$ and a reward function $g_i:A \rightarrow \mathbb{R}$. 
A (private) monitoring structure is a pair $(Y, \pi)$, with $Y=\prod_{i \in I}Y_i$, finite, and  $\pi\colon A\to \Delta Y$ mapping  $a \in A$ into the probability that the signal profile $(y_1,\ldots,y_n) \in Y$ obtains. Player $i$ only observes $y_i$. Let $ G= \langle I; A , g; Y,\pi \rangle$. Given discount factor vector $\bs \delta$, we denote the infinitely repeated game by $G^{\infty} (\bs \delta)$.

A $t$-length private history $h_i^t$ is a sequence $(a_i^{(1)},y_i^{(1)},\ldots, a_i^{(t)},y_i^{(t)}) \in H_i^t$.  The set of all private histories for $i$ is denoted $H_i$.
A behavior strategy $\sigma_i \in \Sigma_i$ maps private histories to mixed actions, $\sigma:H_i \rightarrow \Delta A_i$. We follow the literature by adopting sequential equilibrium as  solution concept.

Definition \ref{defi} is extended the obvious way. A strategy profile $\sigma \in \Sigma$ is a  Blackwell equilibrium (above $\underline{\delta}$) if there exists $\underline{\delta}\in[0,1)$ such that $\sigma$ is a (sequential) equilibrium of $G^{\infty} (\bs \delta)$ at any $\bs\delta \geq\underline{\delta}\cdot \bs (1,\ldots,1)$. A vector $v\in  \mathbb{R}^n$ is a Blackwell equilibrium payoff at $\bs\delta$ if there exists a Blackwell equilibrium $\sigma$ above some $\underline{\delta}$, with
$\bs\delta\geq\underline{\delta}\cdot(1,\ldots,1)$, such that $v=U(\sigma;\bs\delta)$, where as before $U(\sigma;\bs\delta)$ is the equilibrium payoff vector under $\sigma$ given $\bs\delta$.


Our focus is on games in which $\mathcal{A}^\textrm{MI}=A$. An important property of such games is the following.
\begin{lemma}\label{unique_a}
If $\mathcal{A}^\textrm{MI}=A$, the stage game has a unique and pure Nash equilibrium.
\end{lemma}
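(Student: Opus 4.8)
The plan is to establish three things in turn: that a Nash equilibrium exists and must be pure, that every equilibrium is in fact strict, and finally that there can be only one. Existence and purity are immediate. Since the stage game is finite, Nash's theorem guarantees at least one equilibrium $\alpha^*$. At any equilibrium each player best-responds, so every action in $\supp(\alpha^*_i)$ earns player $i$ the payoff $g_i(\alpha^*)$; that is, $g_i(a_i,\alpha^*_{-i})=g_i(\alpha^*)$ for every $a_i\in\supp(\alpha^*_i)$ and every $i$, which is exactly the defining condition of $\mathcal{A}^{\mathrm{MI}}$. Hence every Nash equilibrium lies in $\mathcal{A}^{\mathrm{MI}}$. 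Under the hypothesis $\mathcal{A}^{\mathrm{MI}}=A$ this forces every equilibrium to be a pure action profile, and in particular rules out any mixed equilibrium.

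Next I would argue that every (pure) equilibrium is strict. Suppose, toward a contradiction, that at a pure equilibrium $a$ some player $i$ has a second best reply $a_i'\neq a_i$ to $a_{-i}$, so that $g_i(a_i,a_{-i})=g_i(a_i',a_{-i})$. Then the profile in which $i$ plays $\tfrac12 a_i+\tfrac12 a_i'$ while every other player $j$ keeps playing $a_j$ is myopically indifferent: players $j\neq i$ have singleton support, and player $i$ is indifferent over $\{a_i,a_i'\}$ precisely because $\alpha_{-i}=a_{-i}$ is pure. This profile is nondegenerate yet lies in $\mathcal{A}^{\mathrm{MI}}$, contradicting $\mathcal{A}^{\mathrm{MI}}=A$. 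Therefore each player's best reply at $a$ is unique, i.e. $a$ is a strict equilibrium.

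It remains to prove uniqueness, which is the main obstacle. If two distinct pure equilibria $a$ and $b$ differ in a single coordinate $i$, then both $a_i$ and $b_i$ are best replies to the common profile $a_{-i}=b_{-i}$, directly contradicting the strictness just established; so the real difficulty is to rule out two equilibria that differ in two or more coordinates, where no single-player deviation exposes an indifference. Here I would invoke the index theory of Nash equilibria: by the previous steps every equilibrium is a strict pure equilibrium, hence an isolated, regular equilibrium of index $+1$, and the game itself is regular. The index theorem for finite games---the sum of the indices of the Nash equilibria equals $+1$ (see G{\"u}l, Pearce, and Stacchetti, 1993, or Ritzberger, 1994)---then forces the number of equilibria, each counted with index $+1$, to be exactly one. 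Combined with existence, this yields a unique, pure Nash equilibrium. The only delicate point is this appeal to the index theorem; a more self-contained alternative would restrict each player to the two actions $\{a_i,b_i\}$ and argue that the reduced game, possessing two strict pure equilibria, must admit a further, necessarily mixed, equilibrium---which would again be a nondegenerate member of $\mathcal{A}^{\mathrm{MI}}$ and complete the contradiction.
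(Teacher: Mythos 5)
Your proof is correct and follows essentially the same route as the paper's: the mixing argument showing that any tie against a pure opponent profile would place a nondegenerate profile in $\mathcal{A}^{\mathrm{MI}}$, hence every Nash equilibrium is pure and strict, and then the index theorem (each strict equilibrium has index $+1$ and the indices sum to $+1$) to conclude uniqueness. Your write-up is somewhat more explicit about the logical ordering (purity via $\mathrm{NE}\subseteq\mathcal{A}^{\mathrm{MI}}$ first, then strictness), but the substance is identical.
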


\begin{proof}
Since $\mathcal{A}^\textrm{MI}=A$, no player has a tie against any pure action profile of his opponents. 
Suppose, to the contrary, that there exist a player $i$, a profile of actions of the others $a_{-i}$, and two distinct actions $a_i$ and $a_i'$ of $i$ satisfying  $g_i(a_i,a_{-i}) = g_i(a_i',a_{-i})$.
Then, $(\frac12 a_i + \frac12 a_i',a_{-i})$ would belong to $\mathcal{A}^\textrm{MI}$. Thus, every stage-game Nash equilibrium would be strict, and hence pure. By the index theorem, such an equilibrium is unique.
\end{proof}


 The prisoner's dilemma satisfies $\mathcal{A}^\textrm{MI}=A$. So does the product choice game, and 2x2 dominance solvable games. More generally, $\mathcal{A}^\textrm{MI}=A$  if there is no tie against any pure action profile of the opponents, and for each player $i$, the ordinal ranking over $i$'s actions is independent of $(a_{i+1},\ldots,a_n)$. 
The converse of Lemma \ref{unique_a} does not hold: there exist games with a unique and pure Nash equilibrium, yet $\mathcal{A}^\textrm{MI}\neq A$. (For instance, dominance solvability is not enough in general.)

We focus on a special class of monitoring structures. Let $\pi_i(y_i \mid a):=\sum_{y_{-i} }\pi(y\mid a)$.
\begin{defn}
A monitoring structure $(Y, \pi)$ is \textbf{conditionally independent} if  
$$\pi(y\mid a) = \prod_i \pi_i(y_i\mid a) \mbox{ for all } a\in A,  y \in Y.$$
\end{defn}
Conditional independence is a special and admittedly non-generic property. Yet, it plays an important role in the literature. In particular, for the prisoner's dilemma, Matsushima (2004) establishes a folk theorem under conditional independence.

The monitoring structure $(Y, \pi)$ has \textbf{full support} if $\pi_i(y_i\mid a)>0$ for all $i\in I$, $y_i \in Y_i$, $a \in A$.



\subsection{An ``Anti-Folk'' Theorem}

\begin{thm}\label{thm:anti}
Suppose that $\mathcal{A}^\textrm{MI}=A$, and that $(Y, \pi)$ is conditionally independent and has full support. The unique Blackwell equilibrium outcome is the repetition of the stage-game Nash equilibrium.
\end{thm}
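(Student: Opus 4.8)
The plan is to prove Theorem \ref{thm:anti} by showing that in any Blackwell equilibrium under conditional independence, each player's continuation strategy is effectively independent of the private signals he receives, which — combined with the fact that $\mathcal{A}^\textrm{MI}=A$ forces myopic indifference to collapse to strict best-response behavior — pins down play to the (unique, pure) stage-game Nash equilibrium at every history. The core mechanism, following Matsushima (1989, 2004), is that under conditional independence a player's own signal $y_i$ carries no information about his opponents' signals $y_{-i}$ given the action profile, so his opponents' continuation play cannot be statistically tied to what he did; hence he can only be incentivized to condition on his signal if he is genuinely indifferent, and Blackwell robustness destroys any such indifference.

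First I would invoke Proposition \ref{prop:X}'s engine: fix a Blackwell equilibrium $\sigma$, so $\sigma$ is a sequential equilibrium for all $\bs\delta$ in an open interval $\mathcal{O}\subset(0,1)$. For any on-path private history $h_i^{t-1}$ at which player $i$ randomizes, or more generally conditions his future play on his realized signal $y_i^{(t)}$, the payoff identity must hold as a function of $\delta_i$ on $\mathcal{O}$; by the Identity Theorem (exactly as in \eqref{eq:power-series-eq}) the round-by-round reward differences must all vanish. The key step is to extract from this not merely myopic indifference at $t$, but indifference over how the \emph{continuation strategy} depends on $y_i^{(t)}$. Here conditional independence does the heavy lifting: because $\pi(y\mid a)=\prod_i\pi_i(y_i\mid a)$ and the structure has full support, the distribution of the opponents' signals (hence of their continuation behavior) is independent of player $i$'s realized $y_i$ conditional on $a$. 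Therefore the $\delta$-coefficient extraction forces player $i$'s own continuation payoff to be \emph{constant across his signal realizations}, which, propagated forward, means the entire outcome distribution is insensitive to $i$'s private signals.

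I would then combine this signal-independence with Lemma \ref{unique_a}. Since $\mathcal{A}^\textrm{MI}=A$, no player is ever myopically indifferent between two distinct actions against any pure profile of opponents; by Lemma \ref{unique_a} the stage game has a unique, strict, pure Nash equilibrium $a^*$. Because continuation play is signal-independent, each player faces, at every history, an effectively \emph{static} decision problem (his opponents' future behavior being a fixed lottery unaffected by his current action and signal), so his optimal current action is determined purely by the stage-game payoffs augmented by a signal-independent continuation value; by the no-ties property this best response is unique and coincides with $a^*$. A backward/inductive argument over the outcome tree (or an appeal to the fact that the one-shot-deviation principle leaves only the stage-game best response) then shows $\sigma(h^t)=a^*$ at every history, so the unique Blackwell equilibrium outcome is perpetual repetition of $a^*$.

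**The main obstacle** I anticipate is making rigorous the step from ``round-by-round reward differences vanish for all $\delta_i\in\mathcal{O}$'' to ``continuation play is insensitive to player $i$'s own signal.'' The Identity Theorem gives equality of the player's \emph{expected} continuation payoffs along the two actions he mixes over, but one must rule out that he conditions his continuation strategy on $y_i$ in a way that leaves his own payoff constant while shifting his opponents' payoffs — and then argue such conditioning is either pointless (because, by conditional independence, it cannot be used by opponents to reward or punish him, so it cannot be incentive-relevant) or itself requires an indifference that Blackwell robustness forbids. Pinning down this ``a player cannot be induced to condition on a payoff-irrelevant private signal'' argument, and verifying it survives simultaneously for the whole interval $\mathcal{O}$ of discount factors rather than at a single $\delta$, is where the conditional-independence hypothesis and the full-support assumption must be deployed most carefully; this is the heart of the proof and the place where Matsushima's insight has to be adapted to the Blackwell setting.
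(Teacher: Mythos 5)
Your proposal has the right ingredients (the Identity Theorem engine from Proposition \ref{prop:X}, Matsushima's conditional-independence insight, and Lemma \ref{unique_a}), and you correctly locate the hard step, but as written it does not close the gap you yourself flag, and it contains an intermediate claim that is false in general. The missing device is the paper's inductive anchor: let $t$ be the \emph{first} round at which $\sigma$ prescribes a mixed or history-dependent action. Because play before $t$ is pure and history-independent, conditional independence makes the players' private histories at the start of round $t$ \emph{independent across players}; this is what guarantees that every continuation strategy player $i$ uses with positive probability is a best response to the \emph{same} distribution over opponents' continuation play, hence that all of them are payoff-equivalent on an open interval of discount factors. Only then does the Identity Theorem deliver round-$t$ myopic indifference for the ``averaged'' profile $\overline{\alpha}_i(a_i)=\sum_{h_i^t}p_{it}(h_{it})\sigma_i(h_i^t)(a_i)$, forcing $\overline{\alpha}\in\mathcal{A}^{\mathrm{MI}}$ with $\overline{\alpha}$ non-pure --- a contradiction with $\mathcal{A}^{\mathrm{MI}}=A$. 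Without this ``first round of mixing'' structure, your claim that opponents' continuation behavior is independent of $i$'s signal fails: once any player has mixed, private histories become correlated across players (player $j$'s signal depends on $i$'s realized action), and the whole indifference argument collapses.

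Separately, your assertion that each player faces a static problem with opponents' future behavior ``unaffected by his current action'' does not follow from conditional independence: a deviation by $i$ changes the distribution of $y_j$ through $\pi_j(\cdot\mid a)$ and hence can change $j$'s continuation play. That independence is a \emph{consequence} of having first established that opponents' strategies are history-independent (so that, by full support, $i$'s deviation reaches the same opponent information sets and elicits the same fixed action sequence), at which point myopic best-responding and Lemma \ref{unique_a} pin down the unique pure Nash equilibrium in every round. As stated, your argument uses this conclusion as a premise, so the ``static decision problem'' step is circular. The fix is precisely the two-stage structure of the paper's proof: first rule out mixing and history-dependence by contradiction at the earliest such round, then invoke full support to force stage-Nash play.
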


\begin{proof}
Let $\sigma$ be a Blackwell equilibrium.
First, we show that $\sigma$ is pure and history-independent on the equilibrium path. Let $t$ be the first round at which $\sigma$ prescribes either mixed or history-dependent actions, if it exists. Since players play pure and history-independent actions until round $t-1$, the conditional independence of the monitoring structure implies the independence of players' private histories at the beginning of round $t$, $(H_t,p_t)$ with $p_t = p_{1t}\times \cdots\times p_{nt}$. Note that each player $i$ is indifferent among all continuation strategies against the opponents' continuation strategies. By an argument similar to that in Proposition \ref{prop:X}
(the Identity/Uniqueness Theorem), player~$i$ is indifferent among all actions played with positive probability at round $t$ against the opponents' actions in the same round. Let $\overline{\alpha}_i(a_i) = \sum_{h^t_{i}} p_{it}(h_{it})\sigma_i(h^t_{i})(a_i)$ for each $i$ and $a_i$. Then $(\overline{\alpha}_1,\ldots,\overline{\alpha}_n) \in \mathcal{A}^\textrm{MI}$. This contradicts $\mathcal{A}^\textrm{MI}=A$.

Since the monitoring structure has full support, there is no reason to play a suboptimal action against the opponents' history-independent strategies. Thus, players play the unique stage-game Nash equilibrium  on the equilibrium path.
\end{proof}

The intuition for this result relies on Matsushima (1991). Indeed, the first step of the proof of Theorem \ref{thm:anti} follows his. He shows inductively that a pure strategy  satisfying ``independence of irrelevant information'' must be history-independent. Hence, non-trivial equilibria must involve some indifference across some actions, for some player, after some private history. This is inconsistent with $\mathcal{A}^\textrm{MI}=A$, given that $\sigma$ must be a Blackwell equilibrium.




\medskip

If $\mathcal{A}^\textrm{MI} \neq A$, \textit{i.e.}, there exists $\alpha \in \mathcal{A}^\textrm{MI} \setminus A$, then players may play $\alpha$ at, say, round $1$, so that independence of private histories fails. Even if players have played pure actions so far, they can play possibly history-dependent actions at round $t$ so long as their ``averages'' are equal to $\alpha_i$. By using history dependence appropriately, one can engineer non-myopic equilibrium behavior at earlier rounds, 
see the following example.

\begin{figure}
\[
\begin{array}{ccc}
& H & D \\ \cline{2-3}
\multicolumn{1}{c}{H} & \multicolumn{1}{|c}{0,0} & \multicolumn{1}{|c|}{5,1} \\ \cline{2-3}
\multicolumn{1}{c}{D} & \multicolumn{1}{|c}{1,5} & \multicolumn{1}{|c|}{4,4}
\\ \cline{2-3}
\end{array}
\]
\caption{The game in Example 2.}\label{example22}
\end{figure}

\begin{example}
\label{Ex2}
Consider the repetition of the hawk-dove game given by the payoff matrix in Figure 2.
The conditionally independent monitoring structure $(Y,\pi)$ is given by  $Y_1 = Y_2 = \{h,d\}$, and
$$
\pi_i(y_i = h \mid a_j = H) = \pi_i(y_i = d \mid a_j = D) = 0.9.
$$
The stage game has three Nash equilibria: $(H,D)$, $(D,H)$, and $(\frac12 H + \frac12 D,\frac12 H + \frac12 D)$,
hence $\mathcal{A}^\textrm{MI} \neq A$.
Consider the following symmetric strategy profile: at every odd round, play $D$, and at every even round, play $H$ if one's own previous action is $D$ and the signal is $h$, play $\frac{5}{9}D + \frac{4}{9}H$ if one's own previous action is $D$ and the signal is $d$, and play $D$ if one's own previous action is $H$ (off the equilibrium path). Note that at an even round on the equilibrium path, each player plays
$$0.1 \times H + 0.9 \times \left(\frac{5}{9}D + \frac{4}{9}H\right) = \frac12 H + \frac12 D$$
on average, and no player has an incentive to deviate. If a player deviates to $H$ at an odd round, he receives a reward of $5$ in that  round, and at the next round, faces $0.9 \times H + 0.1 \times \left(\frac{5}{9}D + \frac{4}{9}H\right) = \frac{17}{18}H + \frac{1}{18}D$ on average, and so receives a payoff of $\frac{21}{18}$. Since
$$
4 + \delta \times \frac{5}{2} \geq 5 + \delta \times \frac{21}{18} \Leftrightarrow \delta \geq \frac{18}{23},
$$
this strategy profile is a Blackwell equilibrium.
\end{example}




\section{Extension: Limit Blackwell Payoffs}\label{sec:Ext}


\color{black}
Rather than fixing a discount factor, and characterizing the set of Blackwell equilibrium payoffs evaluated at this discount factor,
one might wonder what payoffs can be achieved as limit payoffs of some Blackwell equilibrium.




\begin{defn}
A payoff vector $v$ is a \textbf{limit Blackwell payoff}  if there exists a Blackwell SPNE  $\sigma$ such that $U(\sigma; \bs\delta) \rightarrow v$ as $\bs\delta \rightarrow \bs (1,\ldots,1)$.
\end{defn}


As one might surmise, the set of such payoffs matches the one that appears in Theorem \ref{thm:perfect}, despite the fact that not all Blackwell equilibria have payoffs that converge as $\bs\delta \rightarrow \bs (1,\ldots,1)$.

\begin{thm}\label{LBPPM}
Fix a repeated game under perfect monitoring, such that the dimension of $F$ is $n$. A payoff vector is a limit Blackwell payoff if it is in $F^{\mathrm{MI}}$, and only if it is in the closure of $F^{\mathrm{MI}}$.
\end{thm}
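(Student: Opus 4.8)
The plan is to prove the two implications separately, after first recording the convenient description
$\mathrm{cl}(F^{\mathrm{MI}}) = \{w \in F \mid w_i \geq \underline v_i^{\mathrm{MI}} \ \forall i\in I\}$. This identity holds because $F=\co(g(A))$ is a compact convex polytope, the constraints $w_i \geq \underline v_i^{\mathrm{MI}}$ are closed, and $F^{\mathrm{MI}}$ is full-dimensional whenever it is nonempty (being the intersection of the full-dimensional $F$ with open half-spaces). Indeed, any $w$ in the closed set on the right can be written as the limit of the strict-inequality points $(1-\epsilon)w + \epsilon w^\ast$, with $w^\ast \in F^{\mathrm{MI}}$ fixed, so the two sets coincide.

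\textbf{Necessity (``only if'').} Let $\sigma$ be a Blackwell SPNE with $U(\sigma;\bs\delta) \to v$ as $\bs\delta\to(1,\ldots,1)$. Two facts pin $v$ down. First, I would restrict attention to common discount factors $\bs\delta = \delta\cdot(1,\ldots,1)$: then $U(\sigma;\delta\cdot(1,\ldots,1))$ is an expectation of discounted averages $(1-\delta)\sum_{t}\delta^{t-1}g(a^{(t)})$, and since the weights $(1-\delta)\delta^{t-1}$ are common across coordinates, each such average lies in $\co(g(A))=F$; hence so does the expectation. As $F$ is closed and $U(\sigma;\delta\cdot(1,\ldots,1)) \to v$ along $\delta\to1$, this gives $v\in F$. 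Second, the corollary to Proposition \ref{prop:X} yields $U_i(\sigma;\bs\delta)\geq \underline v_i^{\mathrm{MI}}$ at every high $\bs\delta$; letting $\bs\delta\to(1,\ldots,1)$ gives $v_i\geq \underline v_i^{\mathrm{MI}}$. Together these place $v$ in $\{w\in F \mid w_i\geq \underline v_i^{\mathrm{MI}}\}=\mathrm{cl}(F^{\mathrm{MI}})$.

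\textbf{Sufficiency (``if'').} Fix $v\in F^{\mathrm{MI}}$; then $F^{\mathrm{MI}}\neq\emptyset$, so with $\dim F=n$ we have $\dim F^{\mathrm{MI}}=n$ and the construction of Theorem \ref{thm:perfect} is available. The idea is to exhibit a single ($\delta$-independent) Blackwell SPNE whose payoff converges to $v$. I would write $v=\sum_k \lambda_k g(a_k)$ as a convex combination of finitely many pure payoffs and fix an action path $(a^{(t)})_{t\geq1}$, independent of $\delta$, along which each $a_k$ occurs with asymptotic frequency $\lambda_k$, arranged in growing blocks so that every tail has Cesàro average $v$. By the Abelian (Cesàro-implies-Abel) theorem applied coordinatewise, $U_i(\sigma;\delta_i)=(1-\delta_i)\sum_t\delta_i^{t-1}g_i(a^{(t)})\to v_i$ as $\delta_i\to1$, and likewise for every continuation, so all continuation payoffs converge to $v$ as discounting vanishes. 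I would then support this path exactly as in the proof of Theorem \ref{thm:perfect}: a deviation by player $i$ triggers his stick-and-carrot regime whose stick is the fixed MI-minmax profile $\alpha^i\in\mathcal A^{\mathrm{MI}}$ (so that actions selected within its support are myopically payoff-irrelevant and the punishment is robust to $\delta$), pushing $i$'s continuation value down to essentially $\underline v_i^{\mathrm{MI}}<v_i$. Since on-path continuation values converge to $v\in F^{\mathrm{MI}}$, bounded away from $\underline v_i^{\mathrm{MI}}$, while the normalized one-shot deviation gain is $O(1-\delta_i)\to0$, every deviation is deterred for all sufficiently high $\bs\delta$, and the Abelian limit gives $U(\sigma;\bs\delta)\to v$.

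\textbf{Main obstacle.} The delicate point is the simultaneity built into the Blackwell requirement in the sufficiency direction: deterrence and feasibility of continuations must hold not at one discount factor but on an entire right-neighborhood of $1$, using one fixed strategy profile. I expect to handle this exactly as Theorem \ref{thm:perfect} does, via the Arrow--Levhari-type monotonicity quoted there --- if the continuation payoffs along a fixed pure path stay within given bounds at one discount factor, the same bounds persist at all higher discount factors --- which converts the pointwise-in-$\delta$ deterrence estimates into a uniform threshold $\underline\delta$. The necessity direction, by contrast, should be immediate once one separates the ``$v\in F$'' argument (run at common discount factors, to sidestep the Lehrer--Pauzner phenomenon whereby $U(\sigma;\bs\delta)$ can leave $F$ under unequal discounting) from the MI-minmax lower bound.
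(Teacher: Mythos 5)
Your overall architecture matches the paper's: a Ces\`aro-mean action path converted into an Abel (discounted) limit via a Frobenius/Tauberian step, supported by stick-and-carrot regimes whose stick is an MI-minmax profile $\alpha^i\in\mathcal{A}^{\mathrm{MI}}$, with the Patience Lemma turning bounds at one discount factor into bounds at all higher ones. The necessity direction is also essentially what the paper intends (modulo the nonemptiness caveat you flag yourself). However, there is a genuine gap in the sufficiency construction, and it sits exactly at the step the paper's proof in Online Appendix OB is designed to handle.

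The gap: arranging the pure actions ``in growing blocks'' so that every tail has Ces\`aro average $v$ does \emph{not} make the discounted continuation payoffs uniformly individually rational, and your claim that ``all continuation payoffs converge to $v$ as discounting vanishes'' is only true pointwise in the starting date $\tau$, not uniformly over $\tau$. Concretely, if block $m$ contains a run of some $a_k$ of length $\ell_m\to\infty$ and $g_i(a_k)<\underline v_i^{\mathrm{MI}}$ for some $i$, then for any fixed $\delta<1$ there is an on-path date $\tau$ (the start of a run with $\ell_m\gg 1/(1-\delta)$) at which player $i$'s continuation value is close to $g_i(a_k)$, hence below $\underline v_i^{\mathrm{MI}}$ and below what he secures by deviating into his own stick-and-carrot regime. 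So for every candidate threshold $\underline\delta$ there are histories where deterrence fails, and the Patience Lemma cannot rescue the argument because its hypothesis --- that the continuation payoffs lie within the required bounds at some single $\delta$ for \emph{all} $\tau$ --- is never satisfied by a growing-blocks path. The paper avoids this by first writing $v$ as a convex combination of points $d(1),\dots,d(K)$, each a rational convex combination of pure payoffs satisfying $d_i(k)>\underline v_i^{\mathrm{MI}}+3\gamma$ for all $i$, realizing each $d(k)$ by a subcycle of \emph{bounded} length $L$, and only then letting the cycles (built from these bounded subcycles, with rational weights $\lambda^m\to\lambda$) grow. Bounded subcycle length yields a single $\widetilde\delta$ at which every continuation payoff, from every date, is at least $\underline v_i^{\mathrm{MI}}+\gamma$; the Patience Lemma then propagates this to all $\delta\geq\widetilde\delta$. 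You need this intermediate decomposition into strictly MI-individually-rational, boundedly realized targets (or an equivalent device) for your construction to go through.
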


The proof proceeds as that of Theorem \ref{thm:perfect}. The main difference is that on-path play yields payoffs that converge to the target payoff. To this end, we construct sequences of pure actions that approximate the target payoff, yet preserve individual rationality for low discounting. 

\section{Conclusion}

We apply the Blackwell optimality criterion to repeated games. This restricts equilibrium behavior by ruling out mixed (non-pure) strategies in general, except for particular profiles that depend on the monitoring structure. This restriction on behavior implies bounds on equilibrium payoffs, which reflect and clarify the role that mixed strategies play under different monitoring structures. Under perfect monitoring, they are used during minmaxing. Under imperfect public monitoring, they also help detection. Under private, conditionally independent monitoring, they must be part of any equilibrium that is not the repetition of the stage-game Nash equilibrium.

We focus on discount factor perturbations. Aside from the fact that this is a very natural parameter in studying repeated games, we find it plausible that in some applications agents know the consequences of their actions, but not the discount factors, as the latter might depend on the rates of interest set by a monetary authority. 

As a result, the minmax levels must be adjusted under perfect and imperfect public monitoring, and the identifiability conditions must be strengthened under imperfect public monitoring. With these modifications, folk theorems apply. Finally, under private, conditionally independent monitoring, only the repetition of the stage-game Nash equilibrium survives in prisoner's dilemma-like games.

This paper is very much a first pass. Mixed strategies also play an important role when considering games with short-run vs.\ long-run players (see Fudenberg, Kreps, and Maskin, 1990).\footnote{Here, under perfect monitoring, players are ``hybrid:'' they behave as if they were short-run as far as behavior strategies are concerned, but long-run for pure strategies.} Their importance under general private monitoring also remains to be seen.

Our results provide a somewhat nuanced justification for the skepticism with which mixed strategies are often viewed by empiricists when modeling long-run relationships, and their focus on pure strategies. At the same time, it may be that mixed strategies can be ``purified''  here as well (Harsanyi, 1973b). What equilibria survive  under the Blackwell optimality criterion in a setting that includes random payoff shocks (see Bhaskar, Mailath and Morris, 2008; P\k{e}ski, 2012) is an  open question.


\section*{Appendix A: Proofs for Section 2 (Perfect Monitoring)}
\label{appendix:2}

\subsection*{Proof of Theorem \ref{thm:perfect}}

We start by stating a useful result, which allows construction of action sequences with desired payoffs and continuation payoffs within given bounds.

\begin{thm}[Dasgupta and Ghosh, 2021]
\label{thm:DG2017}
For all $v\in F$ and $\varepsilon>0$,
there exists $\hat{\delta}>0$ such that for any $\delta\geq\hat{\delta}$
there is a sequence of action profiles $(a^{(t)}:t\geq0)=:\, a(v,\varepsilon,\delta)$ such that
\begin{equation} \label{SAactions_v}
 v
 = (1-\delta)
 \sum_{t\geq0}\delta^{t}g( a^{(t)});\;
\left\| (1-\delta)\sum_{t\geq\tau}\delta^{t-1}g(a^{(t)})- v \right\|
\leq \varepsilon \; \forall \tau \geq 1.
\end{equation}\label{DG}
\end{thm}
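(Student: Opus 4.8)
The plan is to build the sequence greedily by tracking a normalized continuation target. Set $v^{(0)}:=v$ and, having chosen $a^{(0)},\dots,a^{(t-1)}$, define $v^{(t+1)}:=\delta^{-1}\bigl(v^{(t)}-(1-\delta)g(a^{(t)})\bigr)$, so that $v^{(t)}=(1-\delta)g(a^{(t)})+\delta v^{(t+1)}$ holds by construction. Telescoping gives $v=(1-\delta)\sum_{t=0}^{T-1}\delta^{t}g(a^{(t)})+\delta^{T}v^{(T)}$ for every $T$; hence, as soon as I can keep the targets $v^{(t)}$ in a bounded set, letting $T\to\infty$ yields the \emph{exact} equality $v=(1-\delta)\sum_{t\ge0}\delta^{t}g(a^{(t)})$ for free. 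The whole problem therefore reduces to choosing the $a^{(t)}$ so that $\|v^{(t)}-v\|\le\varepsilon$ for all $t$; this simultaneously delivers the continuation-payoff bound, the continuation payoff from round $\tau$ being $v^{(\tau)}=(1-\delta)\sum_{t\ge\tau}\delta^{t-\tau}g(a^{(t)})$.

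First I would reduce to the case where $v$ lies in the relative interior of $F$. Let $F_{0}$ be the minimal face of $F$ containing $v$; its extreme points form a subset $\{g(a):a\in A_{0}\}$ of the vertices of $F$, and $v$ is in the relative interior of $F_{0}$. If I only ever use profiles from $A_{0}$, every target stays in the affine hull of $F_{0}$ (the recursion is an affine combination, with coefficients $\delta^{-1}$ and $-(1-\delta)\delta^{-1}$ summing to one), so keeping targets near $v$ keeps them inside $F_{0}\subseteq F$ and hence feasible. If $F_{0}$ is a single vertex then $v=g(a)$ and playing $a$ forever works, so I may assume $F_{0}$ is positive-dimensional and fix $\mu>0$ with the $\mu$-ball around $v$ (within the affine hull of $F_{0}$) contained in $F_{0}$.

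The core is a one-step steering estimate. Writing $e:=v^{(t)}-v$, the recursion gives $v^{(t+1)}-v=\delta^{-1}\bigl(e+(1-\delta)(v-g(a^{(t)}))\bigr)$. Because the $\mu$-ball around $v$ lies in $F_{0}$, for the unit vector $u:=e/\|e\|$ there is a vertex with $\langle g(a)-v,u\rangle\ge\mu$; taking $a^{(t)}$ to be such a vertex (any vertex when $e=0$) and expanding the square yields
\[
\|v^{(t+1)}-v\|^{2}\le\delta^{-2}\Bigl(\|e\|^{2}-2(1-\delta)\mu\|e\|+(1-\delta)^{2}D^{2}\Bigr),
\]
where $D:=\max_{a\in A_{0}}\|g(a)-v\|$. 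Setting $\rho:=\min\{\varepsilon,\mu/2\}$, a short analysis of this upward parabola in $\|e\|$ (its maximum over $[0,\rho]$ is at an endpoint) shows that whenever $\|e\|\le\rho$ and $1-\delta$ is small enough that both $(1-\delta)D\le\delta\rho$ and $(1-\delta)(D^{2}-\rho^{2})\le2\rho(\mu-\rho)$ hold, the right-hand side is at most $\rho^{2}$. Thus $\|v^{(t)}-v\|\le\rho\le\varepsilon$ is preserved by induction, which is exactly the invariant I need, and this threshold on $1-\delta$ pins down $\hat\delta$.

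I expect the one-step steering estimate to be the main obstacle: the delicate point is securing a bound that holds \emph{uniformly} across all error sizes $\|e\|\in[0,\rho]$ and all sufficiently large $\delta$, which is what forces the interplay between the inner radius $\mu$, the diameter $D$, and how close $\delta$ must be to $1$. Everything else is then routine: the telescoping that delivers exactness, the reduction to the minimal face, and the passage from the invariant $\|v^{(t)}-v\|\le\varepsilon$ to the stated continuation-payoff bound.
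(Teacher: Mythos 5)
This theorem is quoted from Dasgupta and Ghosh (2021); the paper itself gives no proof, so there is nothing internal to compare against. Your argument is correct and is essentially the self-accessibility construction of the cited source (going back to Fudenberg and Maskin's 1991 argument for dispensing with public randomization): track the normalized continuation target $v^{(t)}$, steer it back toward $v$ by choosing a vertex of the minimal face of $F$ containing $v$ whose inner product with the current error direction is at least $\mu$, and check the convex-parabola endpoint conditions to see that the invariant $\|v^{(t)}-v\|\le\rho$ is preserved once $1-\delta$ is small relative to $\rho$, $\mu$, and $D$; the reduction to the minimal face correctly handles boundary points of $F$. Your reading of the displayed bound as the normalized continuation payoff from round $\tau$ (exponent $t-\tau$ rather than the $t-1$ appearing in the display, which is a typo) is the intended one, as the use of the theorem in the proof of Theorem 1 confirms.
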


In words, given an $\varepsilon$ and a high enough discount $\delta$, the discounted payoff of the whole sequence is $v$, while the continuation payoff from any time $\tau$ onwards is $\varepsilon$-close to $v$.
As our purposes require strategies designed without knowledge of the exact discount factors, we need to know how the continuation payoffs of a fixed sequence of actions change as individual discount factors increase. The next lemma answers this by showing that if \textit{all} $\delta$-discounted continuation payoffs of a sequence are bounded above and below,
the same is true at higher discount factors. This helps us get discount robustness.

\begin{lemma}[Patience Lemma] Given $\delta\in (0,1)$, if a sequence of
real numbers $(x^{(t)})_{t\in\mathbb{Z}_{+}}$ satisfies
\begin{equation} \label{eq:PLemma_2bounds}
\underline{x}\leq (1-\delta)\sum_{t=\tau}^{\infty}\delta^{t-\tau}x^{(t)} \leq \bar{x}, \forall \tau \geq0
\end{equation}
for some $\overline{x}$ and $\underline{x}$ in $\mathbb{R}$ and some $\delta\in(0,1)$, 
then the same inequalities (\ref{eq:PLemma_2bounds}) hold for any $\delta^{\prime}\in(\delta,1)$.
\end{lemma}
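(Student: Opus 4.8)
The plan is to show that the more patient evaluation is a weighted average, with non-negative weights summing to one, of the less patient continuation payoffs; since each of the latter lies in $[\underline{x},\bar{x}]$ by hypothesis, so will the former. Write $V_\tau := (1-\delta)\sum_{t=\tau}^{\infty}\delta^{t-\tau}x^{(t)}$ for the $\delta$-discounted continuation payoff from round $\tau$, so that the hypothesis reads $\underline{x}\le V_\tau\le\bar{x}$ for every $\tau$. The goal is to exhibit weights $(\mu_j)_{j\ge 0}$, depending only on $\delta$ and $\delta'$, with $\mu_j\ge 0$ and $\sum_j\mu_j=1$, such that $(1-\delta')\sum_{t=\tau}^{\infty}(\delta')^{t-\tau}x^{(t)}=\sum_{j\ge 0}\mu_j V_{\tau+j}$. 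This is the ``smoothing'' intuition mentioned after Theorem~\ref{thm:DG2017} made precise: raising the discount factor redistributes weight across the tail in a way that is a genuine averaging of the original tails.

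The key step is to pin down these weights, which I would do via generating functions. Matching the coefficient of $x^{(t)}$ on both sides, the desired identity holds for every bounded sequence if and only if, for each $k\ge 0$, one has $(1-\delta')(\delta')^{k}=(1-\delta)\sum_{j=0}^{k}\mu_j\delta^{k-j}$. Passing to the power series $M(z):=\sum_{j\ge 0}\mu_j z^{j}$ and using $\sum_{m\ge0}\delta^{m}z^{m}=1/(1-\delta z)$, this convolution equation is equivalent to $M(z)=\frac{1-\delta'}{1-\delta}\cdot\frac{1-\delta z}{1-\delta' z}$. Expanding the right-hand side yields $\mu_0=\frac{1-\delta'}{1-\delta}$ and $\mu_j=\frac{1-\delta'}{1-\delta}(\delta'-\delta)(\delta')^{j-1}$ for $j\ge 1$. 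Because $\delta<\delta'<1$, each factor is positive, so every $\mu_j\ge 0$; and evaluating $M(1)=\frac{1-\delta'}{1-\delta}\cdot\frac{1-\delta}{1-\delta'}=1$ confirms that the weights sum to one.

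With the weights in hand the conclusion is immediate: $\sum_{j\ge 0}\mu_j V_{\tau+j}$ is a convex combination of numbers lying in $[\underline{x},\bar{x}]$, hence itself lies in $[\underline{x},\bar{x}]$, and this value equals the $\delta'$-discounted continuation payoff from $\tau$, which is exactly the inequality \eqref{eq:PLemma_2bounds} at $\delta'$. Since $\tau$ is arbitrary, the bounds hold at $\delta'$ for every $\tau$, as required, and since $\delta'\in(\delta,1)$ was arbitrary the lemma follows.

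The main obstacle is the routine but genuinely necessary justification of the interchange of summation that identifies $\sum_j\mu_j V_{\tau+j}$ with $(1-\delta')\sum_t(\delta')^{t-\tau}x^{(t)}$. For this I would first observe that the individual terms $x^{(t)}$ are uniformly bounded: from the telescoping relation $V_\tau=(1-\delta)x^{(\tau)}+\delta V_{\tau+1}$ one solves $x^{(\tau)}=(V_\tau-\delta V_{\tau+1})/(1-\delta)$, which is uniformly bounded because $V_\tau,V_{\tau+1}\in[\underline{x},\bar{x}]$. Uniform boundedness of $(x^{(t)})$ together with $\sum_j\mu_j=1$ makes the associated double series absolutely convergent, so Fubini's theorem legitimizes the rearrangement. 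Everything else is elementary bookkeeping with geometric series.
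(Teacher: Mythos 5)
Your proof is correct and is essentially the paper's argument: the identity you derive via generating functions, with weights $\mu_0=\tfrac{1-\delta'}{1-\delta}$ and $\mu_j=\tfrac{1-\delta'}{1-\delta}(\delta'-\delta)(\delta')^{j-1}$, is exactly the decomposition (\ref{aux}) that the paper states directly for the shifted sequence $x^{(t)}-\underline{x}$. Your explicit verification that the weights sum to one (so both bounds follow at once from a convex combination) and the Fubini justification are slightly more careful than the paper's ``standard substitutions and simplifications,'' but the underlying idea is identical.
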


\begin{proof} Define $f:[\delta,1)\times\mathbb{Z}_{+}\rightarrow\mathbb{R}$
by
\begin{equation}
f(\delta^\prime,\tau) :=(1-\delta^\prime)\sum_{t=\tau}^{\infty}\delta^{\prime t-\tau}(x^{(t)}-\underline{x})\label{start}
\end{equation}
which is {\it ex hypothesi} non-negative when $\delta^\prime=\delta$ for every $\tau\geq 0$. For any $\delta^{\prime}>\delta$, we have after some standard substitutions and simplifications:

\begin{equation}
f(\delta^{\prime},\tau)\;=\frac{1-\delta^{\prime}}{1-\delta}f(\delta,\tau)
+\frac{1-\delta^{\prime}}{1-\delta}(\delta^{\prime}-\delta)
\sum_{t=\tau+1}^{\infty}\delta^{\prime t-\tau -1}f(\delta,t).\label{aux}
\end{equation}
From $\delta^{\prime}-\delta\geq0$ and (\ref{start}) it follows that the right side of (\ref{aux}) is non-negative.
This leads to the inequality on the right side of  (\ref{eq:PLemma_2bounds}) at $\delta^\prime$; the other follows similarly.
\end{proof}

We now give a constructive proof of the positive part of Theorem \ref{thm:perfect}.
\begin{proof}
Fix $v\in F^{\mathrm{MI}}$.
The overall structure follows folk theorems closely---unilateral deviations are followed by minmax punishments, followed by a post-minmax phase that rewards every player who carried out the minmax phase.  
Following Abreu, Dutta, and Smith (1994) we now construct $n$ points that will serve as post-minmax payoffs at a given discount factor; we shall also show that at higher discount factors the actual post-minmax payoffs will be nearby. 

First note that Full Dimensionality implies Abreu, Dutta, and Smith (1994)'s Non-Equivalent Utility; hence we are able to obtain points 
$ \{ x(i) \in F| i=1,2,\ldots, n\}$ satisfying payoff asymmetry (PA): $\forall i,j$ with $i \neq j,  \;x_{i}(i)  < x_{i}(j)$. Let $w(i) \in F$ be the point in $F$ where $i$ gets the lowest feasible payoff, ignoring considerations of individual rationality. 
Now for each pair $(\beta$, $\eta) \in (0,1)^2$ and each $i$, let
\[
 y(i) := \beta (1-\eta) w(i) + \beta \eta x(i)+ (1-\beta)v. 
\]
For suitably small choices of $\beta$ and $\eta$, by construction these points have the following properties for all $i$: (1) strict myopic indifference rationality (SMIR), \textit{i.e.}, $y_j(i) > \underline{v}_j^{\textrm{MI}}$ for all $j$; (2) PA; (3) target payoff dominance (TPD), \textit{i.e.}, $y_i(i) < v_i$.  
Since all inequalities are strict, take any $\varepsilon>0$ such that all the above inequalities hold with a slack of $3\varepsilon$. 

Each $y(i)$ is generated by a convex combination of the pure-action payoffs $\{g(a)| a\in A\}$, so $y(i)\in F$. We approximate each $y(i)$ within $\varepsilon$ by a rational convex combination of the pure payoff points. Without loss of generality we can use these weights to construct sequences $(\widetilde  a^i_t)_{t=0}^{t=T-1}$ of the same length $T$ such that 
\begin{equation}
\label{e-approx}
 \left\| \frac{1}{T}\sum_{t=0}^{T-1} g(\widetilde  a^i_t) - y(i) \right\|< \varepsilon.  
\end{equation}

Defining
\begin{equation}\label{Rewardssetup}
v(i):=\frac{1}{T}\sum_{t=0}^{T-1} g(\widetilde  a^i_t)\in F^{\mathrm{MI}},
\end{equation}
we have obtained points $ \{ v(i) \in \mathbb{R}^n| i=1,2,\ldots, n\}$  and for each $i$ a finite sequence of action profiles $\widetilde a^{i} = (\widetilde a^{i}_t)_{t=0}^{t=T-1}$ suitably reordered so that the payoff of $i$ is increasing along the sequence for $i$ (\textit{i.e.}, $g_i(\widetilde a^{i}_t)<g_i(\widetilde a^{i}_{t^\prime})$ whenever $t<t^\prime$). 
Since the SMIR, PA, and TPD constraints for the $\{y(i)\}$ all held with a slack of $3\varepsilon$, (\ref{e-approx}) implies that these constraints will hold with a slack of at least $\varepsilon$ for $\{v(i)\}$:
\begin{align}
\forall i, \; \underline v_i^{\mathrm{MI}}+\varepsilon & <v_{i}(i), \label{idev}\\
\forall i, \; v_i(i)+\varepsilon & <v_{i}, \label{onpath}\\
\forall i \neq j,  \;v_{i}(i) + \varepsilon & <v_{i}(j). \label{jdev}
\end{align}
As usual, we interpret each $v(i)$ as giving each player $j\neq i$ a `reward' for punishing $i$.
Extend each such finite sequence to a periodic sequence by defining
$\widetilde a^i_t:=\widetilde a^i_{t \; \text{mod} \; T}$ for $t\geq T$. For each $i$, the punishment profile for
player $i$ is a mixed action profile $\alpha^i \in \mathcal{A}^{\textrm{MI}}$ such that 
\begin{equation}
{\alpha^{i}}\in\arg\min_{\alpha\in\mathcal{A}^{\textrm{MI}}}\max_{a_{i}}g_{i}(a_{i},\alpha_{-i}).
\end{equation}
Choose an $N\in \mathbb N$ such that for all $i$,
\begin{equation}
\max_{ a\in A}g_{i}( a)+N\, g_{i}({\alpha^{i}})
<(N+1)(v_{i}(i)-\varepsilon), \ \ \ \forall i,
\label{choose-N}
\end{equation}
which is possible because of $g_i(\alpha^i)\leq \underline v_i^{\mathrm{MI}}$ and (\ref{idev}).

Choose a $\hat{\delta}$ high enough that for all $\delta>\hat{\delta}$, the implication of Theorem \ref{DG} holds. Then choose $\underline\delta \geq \hat{\delta}$ so that all of the following hold for each $i$ and each $\delta \geq \underline\delta$:
\begin{align}
\tag{Rewards}\label{Rewards}
&&\forall \, k\in \mathbb Z_+,~~\left\|\frac{1-\delta}{1-\delta^T}\sum_{t=0}^{T-1} \delta^{t}g(\widetilde a^i_{(t+k)})-v(i)\right \|<\varepsilon,\\
\tag{IC-I} \label{IC-I}
&&(1-\delta)\max_{ a\in A}g_{i}(a)+\delta[(1-\delta^{N})g_i(\alpha^i)+\delta^{N}v_{i}(i)]<(1-\delta)\,\min_{a\in A}g_{i}( a)+\delta(v_{i}-\varepsilon),\\
&&\tag{IC-II(i)}\label{IC-II(i)}
\underline v^{\mathrm{MI}}_i<(1-\delta^{N})g_i(\alpha^i)+\delta^{N} (v_i(i)-\varepsilon),\\
&&\label{IC-II(j)}\tag{IC-II(j)}
\forall t\leq N, \forall j\not =i,~~ (1-\delta)\max{g_i(a)} +\delta[(1-\delta^N)g_i(\alpha^i)+\delta^Nv_i(i))] \\
\nonumber\ \     && <(1-\delta^t)g_i(\alpha^j)+\delta^t(v_i(j)-\varepsilon),\\
&&\label{IC-III(i)}\tag{IC-III(i)}
(1-\delta)\max_{a\in A}g_{i}( a)+\delta(1-\delta^N)g_i(\alpha^i)<(1-\delta^{N+1})(v_i(i)-\varepsilon),\\
&&\label{IC-III(j)}\tag{IC-III(j)}
\forall j\not = i, ~~(1-\delta)\max{g_i(a)}+\delta[(1-\delta^N)g_i(\alpha^i)+\delta^Nv_i(i)]<v_i(j)-\varepsilon.
\end{align}
For \eqref{Rewards} such a choice is possible by (\ref{Rewardssetup}),
and for \eqref{IC-I}
(resp.~\eqref{IC-II(i)}, \eqref{IC-II(j)}, \eqref{IC-III(i)}, \eqref{IC-III(j)})
because its
limit as $\delta\uparrow 1$ reduces to $v_i(i)<v_i-\varepsilon$
(resp.~$\underline v_i^\mathrm{MI}< v_i(i)-\varepsilon$,
$v_i(i)<v_i(j)-\varepsilon$,
(\ref{choose-N}),
$v_i(i)<v_i(j)-\varepsilon$),
which holds by (\ref{onpath})
(resp.~(\ref{idev}), (\ref{jdev}), the choice of $N$, (\ref{jdev})).

\emph{Strategies.}

For any $\delta>\underline \delta$, we define a strategy profile that 
is a Blackwell equilibrium above $\delta$.
Play is based on the following phases:\\
 Phase I: Play $a( v,\varepsilon,\delta)$, a sequence of pure action profiles satisfying (\ref{SAactions_v}).\\
Phase II(i): Play ${\alpha^{i}}$ for $N$ rounds.\\
Phase III(i): Play $\widetilde a^i$, starting at $\widetilde a^i_0$.

We construct a simple strategy profile \`a la Abreu (1988): We start in Phase I; unilateral deviations by a player $j$ from any phase lead to Phase II(j) followed by Phase III(j).

As we used Theorem \ref{thm:DG2017} to generate Phase I, the payoffs of the specified strategies evaluated at discount $\delta$ are $v$.
It remains to show that for any $\delta^\prime\geq \delta$, the strategies form an SPNE; \textit{i.e.}, that the strategies are a Blackwell SPNE above $\delta$.

For any $\delta^\prime\in (0,1)$, let $v_{i}^{t}(\delta^\prime)$ and $v_{i}^{t}(j)(\delta^\prime)$ denote the $\delta^\prime$-discounted
continuation payoff of the path in Phase I and Phase III(j) respectively, after $t-1$ rounds of the corresponding phase (not of the entire game) have elapsed. Note that, differently from FM and standard perfect-monitoring folk theorems, we do not ask a player to (myopically) best respond during her own punishment phase, as that would potentially not leave the others willing to mix.

From Theorem \ref{thm:DG2017} and the Patience Lemma, we have 
\begin{equation}
\forall i, \forall t, \forall \delta^\prime\geq \delta, ~~v_{i}^{t}(\delta^\prime)\geq v_{i}-\varepsilon. \label{ineq1}
\end{equation}
From (\ref{Rewards}) we have 
\begin{equation}
\forall i, \forall j, \forall t, \forall \delta^\prime\geq \delta, ~~|v^t_{i}(j)(\delta^\prime)-v_i(j)|<\varepsilon \label{ineq3}.
\end{equation}
From the fact that $g_i(\widetilde a^i_t)\leq g_i(\widetilde a^i_{t+1})$ for $t\in \{0,1,...T-2\}$, we have 
\begin{align}
\forall i, \forall \delta^\prime\geq \delta,~~ v^1_{i}(i)(\delta^\prime)&\leq v_{i}(i) ~~~\text{and}\label{ineq2}\\
\forall i, \forall t, \forall \delta^\prime\geq \delta, ~~v^1_{i}(i)(\delta^\prime)&\leq v^t_{i}(i) (\delta^\prime)\label{ineq4}.
\end{align}

\emph{Checking subgame perfection}.\\
 \emph{Step 1.} Player $i$ cannot profit by deviating from Phase I if for any
 $ t \in \mathbb{N}$,
\begin{equation*}
(1-\delta^\prime)\,{\displaystyle \max_{ a\in A}g_{i}( a)+\delta^\prime[(1-\delta^{\prime N})g_i(\alpha^i)+\delta^{\prime N}v^1_{i}(i)(\delta^\prime)]\leq(1-\delta^\prime)\,\min_{ a\in A}g_{i}(a)+\delta^\prime v_{i}^{t}(\delta^\prime).}
\end{equation*}
Using (\ref{ineq1}), (\ref{ineq2}) and $g_i(\alpha^i)\leq \underline v_i^{\mathrm{MI}}$ we need only show
\[
(1-\delta^\prime)\max_{a\in A}g_{i}(a)+\delta^\prime[(1-\delta^{\prime N})\underline v^{\mathrm{MI}}_{i}+\delta^{\prime N}v_{i}(i)]\leq(1-\delta^\prime)\,\min_{ a\in A}g_{i}(a)+\delta^\prime(v_{i}-\varepsilon),
\]
which is identical to (\ref{IC-I}), which applies as $\delta^\prime\geq \delta\geq \underline \delta$.\\
\emph{Step 2.} Player $i$ cannot profit by deviating from Phase II(i) if for any $t = 1,2,\dotsc, N$, 
\[(1-\delta^\prime)\underline v^{\mathrm{MI}}_i+\delta^\prime[(1-\delta^{\prime N})g_i(\alpha^i)+\delta^{\prime N} v^1_i(i)(\delta^\prime)]<(1-\delta^{\prime t})g_i(\alpha^i)+\delta^{\prime t} v^1_i(i)(\delta^\prime).
\]
Using (\ref{ineq3}) and $g_i(\alpha^i)<v_i^{\mathrm{MI}}\leq v_i(i)-\varepsilon$ from (\ref{idev}), we can get the sufficient condition
\[(1-\delta^\prime)\underline v^{\mathrm{MI}}_i+\delta^\prime[(1-\delta^{\prime N})g_i(\alpha^i)+\delta^{\prime N} (v_i(i)-\varepsilon)]<(1-\delta^{\prime N})g_i(\alpha^i)+\delta^{\prime N} (v_i(i)-\varepsilon),
\]
which reduces to (\ref{IC-II(i)}), which applies as $\delta^\prime\geq \delta\geq \underline \delta$.
\\
 \emph{Step 3.} Player $i$ cannot profit by deviating from Phase III(i) if for any $t \in \mathbb{N}$, 
\[
(1-\delta^\prime)\,\max_{a\in A}g_{i}( a)+\delta^\prime[(1-\delta^{\prime N})\, g_{i}({\alpha^{i}})+\delta^{\prime N}v_{i}^{1}(i)(\delta^\prime)]\leq v_{i}^{t}(i)(\delta^\prime).
\]
Given (\ref{ineq4}), this inequality holds if
\[
(1-\delta^\prime)\,\max_{a\in A}g_{i}(a)+\delta^\prime(1-\delta^{\prime N})\, g_{i}({\alpha^{i}})\leq (1-\delta^{\prime N+1}) v_{i}^{t}(i)(\delta^\prime),
\]
so that we can now use (\ref{ineq3}) to get the sufficient condition
\[
(1-\delta^\prime)\,\max_{a\in A}g_{i}(a)+\delta^\prime(1-\delta^{\prime N})\, g_{i}({\alpha^{i}})\leq (1-\delta^{\prime N+1}) (v_{i}(i)-\varepsilon),
\]
which is satisfied due to (\ref{IC-III(i)})
 given that $\delta^\prime\geq \delta\geq \underline \delta$.\\
 \emph{Step 4.} Player $i$ does not deviate (observably) from Phase
II(j) if for all remaining punishment rounds $t\leq N$
\begin{equation*}
 (1-\delta^\prime)\max{g_i(a)}+\delta^\prime[(1-\delta^{\prime N})g_i(\alpha^i)+\delta^{\prime N}v^1_i(i)(\delta^\prime)]<(1-\delta^{\prime t})( g_i(\alpha^j)+\delta^{\prime t}v^1_i(j)(\delta^\prime).
\end{equation*}
It suffices to use (\ref{ineq2}) and (\ref{ineq3}) to obtain the sufficient condition
\begin{equation*}
 (1-\delta^\prime)\max{g_i(a)}+\delta^\prime[(1-\delta^{\prime N})g_i(\alpha^i)+\delta^{\prime N}v_i(i)(\delta^\prime)]<(1-\delta^{\prime t})( g_i(\alpha^j)+\delta^{\prime t}(v_i(j)-\varepsilon),
\end{equation*}
which is (\ref{IC-II(j)}), which applies as $\delta^\prime\geq \delta\geq \underline \delta$.\\
 \emph{Step 5.} Player $i$ cannot profit by mixing differently in Phase II(j). Recall our definition of $\mathcal A^{\mathrm{MI}}$; since ${\alpha^{j}}\in\mathcal{A}^{\mathrm{MI}}$, mixing only occurs between myopically indifferent actions according to ${\alpha^{j}}$; as future play does not vary over $i$'s actions on $\supp(\alpha_{i}^{j})$, he does not have a strict incentive to deviate. \\
\emph{Step 6.} Player $i$ cannot profit by deviating from Phase III(j) if for any $t \in \mathbb{N}$,
\begin{equation*}
(1-\delta^\prime)\max{g_i(a)}+\delta^\prime[(1-\delta^{\prime N})g_i(\alpha^i)+\delta^{\prime N}v^1_i(i)(\delta^\prime)]<v_i^t(j)(\delta^\prime).
\end{equation*}
so that using (\ref{ineq2}) and (\ref{ineq3})  we can use (\ref{IC-III(j)}) as a sufficient condition.

Therefore for any $\delta^\prime \geq \delta$, the specified strategies form an SPNE, and hence they are a Blackwell SPNE above $\delta$.

If a PRD is available, the equilibrium strategies can be modified as follows:\\
Phase I: At each round, play the correlated action $p\in\Delta A$
such that $v=\sum_{a\in A}\, p( a)g(a)$.\\
Phase II(i): Play ${\alpha^{i}}$ for $N$ rounds.\\
Phase III(i): Play $p^{i}\in\Delta A$ such that ${v(i)}=\sum_{a\in A}\, p^{i}( a)g( a)$ at each round of the phase.

It is easy to see that the resulting strategies constitute a Blackwell SPNE
if $\delta$ satisfies the sufficient condition $\delta\geq\underline{\delta}$ in our PRD-free construction above.
\end{proof}

\section*{Appendix B: Proofs for Section \ref{sec:IM} (Imperfect Monitoring)}

\subsection*{Proofs of Lemmata \ref{lem:ballSS} and \ref{lem:RobustEq}}

Following FLM 1994, we proceed by first showing a local version of the self-generation property we seek.

\begin{defn}
A set $W\subset\mathbb{R}^{n}$ is said to be \textbf{locally strong self-generating} if for any $v\in W$ we can find an open set $\mathcal{O}_{v}$, an $\eta_v>0$ and a $\delta_{v}<1$ such that
\[
v\in\mathcal{O}_{v}\cap W\subset \mathcal B_{\eta_v}(W;\delta)  \;  \forall \delta \geq \delta_v.
\]
\end{defn}

We will show that a closed ball $B(c,r) \subset \mathbb{R}^n$ is locally strong self-generating if it lies in the interior of the set $F^{\mathrm{MI}, \pi}$.

To this end it is useful to introduce the notion of MI-score. 
Following Matsushima (1989) and FL 1994, for any non-zero direction $\lambda$ we can find a point $v^{*}(\lambda)$ that lies on the highest hyperplane in direction $\lambda$ subject to the point itself being generated by a current action in $\mathcal{A^{\mathrm{MI}}}$, and continuation payoffs that lie below the said hyperplane.\footnote{Matushima (1989) proposed an algorithm to characterize the upper boundary of the equilibrium payoff set, when first-order conditions suffice for a maximum; FL 1994 extended this to all directions to characterize the entire set, restricting attention to finite action spaces to enable sufficient conditions to be imposed explicitly.} Let $H^-$ denote the lower half-space function, \textit{i.e.}, $H^-(\lambda,k):=\{z\in \mathbb R^n: \lambda\cdot z\leq k\}$. 
Let $\mathcal{B}(\cdot, \alpha; \delta)$ denote the usual APS operator for a fixed current action profile $\alpha$. That is, $v\in \mathcal{B}(W, \alpha; \delta)$ if there is a $w:Y \rightarrow W$  such that $v$ is the payoff of the Nash equilibrium $\alpha$ of the normal-form game with payoffs $(1-\delta)g(a)+\delta \mathbb E[w|a]$. 

\begin{defn}
The \textbf{MI-score} in direction $\lambda$ is
\begin{equation}
    k^{{\mathrm{MI}}} (\lambda) := \;
\sup_{v \in \mathbb{R}^n } 
\bigg \{ \lambda \cdot v  \bigg|
v \in \underset{\alpha \in \mathcal{A^\mathrm{MI}}}
\bigcup \mathcal{B} (H^{-}(\lambda, \lambda \cdot v), \alpha ;\delta) \bigg \}.
\label{eq:MI-score}
\end{equation}
\end{defn}

The region bounded by the MI-score in each direction serves as an upper bound on the equilibrium payoff set.
The difference between the usual score and the MI-score in \eqref{eq:MI-score} is that we restrict the current action $\alpha$ to have the myopic indifference property. 
The computation of the score is, loosely speaking, more flexible than the computation of self-generation because for any direction it allows us to generate payoffs using continuations in the lower half-space rather than in a smaller self-generating set.  
Note also that incentives aren't strict in calculating the score, since we use the standard APS operator $\mathcal{B}$ rather than our strong version $\mathcal{B}_{\eta}$ for some $\eta >0$.
While our equilibrium construction requires strict incentives, this is tackled perturbing it to create a point with almost the maximal score but with strict incentives.

Online Appendix OA shows how IFR implies that 
$F^{\mathrm{MI}, \pi}=\{v\in \mathbb R^n \big| \lambda \cdot v \leq k^{\mathrm{MI}}(\lambda)~~ \forall \lambda \neq 0 \}$.
So to prove Lemma \ref{lem:ballSS} it suffices to show that a closed ball $B(c,r)$ is locally strong self-generating if the MI-score in any direction $\lambda \neq 0$ exceeds the value $\lambda \cdot v$ for any $ v\in B(c,r)$.

Any $v \in B(c,r)$ falls in one of two cases.

{\it Case 1}: $v\in \interior(B(c,r))$.

We can find $\mu>0$ such that
$B(v,\mu)\subset \interior(B(c,r))$.
Fixing a NE $\alpha^{*}$ of the stage game $G$, we pick $\delta_{v}$ large enough so that the implied continuation payoffs at each $v^\prime\in B(v,\mu)$ (which are chosen to be constant in both the signal and the output of the randomization device) lie in $ \interior(B(c,r))$.

{\it Case 2}: $v\in\partial B(c,r)$.

The unit normal vector at $v$ pointing away from $B(c,r)$ is 
$\lambda := (v-c)/ \| v-c \|$. If there is an NE of $G$ that lies above this hyperplane we follow the arguments as in Case 1.
If not, pick a point $v^{*}$ that gives the maximal $\mathrm{MI}$-score in direction $\lambda$ and find the associated mixed action $\alpha^*\in \mathcal A^{\textrm {MI}}$ and $x^*:Y\rightarrow \mathbb R^n$, the associated normalized continuation payoff, \footnote{For details, see Mailath and Samuelson (2006). For all $\delta\in(0,1)$, $(\alpha^{\ast},\frac{1-\delta}{\delta}x^{*}+v^*)$ generates $v^{\ast}$ on $H^{-}(\lambda,\lambda\cdot v^{\ast})$ at $\delta$.}
 which satisfies that for all $y$, $\lambda\cdot x^*(y)\leq 0$.

For each $i$, order the actions of $i$ as $A_i = \{a_{i,1}, \ldots ,a_{i,K} \}$.
Define $v'_i \in \mathbb{R}^{|A_i|}$ by introducing a `penalty' of $1$ for any action that is not in the support of the action that generates $v^{*}$:
$$v^\prime_{i,k} = \begin{cases}
v_i ~~~&\text{if }a_{i,k}\in \supp(\alpha^{\ast}_i)\\
v_i-1~~~&\text{otherwise}.
\end{cases}
$$
By IFR, the matrix $\Pi_i (\alpha)$ whose rows are transposes of the column vectors $ \pi(y| \alpha_{-i}, a_i) $, one for each $a_i \in A_i$, has full row rank; therefore the following linear equation has a solution $x^\prime_i\in \mathbb R^{|Y|}$:
\begin{equation}\begin{bmatrix} 
    v^\prime_{i,1} -  g_i(a_{i,1},\alpha^*_{-i}) \\
    \vdots  \\
    v^\prime_{i,K} -  g_i(a_{i,K},\alpha^*_{-i})   \end{bmatrix}
 = \Pi_i (\alpha^{*}) 
 \begin{bmatrix} 
    x^\prime_i(y_1) \\
    \vdots  \\
    x^\prime_{i}(y_{|Y|})  \end{bmatrix}.
\end{equation}

Therefore, we've converted our demand for strictness in payoffs into a signal-specific reward function. Having done this for each $i$, define the function $x^\prime:Y \rightarrow \mathbb{R}^n$ by combining the $x^\prime_i$, \textit{i.e.}, $x^\prime (y)=\left( x^\prime_1(y), \ldots,  x_n^\prime(y) \right)$. 
Take $\beta,\gamma\in(0,1)$ small enough that for all $y$, we have $0>\lambda \cdot (\beta\gamma x^\prime(y)+\beta(1-\gamma)(v-v^*)+(1-\beta\gamma)x^*(y))$; this is possible from $\lambda \cdot x^*(y)\leq 0$ and $\lambda\cdot(v-v^*)<0$. Let $x^{\prime\prime}(y):=\beta\gamma x^\prime(y)+\beta(1-\gamma)(v-v^*)+(1-\beta\gamma)x^*(y)$, so that $\lambda \cdot x^{\prime\prime}(y)<0$.

Let $v^{\prime\prime}:=\beta v+(1-\beta)v^*$.
Since $v^*$ maximises the $\mathrm{MI}$-score and $\lambda \cdot v<k^{\textrm{MI}}(\lambda)$,
 we have  $\lambda\cdot v < \lambda\cdot v^*$,
  and therefore
$v^{\prime\prime}$ satisfies $\lambda\cdot v < \lambda\cdot v''$.

Therefore, the following hold for all $i$:
\begin{align}
 v''_i & = g_i(a_i,\alpha^{*}_{-i})
           +\mathbb{E}(x''_i \vert (a_i,\alpha_{-i}^*)),
                 & \textrm{ if } a_i \in \supp (\alpha_i^{*}), \\
 v''_i & \geq g_i(a_i,\alpha_{-i}^*)
     +\mathbb{E}(x''_i \vert (a_i,\alpha_{-i}^*))+ \beta\gamma, & \textrm{ otherwise.}
\end{align}
We define $\eta_v=\beta\gamma$, the degree of slack we've introduced.

If $\lambda$ is not a negative coordinate direction, $\alpha^*$ can be chosen to be pure; otherwise, perform the following `garbling' to ensure \eqref{eq:wdist}:
\begin{itemize}
\item
Publicly draw $\nu$ from the uniform distribution on $[0,1]^n$.%
\footnote{Combined with Lemma \ref{lemma:reboot}, an $(n+1)$-dimensional public randomization device suffices in each round, one dimension each to garble the continuation payoffs and an extra one for the reboot decision.}
\item
Let $\underline{x}_j$ be the lowest value of the of $x_j^{\prime\prime}$ on $Y$, and let $\overline{x}_j$ be the highest.
\item
Let $\widetilde{x}(y,\nu) = (\widetilde{x}_j(y_j,\nu_j))_{j\in I}$ be given by
$$
\widetilde{x}_j(y_j,\nu_j) =
\begin{cases}
\overline{x}_j, &\text{if }\nu_j \leq \frac{x_j^{\prime\prime}(y_j)-\underline{x}_j}{\overline{x}_j-\underline{x}_j},\\
\underline{x}_j, &\text{otherwise. }
\end{cases}
$$
\end{itemize}
Note that $\widetilde{x}_j$ is a garbling of $x_j^{\prime\prime}$ that preserves the expectation, and that
$$\mathrm{Pr}(\widetilde{x}_j = \overline{x}_j \mid a_j) 
= \sum_{y_j\in Y_j} \pi_j(y_j\mid a_j) \frac{x_j^{\prime\prime}(y_j)-\underline{x}_j}{\overline{x}_j-\underline{x}_j},$$
which is constant on $\supp(\alpha_j)$.\footnote{The degenerate case of $\overline{x}_j = \underline{x}_j$ is trivial.}
This needs to be done only for negative-coordinate directions.
For all other directions, set
$\widetilde{x}_j(y_j,\nu_j) = x_j^{\prime\prime}(y_j)$ for all $\nu$. Thus
\begin{align}
 v''_i & =  g_i(a_i,\alpha^{*}_{-i})+\mathbb{E}_{y,\nu}(\widetilde x_i \vert (a_i,\alpha_{-i}^* )),
                 & \textrm{ if } a_i \in \supp (\alpha_i^*), \\
 v''_i & \geq  g_i(a_i,\alpha^{*}_{-i})
     +\mathbb{E}_{y,\nu}(\widetilde x_i \vert (a_i,\alpha_{-i}^*))+ \eta_v, & \textrm{ otherwise.}
\end{align}

Since $ \lambda \cdot v < \lambda \cdot v'' $, add $ v - v''$ to both sides to  translate the old normalised continuation function to a new one $x$:
$$v =g(\alpha^{*})+\mathbb{E}_{y,\nu}(x \vert\alpha^*), \mbox{ where } x \; := \widetilde x_j + v -v''.$$
The continuation payoff point
$w(y,\nu; \delta) = (w_j(y_j,\nu_j;\ \delta))_{j\in I}$ is given by
\begin{equation}
w_j(y_j,\nu_j;\delta) = v_j +
\frac{1-\delta}{\delta}{x}_j(y_j,\nu_j),\label{eqn:continuations}
\end{equation}
which satisfies
${\delta}^2 \left\Vert w(y,\nu ;\delta)-c\right\Vert ^2
    =  (1-\delta)^2 \left\Vert x \right\Vert ^2 + 2 \delta (1-\delta) x\cdot (v-c) + \delta^2 r^2$.
Now, using the fact that $\lambda=\frac{v-c}{\left\Vert v-c \right\Vert}$, we have  $x \cdot (v-c) ={\left\Vert v-c \right\Vert} \lambda \cdot (\widetilde x + v -v'')
= {\left\Vert v-c \right\Vert}\lambda \cdot \widetilde x+ {\left\Vert v-c \right\Vert}(\lambda \cdot v -  \lambda \cdot v'') <0$, by $\lambda \cdot \widetilde x< 0$ and $ \lambda \cdot v < \lambda \cdot v'' $. Thus there is a $\delta_{y,\nu}$ such that for $\delta\geq\delta_{y,\nu}$, the continuation payoff $w(y, \nu; \delta)$ lies in the interior of $B(c,r)$. Take $\delta_{v}$ to be $\max_{y,\nu}\delta_{y,\nu}$, which is less than $1$ as each $\delta_{y,\nu}<1$. At each $\delta>\delta_v$, we have
\begin{align}
v_i & = (1-\delta) g(\alpha_i,\alpha^{*}_{-i})+\delta\mathbb{E}_{y,\nu}(w_j(\cdot;\delta)|(\alpha_i,\alpha^*_{-i})),
                 & \textrm{ if } a_i \in \supp (\alpha_i^*), \label{eqn:exact}\\
v_i & \geq (1-\delta) g(\alpha_i,\alpha^{*}_{-i})+\delta \mathbb{E}_{y,\nu}(w_j(\cdot;\delta)|(\alpha_i,\alpha^*_{-i})) + (1-\delta)\eta_v, & \textrm{ otherwise.}\label{eqn:slack}
\end{align}

Thus
\[
 v \in \mathcal B_{\eta_v} (B(c,r); \delta) \;  \forall \delta \geq \delta_v.
\]

Since translating the continuation payoff function leaves incentives unaffected, there exists $\gamma_v>0$ such that all points in $B(c,r)\cap B(v,\gamma_v)$ can be generated by using continuations in $B(c,r)$ at $\delta_v$, preserving (\ref{eqn:exact}) and (\ref{eqn:slack}). Moreover, at $\delta>\delta_v$ translated continuation values are in $B(c,r)$, as they are convex combinations of translated original points and translated continuations at $\delta_v$, both of which are in $B(c,r)$. Thus, for all $\delta>\delta_v$, we have $B(v,\gamma_v) \cap B(c,r) \subset \mathcal B_{\eta_v} (B(c,r);\delta)$.

Combining Cases 1 and 2 we see that $B(c,r)$ is locally strong self-generating. Now we can leverage our local
self-generation property into a global one.

Let $Z\subset \interior F^{\mathrm{MI}, \pi}$ be a compact locally strong self-generating set.
 Since $Z$ is locally strong self-generating, for any $z\in Z$ there exists $\gamma_{z}>0$ and $\eta_z>0$ such that $B(z,\gamma_{z})\cap Z$ can be $\eta_z$-strongly generated by $Z$ at all $\delta>\delta_z$. Then $\{(\interior B(z,\gamma_{z}))\cap Z\}_{z\in Z}$ forms an open cover of $Z$. Since $Z$ is compact, extract a finite subcover $\left\{ B(z_{1},\gamma_{z_{1}}),\dotsc,B(z_{L},\gamma_{z_{L}})\right\}$.
Now let $\delta_{Z}:=\max\{\delta_{z_{1}},\dotsc,\delta_{z_{L}}\}$, which is strictly below $1$ as a maximum of finitely many reals strictly less than $1$; and $\eta^*:=\min \{\eta_{z_{1}},...,\eta_{z_{L}}\}$, which is positive as the minimum of finitely many positive reals.

For all $\delta>\delta_Z$, since  $B(z_{\ell},\gamma_{z_{\ell}})\cap Z \subset \mathcal B_{\eta^*}  (B(c,r);\delta)$, and
$\cup_{\ell=1}^{L}\left(B(z_{\ell},\gamma_{z_{\ell}})\cap Z\right)=Z$, the set $Z$ is $\eta^*$-strongly self-generating.
\qed

\subsection*{Proof of Theorem \ref{thm:NoPRD}}

The following doubly uniform version of robustness is important in this setting.
\begin{defn}
A pair $(W,\hat \Delta)\in F \times 2^{(0,1)}$ is a \textbf{doubly robust region} if for any $v^\prime\in W$ and any $\delta\in \hat \Delta$,
we can find a strategy profile that (i) is a PPE for any discount
factor in $\hat \Delta$,
and (ii) delivers the payoff $v^\prime$ at $\delta$.
\end{defn}

\begin{lemma}
\label{lem:double-robust-reg} Given a closed ball $B\subset \interior F^* $, there exists $\underline{\delta}\in(0,1)$ such that for any $\delta^{\prime}\in(\underline{\delta},1)$ there is an open interval
$(\delta_l,\delta_h)\ni \delta^\prime$ such that $(B,(\delta_l,\delta_h))$ is a doubly robust region.
\end{lemma}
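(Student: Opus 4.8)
The plan is to prove Lemma \ref{lem:double-robust-reg} by a strong self-generation argument adapted to the no-PRD setting, restricting throughout to current action profiles that are either pure or stage-game Nash equilibria. The guiding observation is that, since $F^*$ only asks payoffs to exceed $\min\{\underline v_i^{\mathrm{pure}},\underline v_i^{\mathrm{NE}}\}$, the non-Nash MI-mixing that caused trouble under a product structure is never needed: every supporting hyperplane of $F^*$ is attained with a pure action profile (for non-coordinate and positive-coordinate directions) or with a pure minmax profile or a worst stage-game Nash equilibrium (for negative-coordinate directions). This lets me dispense with the garbling device of Theorem \ref{thm:product}, so no public randomization device is needed to decompose points of $B$; the $(a^*,Y^*)$ assumption and the test-phase machinery are reserved for the later rebooting step and play no role here.

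First I would set up a no-PRD version of the $\eta$-strong APS operator, with continuation function $w\colon Y\to W$ (dropping the $\nu$-argument), and prove a local strong self-generation statement for the ball in the spirit of Lemma \ref{lem:ballSS}: for each $v\in B$ there exist $\eta_v>0$, $\delta_v<1$ and a neighborhood $\mathcal O_v$ such that every point of $\mathcal O_v\cap B$ is $\eta_v$-strongly decomposable with continuations in $B$ at every $\delta\geq\delta_v$. Interior points follow Case 1 of Lemma \ref{lem:ballSS}, via a stage-game Nash equilibrium with constant continuations. For a boundary point $v$ with outward unit normal $\lambda$, I would decompose $v$ using a pure profile $a$ attaining the score in direction $\lambda$ — the pure minmax profile of $i$ when $\lambda=-e_i$ and $\underline v_i^{\mathrm{pure}}\leq\underline v_i^{\mathrm{NE}}$, and the pure maximizer of $\lambda\cdot g$ otherwise — together with a continuation reward $x\colon Y\to\mathbb R^n$ in the half-space $\lambda\cdot x\leq 0$. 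Pairwise full rank for pure profiles lets me choose $x$ so that every player not already best-responding at $a$ gets strict incentives of order $\eta_v$, while each best-responding player's expected continuation is constant across his own actions, making his weak incentives discount-robust. Since the current action is pure its support is a singleton, so the within-support condition \eqref{eq:wdist} is vacuous and deviations are handled by the strict clause \eqref{eq:strictIC}; there is no distribution over continuation vectors to equalize, which is exactly the step that forced garbling (and a PRD) in Theorem \ref{thm:product}. When instead the lowest feasible $v_i$ is delivered by the worst stage-game Nash equilibrium, I prescribe that profile with a single constant continuation in $B$, so that incentives collapse to the discount-free stage-game best-response conditions and \eqref{eq:wdist} holds trivially. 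Because $B\subset\interior F^*$, the score strictly exceeds $\lambda\cdot v$ in every direction, yielding the slack that keeps continuations inside $B$ for $\delta$ large.

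Second, I would globalize as in the proof of Lemma \ref{lem:ballSS}: the neighborhoods $\{\mathcal O_v\}_{v\in B}$ cover the compact ball, so I extract a finite subcover and set $\underline\delta:=\max_\ell\delta_{v_\ell}$ and $\eta^*:=\min_\ell\eta_{v_\ell}$, obtaining that $B$ is $\eta^*$-strongly self-generating at every $\delta\geq\underline\delta$. Fixing $\delta'\in(\underline\delta,1)$ and a target $v'\in B$, the recursive decomposition at $\delta'$ produces a strategy with $\delta'$-payoff exactly $v'$. The final step turns the slack into an interval of discounts on which this same strategy stays a PPE: since all continuation payoffs lie in the bounded set $B$, the continuation payoffs induced by a fixed strategy vary Lipschitz-continuously in $\delta$, so each incentive constraint — holding at $\delta'$ with slack at least $(1-\delta')\eta^*$ — persists on an open interval $(\delta_l,\delta_h)\ni\delta'$ whose half-width can be chosen uniform in $v'$ and in the public history. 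Hence, for every $v'\in B$ and $\delta\in(\delta_l,\delta_h)$, the associated strategy is a PPE throughout $(\delta_l,\delta_h)$ and delivers $v'$ at $\delta$; that is, $(B,(\delta_l,\delta_h))$ is a doubly robust region.

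I expect the main obstacle to be this last step: securing that one fixed strategy is a PPE across the whole interval, not merely that $B$ is self-generated separately at each discount. The delicate issue is uniformity of the quantifiers — a single $\eta^*$ and a single interval half-width valid simultaneously for all targets $v'\in B$ and all public histories — which is exactly what $\eta$-strong self-generation, together with the boundedness of $B$ and the continuity of payoffs in $\delta$, is built to supply. A secondary difficulty is the negative-coordinate direction, where one must check that holding player $i$ to his pure minmax (rather than an MI-mixed minmax) is compatible with discount-robust incentives for the punishers; this is precisely where pairwise full rank for pure profiles, rather than for a mixed profile, is used.
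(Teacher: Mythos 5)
Your proposal is correct and follows essentially the same route as the paper: it observes that the PRD/garbling in Lemma \ref{lem:ballSS} was needed only for negative-coordinate directions, which here are handled by pure-minmax or worst-Nash profiles since the bound is $\min\{\underline{v}_i^{\mathrm{pure}},\underline{v}_i^{\mathrm{NE}}\}$; it then builds a PRD-free $\eta$-strong self-generation of the ball and converts the $(1-\delta)\eta$ slack, via the Lipschitz dependence of discounted payoffs on $\delta$, into an open interval of discount factors on which the fixed strategy remains a PPE. The paper makes the last step explicit with the bound $\delta_l>\delta_h-\eta(1-\delta_h)^2/(4M)$, but the argument is the one you describe.
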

\begin{proof}
First, we notice that Lemma \ref{lem:ballSS} only used the PRD to get local self-generation in the negative coordinate direction. In our case, we are free to use either a pure or Nash action as in each negative coordinate direction $-e_i$ as we only need to attain a score of $\min\{\underline{v}_i^{\mathrm{pure}} , \underline{v}_i^{\textrm{NE}}\}$. Thus garbling is unnecessary, and strictness only has to be introduced when $\alpha^*$ is pure, which by assumption implies it satisfies IFR. Thus, a modified version of Lemma \ref{lem:ballSS} holds with respect to closed balls in $\interior F^*$, using a PRD-free version of the strong APS operator.

Thus, there is a $\underline\delta\in (0,1)$ and an $\eta>0$ such that for all $\delta\in (\underline \delta,1)$ we have $B\subset \mathcal B_{\eta}(B;\delta)$. Take any $\delta^\prime>\underline \delta$; we will show there is an open interval satisfying the lemma. 

Let $M:=\max_{i,a}|g_i(a)|$. Take $\delta_l,\delta_h$ such that 
\begin{align}
\underline \delta < \delta_l< \delta^\prime <\delta_h \label{deltainequalities}\\
\delta_l>\delta_h-\frac{\eta(1-\delta_h)^2}{4M} \label{etatodelta}
\end{align}

Given any $\delta\in (\delta_l,\delta_h)$ and any $v\in B$, using  that $\underline\delta<\delta$ from (\ref{deltainequalities}), construct a PPE using Lemma \ref{lem:ballSS} and $\eta$-strong self-generation at $\delta$ with payoff $v$. Then, take any $\hat \delta\in (\delta_l,\delta_h)$. Take any public history and any action profile $\alpha$ played after that history. From the definition of strong self-generation, for any player $i$ and action $a_i$ one of the following is true:
\begin{itemize}
\item The distribution of continuation play does not vary between $\alpha$ and $a_i,\alpha_{-i}$. Then, the fact $\alpha$ is a Nash equilibrium implies $g_i(\alpha)\geq g_i(a_i,\alpha_{-i})$  and hence $a_i$ is never a better response than $\alpha_i$ regardless of the discount.
\item Playing $a_i$ entails a loss of at least $\eta(1-\delta)$ at discount $\delta$. Notice the average discounted payoff function is differentiable in $\delta$ and for any outcome $h^\infty$ we have the bound $|\frac{d}{d\delta_0}U_i(h^\infty,\delta_0)|\leq \frac{2M}{1-\delta_0}$. Thus, between any two $\delta_0,\delta_1 \in (\delta_l,\delta_h)$ each strategy can vary in payoffs by at most $\frac{2M}{1-\max\{\delta_0,\delta_1\}}$. Over the whole interval, then, which from (\ref{etatodelta}) has length of at most $\frac{\eta(1-\delta_h)^2}{4M}$, this variation is bounded by $\eta(1-\delta_h)/2$. Thus, at $\hat \delta$, deviating to $a_i$ entails a gain of less than $2\eta(1-\delta_h)/2-\eta(1-\delta)<0$; so the deviation is unprofitable.
\end{itemize}
Thus, the constructed strategies are a PPE at any $\hat \delta$ in $(\delta_l,\delta_h)$, as requested.
\end{proof}

Now, we begin the proof of Theorem \ref{thm:NoPRD} in earnest. First, we construct statistical tests using $(a^*,Y^*)$. These tests will be used to mimic a PRD and decide whether the game should or should not be reset. A success occurs when the signal is $Y^{*}$. Recall that any unilateral deviation from $a^{\ast}$ strictly reduces the probability of a signal in the set $Y^{\ast}$. Let $\mathcal{T}(T^{*},k^{*})$ denote a test that is passed if and only if there are at least $k^{*}$ successes in $T^{*}$ Bernoulli trials, each with success probability $q^{*}=\pi(Y^{*}|a^{*})$. The pass probability of the test is the probability of passing it if $a^{*}$ is played for $T^{*}$ rounds.

Play is divided into three phases --- \textit{Select}, \textit{Normal}, and \textit{Test}. Play begins in the \textit{Select} phase.
Intuitively, the \textit{Select} is used to randomize whether the players next find themselves in a reward or punishment \textit{Normal} phase. \textit{Normal} phases are where players collect most of their payoffs. Each round $n$ in a Normal phase is linked via incentives to the $n$th round in future Normal phases, until a reset is triggered. Following a \textit{Normal} phase, play moves to the \textit{Test} phase; if the test is failed, then the subsequent \textit{Select} phase randomizes over punishment and reward once again; otherwise, the next \textit{Normal} phase will continue where play left off.

Given a target payoff $v$, choose the following quantities in order, culminating in the choice of a cutoff discount factor $\delta_{\min}$.
\begin{enumerate}

\item[Step 1.] Choose $c^{+},c^{-} \in \interior F^*$ such that $c_{i}^{+}>v_{i}>c_{i}^{-}$ for every $i$
and the following holds:
\begin{equation}
v=q^{*}c^{+}+(1-q^{*})c^{-}.\label{eq:v-split}
\end{equation}

\item[Step 2.] Choose $r>0$  such that $c_{i}^{+}>v_{i}+5r;\;v_{i}>c_{i}^{-}+5r\;\forall i$; and
$$B(c^{-},5r),B(c^{+},5r) \subset \interior \{v\in \co u(A) \mid  \forall i\in I, v_i\geq \underline{v}_i^{\mathrm{pure}} \wedge \underline{v}_i^{\textrm{NE}}
\}.$$
As we will use balls of radius $r$ rather than $5r$, the distance along each coordinate between any point in the balls we will use and the point $v$ is at least $4r$.

\item[Step 3.] Choose a $\delta^{\prime}$ high enough to satisfy the conditions
of Lemma \ref{lem:double-robust-reg} for both $B(c^{-},r)$ and $B(c^{+},r)$. That is, for some $(\delta_l^-,\delta_h^-)\ni \delta^\prime$ and $(\delta_l^+,\delta_h^+)\ni \delta^\prime$ we have that $(B(c^{-},r),(\delta_l^-,\delta_h^-))$ and $(B(c^{+},r),(\delta_l^+,\delta_h^+))$ are both doubly robust regions. Take $(\underline{\delta},\overline{\delta}):=(\delta_l^-,\delta_h^-)\cap(\delta_l^+,\delta_h^+)$.

\item[Step 4.] Find two binomial tests, both based on $(a^{*},Y^{*})$, with pass
probabilities $1-p^{+}$ and $p^{-}$, such that
\begin{equation}
1-p^{+},1-p^{-}\in(\underline{\delta},\overline{\delta}),\textrm{ and }\label{eq:1-p}
\end{equation}
\begin{equation}
\left\| v-\frac{\frac{q^{*}c^{+}}{1-(1-p^{+})\widehat{\delta}}+\frac{(1-q^{*})c^{-}}{1-(1-p^{-})\widehat{\delta}}}{\frac{q^{*}}{1-(1-p^{+})\widehat{\delta}}+\frac{1-q^{*}}{1-(1-p^{-})\widehat{\delta}}} \right\|
< \frac{r}{2}, \ \ \;\forall\widehat{\delta}\in\left(\max\{\frac{\underline{\delta}}{1-p^{+}},\frac{\underline{\delta}}{1-p^{-}}\},1\right).\label{cond:small-lambda}
\end{equation}
Such tests can always be found. Since the binomial distribution approximates the normal distribution (which has a continuous CDF), we can use independent Bernoulli trials with success probability $q^{*}=\pi(Y^{*}\mid a^{*})$ to design binomial tests of suitable lengths $T^{+}$ and $T^{-}$ having pass probabilities arbitrarily close to $\frac{\underline \delta+\bar\delta}{2}$ and $1-\frac{\underline \delta+\bar\delta}{2}$ respectively. \
If the approximations are close enough, (\ref{eq:1-p}) holds and so does  inequality (\ref{cond:small-lambda}), because equation (\ref{eq:v-split}) implies that the LHS of inequality (\ref{cond:small-lambda}) goes to zero as $p^+$ and $p^-$ get closer.

\item[Step 5.] As soon as the probability of the current test being passed hits zero or unity, we cannot give incentives to players to play the test action
unless it is a Nash equilibrium. To ensure incentives throughout a
test, we show how to modify it by truncating it appropriately. Given
any test of length $T^{0}$, we can switch to playing a fixed Nash
equilibrium $\alpha^{NE}$ of the stage game as soon as the test is
conclusive (failed or passed for sure) and until $T^{0}$ rounds
are up. Let $\mathcal{T}^{-}$ and $\mathcal{T}^{+}$ denote, respectively,
the Nash-truncated versions of the minus and plus Binomial tests above
extended to a common length $T:=\max\{T^{-},T^{+}\}$; if $T^{-}<T^{+}=T$,
where Nash play is substituted following the conclusion of a test.
If the continuation play after any test phase depends only on the
test outcome (pass vs fail), playing $\alpha^{NE}$ is incentive compatible
for any discount factor. Since the Nash equilibrium is played only
when the test is conclusive, the Nash-truncated tests inherit the
pass probabilities of the original tests.

\item[Step 6.] Tests not only impact subsequent play but also distort payoffs and incentives. We therefore must ensure they account for only a small proportion of all periods. This is both so that we can deliver the target payoff, and so that normal phases provide incentives to play the test action
during test phases. Denoting by $\rho$ the minimum change in the passing probability of a test when a player deviates from her test action,\footnote{There are two tests, finitely many states in each test, finite action
spaces and $n<\infty$ players, so this $\rho$ is well-defined and, due to the assumed properties of the
test action profile $a^*$, is strictly positive.} choose $N$
large enough that
\begin{equation}
N>6\frac{\sqrt{n}M(T+1)}{r\rho},\label{eq:N_1}
\end{equation}
where $|u_{i}(a)|\leq M$ for all $a\in A,i\in I$.

\item[Step 7.] We now ensure that we can incentivize players to take
the test action. We have
already (in Step 6) found a $N$ large enough to make most periods `normal' - what remains is
to find a bound on $\delta$ above which normal periods provide sufficient incentives during the
test phases.
Pick a $\widehat{\delta}$ such that for all $\delta\geq\widehat{\delta}$ and $i\in I$,
\begin{align*}
2MT+\frac{\delta^{T}}{1-\delta}\rho\left[\frac{(c^{-}_i+3r/2)(1-\delta^{N})-(1-\delta^{(N+T+1)})(v_i-3r)}{1-(1-p^{-})\delta^{(N+T+1)}}\right]<0,
\end{align*}
and
\begin{align*}
2MT-\frac{\delta^{T}}{1-\delta}\rho\left[\frac{(c^{+}-3r/2)(1-\delta^{N})-(1-\delta^{(N+T+1)})(v_i+3r)}{1-(1-p^{+})\delta^{(N+T+1)}}\right]<0.
\end{align*}

In the limit as $\delta \rightarrow 1$, these become
\begin{align}
\frac{2MT}{\rho}p^-+(T+1)(v_i-3r)<N\left (v_i-c_i^--\frac{9r}{2}\right)\\
\frac{2MT}{\rho}p^++(T+1)(v_i+3r)<N\left (c_i^+-v_i-\frac{9r}{2}\right).
\end{align}

Given that Step 2 guarantees that $c_i^+-5r>v_i>c_i^-+5r$, both of these are implied by (\ref{eq:N_1}). Thus there is a $\hat \delta$ as we require.

\item[Step 8.] 
We require that the payoff distortion due to tests is small so that we can achieve our target payoff. We have
(\ref{eq:N_1}), but it does not take into account discounting. Thus, we find
a $\delta_{\text{dist}}\in(0,1)$ such that for all $\delta>\delta_{\text{dist}}$
we have  
\begin{equation}
\sum_{t=T+1}^{N+T}\delta^{t}r\rho/2>\sum_{t=0}^{T}\delta^{t}3\sqrt{n}M,\label{eq:discN_1}
\end{equation}
which is possible as in the limit as $\delta\rightarrow 1$ this inequality reduces to the undiscounted
version in (\ref{eq:N_1}), which is strict.

\item[Step 9.] We want the equilibrium payoff to not vary much with the
discount. To do this we find $\delta_{s}$ such that for
all $i\in I$ and all $\delta>\delta_{s}$ we have
\[
\left[1-\delta\frac{1-\delta^{N}}{1-\delta^{(N+T+1)}}\right]|v_{i}|<r.
\]
As $\delta$ increases to $1$, the limiting condition is $(T+1)|v_i|<(N+T+1)r$, which is implied by Step 6; therefore,
such a cutoff $\delta_s$ can be found.

\item[Step 10.] Finally, define
\begin{equation}
\delta_{\min}:=\max\left\{ \left(\frac{\underline{\delta}}{1-\max\{p_{-},p_{+}\}}\right)^{\frac{1}{N+T+1}},\delta_{\text{dist}},\widehat{\delta},\delta_{s}\right\} .
\end{equation}
\end{enumerate}
Given a $\delta>\delta_{\min}$ , proceed as follows.

Define $\delta^{+}:=\delta^{N+T+1}(1-p^{+})$ and $\delta^{-}=\delta^{N+T+1}(1-p^{-})$.
By construction,
\begin{equation}
\delta^{+},\delta^{-}\in(\underline{\delta},\overline{\delta})
\end{equation}

First, we want to adjust the starting points of the self-generation
algorithm in both balls to account for (slightly) different reset
probabilities, but without taking into account the test rounds. Define
\begin{equation}
\lambda\,:=v-\left[\frac{\frac{q^{*}(c^{+})}{1-\delta^{+}}+\frac{(1-q^{*})(c^{-})}{1-\delta^{-}}}{\frac{q^{*}}{1-\delta^{+}}+\frac{1-q^{*}}{1-\delta^{-}}}\right].
\end{equation}
The expression in square brackets has the following interpretation.
Suppose that we toss a coin with a probability $q^{\ast}$ of coming
up heads; if we toss heads we play a sequence of actions that gives us $c^{+}$
while if it comes up tails we play the sequence of actions that gives
us $c^{-}$. However, as we reset back to the beginning with different probabilities,
we do not get $q^*c^++(1-q^*)c^-$. Instead, we end up getting a combination
of $c^{+}$ and $c^{-}$with weights in the ratio $q^{\ast}/(1-\delta^{+})$
to $(1-q^{\ast})/(1-\delta^{-})$. The quantity $\lambda$ can be
interpreted as the adjustment to the starting point needed to get
$v$; this follows as the above can be rearranged to write $v$ as
a convex combination of $c^{+}+\lambda$ and $c^{-}+\lambda$:
\[
v=\frac{\frac{q^{*}(c^{+}+\lambda)}{1-\delta^{+}}+\frac{(1-q^{*})(c^{-}+\lambda)}{1-\delta^{-}}}{\frac{q^{*}}{1-\delta^{+}}+\frac{1-q^{*}}{1-\delta^{-}}}.
\]
By (\ref{cond:small-lambda}) we have that $ \left\| \lambda \right\| <r/2$.

We now further adjust the starting points in view of the non-normal
phases. We compute the payoff $v^{+}$ such that one round of the average \textit{Select} phase
payoff $z$ (detailed below),
then $N$ rounds of $v^{+}$, and finally $T$ rounds of the test
$\mathcal{T}^{+}$ (whose average $\delta$-discounted payoffs we denote $x_{\mathcal T^+}$) gives the same expected utility as getting $c^{+}+\lambda$ for $(1+N+T)$ rounds, \textit{i.e.},
\begin{equation}
(1-\delta^{N+T+1})(c^{+}+\lambda)=(1-\delta)z+\delta(1-\delta^{N})v^{+}+\delta^{N+1}(1-\delta^{T})\,x_{\mathcal T^+}.\label{eq:definev+}
\end{equation}
Subtracting $v^+$ from both sides and using the magnitude operator and the triangle inequality, we have
\begin{equation}
(1-\delta^{N+T+1})||(c^{+}+\lambda-v^+)||\leq (1-\delta)||(z-v^+)||+\delta^{N+1}(1-\delta^{T})\,||(x_{\mathcal T^+}-v^+)||.\label{eq:definev+}
\end{equation}
and replacing payoff differences by $2M$, the largest possible value, we have
\begin{equation}
(1-\delta^{N+T+1})||(c^{+}+\lambda-v^+)||\leq (1-\delta)2\sqrt{n}M+\delta^{N+1}(1-\delta^{T})\,2\sqrt{n}M,
\end{equation}
from which we deduce
\begin{equation}
||(c^{+}+\lambda-v^+)||< \frac{(1-\delta^{T+1})}{(1-\delta^{N+T+1})}2\sqrt{n}M.
\end{equation}
Finally, we use Step 8 to deduce
\begin{equation}
||(c^{+}+\lambda-v^+)||< \frac{\delta^{T+1}-\delta^{N+T+1}}{(1-\delta^{N+T+1})}\frac{r}{2}<\frac{r}{2}.
\end{equation}

Combining this with $ \|  \lambda \| <r/2$, the triangle inequality implies
that $ \|  c^{+}-v^{+} \| <r$; thus $v^{+}\in B(c^{+},r)$. Define $v^-$ similarly, which guarantees $v^{-}\in B(c^{-},r)$.
By the above, Step 3, and Lemma \ref{lem:double-robust-reg},
there exists a strategy profile $\sigma^{-}$ that
is a PPE for all $\delta^{\prime\prime}\in(\underline{\delta},\overline{\delta})$,
while delivering $v^{-}$ at $\delta^{-}$, and similarly a $\sigma^+$ that is a PPE
in that same set of discounts delivering $v^+$ at $\delta^+$.

\textit{Strategy Profile.} Now we describe the strategy $\sigma(\delta,v)$ as an automaton over 4 types of phases\textemdash $Select$ (1 round), $Test$ ($T$ rounds), $Norm+$($N$ rounds), and $Norm-$($N$
rounds).
\begin{itemize}
\item Start in $Select$.
\item $Select$:
\begin{itemize}
\item In the first round of play or if the test immediately before triggered
a reset, play $a^{*}$ once; if the selection
test succeeds (which it does with probability $q^{*}$), move to $Norm+$,
else move to $Norm-$.
\item If the preceding test did not trigger a reset, play $\alpha^{NE}$
once, and then move to a Normal phase of the same type
as the one leading up to the test.
\end{itemize}
\item $Norm-$: If this is the first normal phase, or the last test concluded
in favor of a reset, start playing $\sigma^-$ using an empty history in
each of the following $N$ rounds; if not, play according to $\sigma^-$
where each of the $N$ cycles left off. That is, round $t$ uses
the action profile $\sigma^-(h_{t}^{*})$ with the
linked history $h_{t}^{*}$ defined by $h_{t}^{*}:=(y_{t-n(N+T+1)},y_{t-(n-1)(N+T+1)},...y_{t-(N+T+1)})$
where $n$ is the number of tests since the last reset. 
At the end of this phase move to $Test$.
\item $Norm+$: As above, with plus instead of minus.
\item $Test$: Play the Nash-truncated test $\mathcal{T}^{-}$, lasting
$T$ rounds, if the immediately preceding normal phase was $Norm-$;
else play $\mathcal{T}^{+}$, also lasting $T$ rounds. Then move
to $Select$.
\end{itemize}

As the next result shows, the payoff under $\sigma(\delta,v)$ is close to $v$, for every high enough discount factor. For its proof, see the online appendix.

\begin{lemma}[Bounding payoffs] \label{lem:bounding-payoffs}
For any $\delta^{*}\geq\delta$,  $|U_i(\sigma(\delta,v),\delta^*)-v_i|<3r$ for all $i\in I$.
\end{lemma}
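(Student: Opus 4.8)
The plan is to compute the expected payoff of $\sigma(\delta,v)$ from the renewal structure that the tests impose on the sequence of cycle types, to check that this payoff equals $v$ at the calibration discount $\delta$, and then to bound its drift as the evaluation discount is raised to an arbitrary $\delta^*\ge\delta$. To set up the recursion, note that a cycle consists of one \textit{Select} round, $N$ \textit{Normal} rounds, and $T$ \textit{Test} rounds, so it has length $N+T+1$; a passed test continues the current cycle type, while a failed test resets play to a \textit{Select} round that re-randomizes into \textit{Norm}$+$ or \textit{Norm}$-$ with probabilities $q^*$ and $1-q^*$. Thus cycle types form a renewal process with geometric runs, the continuation factor of a type-$\tau$ run being $(\delta^*)^{N+T+1}(1-p^\tau)$. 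Letting $G^\tau$ be the expected total discounted reward at $\delta^*$ from the start of a fresh type-$\tau$ run, $g^\tau$ the within-cycle discounted reward, and $G=q^*G^++(1-q^*)G^-$ the value at a reset, the renewal identity reads
$$G^\tau\big(1-(\delta^*)^{N+T+1}(1-p^\tau)\big)=g^\tau+(\delta^*)^{N+T+1}p^\tau\,G,$$
and the quantity of interest is $U_i(\sigma(\delta,v),\delta^*)=(1-\delta^*)G_i$.

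I would first evaluate this at $\delta^*=\delta$. By Step~3 and Lemma~\ref{lem:double-robust-reg}, $\sigma^\pm$ deliver $v^\pm\in B(c^\pm,r)$ at the effective discounts $\delta^\pm=\delta^{N+T+1}(1-p^\pm)\in(\underline\delta,\overline\delta)$, and the defining identity for $v^+$ (with its $-$ counterpart) was chosen exactly so that a full cycle---\textit{Select} with average payoff $z$, then \textit{Normal} delivering $v^\pm$, then \textit{Test} delivering $x_{\mathcal T^\pm}$---averages to $c^\pm+\lambda$. Feeding this into the renewal solution turns the reset weights into $\tfrac{q^*}{1-\delta^+}$ and $\tfrac{1-q^*}{1-\delta^-}$, and the rearrangement following the definition of $\lambda$ shows the resulting combination of $c^++\lambda$ and $c^-+\lambda$ equals $v$. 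Hence at $\delta$ the payoff coincides with $v$ up to the single heterogeneous \textit{Select} round per run ($a^*$ on the first cycle of a run, $\alpha^{NE}$ afterward), a low-weight term absorbed into $z$ and controlled by Step~8.

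I would then lift the conclusion to every $\delta^*\ge\delta$. Two things move: the effective discounts $(\delta^*)^{N+T+1}(1-p^\pm)$ at which $\sigma^\pm$ are now evaluated, and the intra-cycle discount weights. For $\delta\ge\delta_{\min}$, Steps~4 and~10 keep $(\delta^*)^{N+T+1}(1-p^\pm)$ inside $(\underline\delta,\overline\delta)$ for all $\delta^*\in[\delta,1)$, so $\sigma^\pm$ remain PPEs and, by the self-generation underlying Lemma~\ref{lem:ballSS}, their payoffs stay within $O(r)$ of $c^\pm$; the drift they incur is bounded by Step~9, whose inequality limits by $r$ the payoff change from redistributing discount weight across the $N+T+1$ rounds. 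The distortion from the $T+1$ non-\textit{Normal} rounds per cycle is held below $r/2$ by Steps~6 and~8, since $N$ is large relative to $T+1$ through (\ref{eq:N_1}) and (\ref{eq:discN_1}). Collecting the ball radius ($\le r$), the adjustment $\|\lambda\|<r/2$, the discount-variation slack ($<r$), and the non-\textit{Normal} distortion ($<r/2$), the triangle inequality applied coordinatewise yields $|U_i(\sigma(\delta,v),\delta^*)-v_i|<3r$.

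The main obstacle is precisely this last step: obtaining the bound uniformly in $\delta^*\ge\delta$, not merely at the calibration discount. The difficulty is that $\sigma^\pm$ pin $v^\pm$ down only at $\delta^\pm$, so one must argue---through the smoothing already exploited in the Patience Lemma together with the explicit cutoffs $\delta_s,\delta_{\mathrm{dist}},\widehat\delta$ of Steps~8 and~9---that shifting the effective discount keeps the renewal-weighted payoff within its allotted share of the $3r$ budget, and that the several error sources stay simultaneously below $3r$ across the whole range of admissible $\delta^*$.
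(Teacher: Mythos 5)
Your renewal recursion and your $3r$ error budget (ball radius $r$, the $\|\lambda\|<r/2$ adjustment from Step~4, the $<r$ slack from Step~9, and the $<r/2$ test-phase distortion from Step~8) are exactly the ingredients the paper uses. The gap is in your proof skeleton: you propose to compute the payoff exactly at the calibration discount $\delta$ and then ``lift'' the conclusion to $\delta^*\ge\delta$ by controlling the drift of the payoff as the evaluation discount moves. That lifting step is the one you yourself flag as the main obstacle, and it is not resolved by anything you cite: the payoffs of $\sigma^{\pm}$ are pinned down only at the effective discounts $\delta^{\pm}$, and a direct perturbation argument fails because the derivative of the normalized payoff with respect to the discount is only bounded by $2M/(1-\delta^*)$, which blows up as $\delta^*\uparrow 1$ (the paper does use such a derivative bound in Lemma~\ref{lem:double-robust-reg}, but only over the tiny interval $(\delta_l,\delta_h)$ whose length was chosen of order $\eta(1-\delta_h)^2/M$; it cannot cover the whole range $[\delta,1)$).

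The paper avoids the problem by never passing through the calibration discount at all: it writes the self-referential inequality for $U_i(\sigma(\delta,v),\delta^*)/(1-\delta^*)$ directly at an arbitrary $\delta^*\ge\delta$, bounds the test and select rounds by $M(T+1)$ per run and converts that to an $r/2$ charge per normal round via Step~8, and then invokes the Patience Lemma to bound every normal-phase cycle by $c_i^{\pm}+3r/2$ --- this works because self-generation keeps all continuation payoffs of $\sigma^{\pm}$ in $B(c^{\pm},r)$ at the construction discount, and the Patience Lemma propagates those two-sided bounds to every higher discount, with no reference to the value $v^{\pm}$ delivered at $\delta^{\pm}$. Solving the resulting inequality for $U_i$ produces the weighted combination of $c^++\lambda$ and $c^-+\lambda$ with weights $q^*/(1-(1-p^+)\delta^{*(N+T+1)})$ and $(1-q^*)/(1-(1-p^-)\delta^{*(N+T+1)})$, and the point is that conditions (\ref{cond:small-lambda}) in Step~4 and the inequality in Step~9 were deliberately stated uniformly over all admissible effective discounts, so they apply at $\delta^*$ with no further argument. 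To repair your proof, discard the ``exact at $\delta$, then drift'' structure and run your renewal identity as a chain of inequalities at $\delta^*$ itself, substituting the Patience-Lemma bounds where you currently substitute the exact values $v^{\pm}$.
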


\subsubsection*{Incentives}
We now show that the strategy $\sigma(\delta,v)$ is a PPE at any
$\delta^{\ast}\geq\delta_{\min}$.

At a history $h$ that leads to the start of a \textit{Norm-} phase, the future payoffs due
to play according to $\sigma^-$ until a reset are
\begin{equation}
\sum_{j=1}^{N}\delta^{\ast j-1}\sum_{\tau=0}^{\infty}[(1-p^{-})\delta^{\ast(N+T+1)}]^{\tau}g_i(\sigma^-(h^{*}))
\end{equation}
and reflect the $N$ `separate' cycles of play according to the self-generation algorithm. Play during the test and select phases
gives at most $M$ per period, so that the payoff due to test and reset phases until the next reset are bounded above by
\begin{equation}
\sum_{\tau=0}^{\infty}[(1-p^{-})\delta^{\ast(N+T+1)}]^{\tau}[M(T+1)].
\end{equation}
Finally, a reset returns continuation payoffs to $U_i(\sigma(\delta,v),\delta^*)$, so that total future payoffs from play after resets are bounded above
by 
\begin{equation}
\delta^{\ast(N+T+1)}\sum_{\tau=0}^{\infty}[\delta^{\ast(N+T+1)}(1-p^{-})]^{\tau}p^{-}\frac{U_i(\sigma(\delta,v),\delta^*)}{1-\delta^{\ast}}
\end{equation}

Thus, we can bound payoffs at the outset of a $Norm-$ phase, following a history $h$, by

\begin{align}
\nonumber
&&\frac{U_i(\sigma(\delta,v)(h),\delta^*)}{1-\delta^{\ast}} \\
&&\leq \left[\sum_{j=1}^{N}\delta^{\ast j-1}\sum_{\tau=0}^{\infty}[(1-p^{-})\delta^{\ast(N+T+1)}]^{\tau}g_i(\sigma^-(h^{*}))
 \label{equ:inc:1} +\sum_{\tau=0}^{\infty}[(1-p^{-})\delta^{\ast(N+T+1)}]^{\tau}[M(T+1)]\right]\\
&&\ \ \ +  \delta^{\ast(N+T+1)}\sum_{\tau=0}^{\infty}[\delta^{\ast(N+T+1)}(1-p^{-})]^{\tau}p^{-}\frac{U_i(\sigma(\delta,v),\delta^*)}{1-\delta^{\ast}}.
\nonumber
\end{align}

Consider a deviation by some player during a $Test$ phase. By the
fact that $a^{*}$ signal-maximizes $Y^{*}$, deviation from $a^{*}$
therefore decreases the probability of passing the test. Suppose
the test is following a $Norm-$ phase; failing the test therefore
decreases the probability of a reset. Take the smallest such change,
$\rho$. An upper bound on the within-test-phase benefits of a deviation
is $2MT$. So, a deviation is not profitable if
\begin{equation}
2MT+\delta^{\ast T}\rho\left[\frac{U_i(\sigma(\delta,v)(h),\delta^*)-U_i(\sigma(\delta,v),\delta^*)}{1-\delta^{*}}\right] \leq 0,
\label{equ:inc:2}
\end{equation}
where $h$ is the relevant history.

We can substitute the expression in \eqref{equ:inc:1} - an upper bound to $U_i(\sigma(\delta,v)(h),\delta^*)$ - to bound the LHS of \eqref{equ:inc:2} above by

\begin{align*}
  2MT  +\delta^{\ast T}\rho
 &
 \bigg[
 \sum_{j=1}^{N}\delta^{\ast j-1}
 \sum_{\tau=0}^{\infty} \{ (1-p^{-})\delta^{\ast(N+T+1)} \}^{\tau}g_i(\sigma^-(h^{*}))\\
 & +\sum_{\tau=0}^{\infty}
 \{(1-p^{-})\delta^{\ast(N+T+1)} \}^{\tau} M(T+1) \\
 & + \big\{ \delta^{\ast(N+T+1)}\sum_{\tau=0}^{\infty}
 \{ \delta^{\ast(N+T+1)}(1-p^{-}) \}^{\tau}p^{-}-1 \big\} \frac{U_i(\sigma(\delta,v),\delta^*)}{1-\delta^{\ast}}
 \bigg],
\end{align*}
which is in turn bounded above (using Step 8) by
\begin{align*}
2MT+\delta^{\ast T}\rho \bigg[ \sum_{j=1}^{N}\delta^{\ast j-1}\sum_{\tau=0}^{\infty}[(1-p^{-})\delta^{\ast(N+T+1)}]^{\tau}[g_i(\sigma^-(h^{*}))+r/2]\\ -\frac{1-\delta^{\ast(N+T+1)}}{1-(1-p^{-})\delta^{\ast(N+T+1)}}\frac{U_i(\sigma(\delta,v),\delta^*)}{1-\delta^{\ast}} \bigg].
\end{align*}
Now we can apply the Patience Lemma to get a new bound:
\begin{align*}
2MT+\delta^{\ast T}\rho\left[\sum_{j=1}^{N}\delta^{\ast j-1}\frac{c^{-}_i+3r/2}{1-(1-p^{-})\delta^{\ast(N+T+1)}}-\frac{1-\delta^{\ast(N+T+1)}}{1-(1-p^{-})\delta^{\ast(N+T+1)}}\frac{U_i(\sigma(\delta,v),\delta^*)}{1-\delta^{\ast}}\right].
\end{align*}
Using our bounds on $U_i(\sigma(\delta,v),\delta^*)$ from Lemma (\ref{lem:bounding-payoffs}), we can get a further upper bound,
\begin{align*}
2MT+\frac{\delta^{\ast T}}{1-\delta^{\ast}}\rho\left[\frac{(c^{-}_i+3r/2)(1-\delta^{*N})-(1-\delta^{\ast(N+T+1)})(v_i-3r)}{1-(1-p^{-})\delta^{\ast(N+T+1)}}\right],
\end{align*}
which we know from Step 7 is negative. Therefore, there is no incentive
to deviate during a $Test$ phase following $Norm-$. The same argument
applies mutatis mutandis to a $Test$ phase following $Norm+$, as
well as the $Select$ phase.\\

Incentives during the $Norm+$ and $Norm-$ Phases: Since the strategy
in the normal phase involves playing a PPE strategy with rebooting,
it follows that there is no incentive for any player to unilaterally
deviate from the prescribed actions in any round.

\subsubsection*{Equilibrium payoffs}

Given the construction of $v^{+}$ and $v^{-}$, the payoff vector
of this equilibrium at discount $\delta$ is $U(\sigma(\delta,v),\delta)=v$.
This completes our proof. \hfill \qed

\newpage
\section*{Online Appendix OA: Additional Proofs for Section \ref{sec:IM}}

The goal of this online appendix is two-fold. First, we show that the intersection over all non-zero directions of the lower half-spaces defined by the corresponding MI-scores equals the set $F^{\textrm{MI}, \pi}$. Second, we prove Lemma 7. 

The intersection of the maximal lower half-spaces is
\[
H^{*} := \underset{\lambda \neq 0}{\cap} H^{-} (\lambda, k^{\mathrm{MI}}).
\]
The MI-score differs from the usual score in that only action profiles with myopic indifference can be used to achieve it. Note also that incentives aren't strict in calculating the score, since we use the standard APS operator $\mathcal{B}$ rather than our strict version $\mathcal{B}_{\eta}$ for some $\eta >0$.
While our equilibrium construction requires strict incentives, this is tackled perturbing it to create a point with almost the maximal score but with strict incentives.

We consider the following mutually exclusive directions: (i) non-coordinate directions, \textit{i.e.}, at least two coordinates are non-zero; (ii) positive coordinate directions, \textit{i.e.}, exactly one coordinate is $+1$, while the rest are zero;
(iii) negative coordinate directions, \textit{i.e.}, exactly one coordinate is $-1$ while the rest are zero.

\begin{lemma} \label{lem:H*isFX}
For any direction $\lambda$, the $\mathrm{MI}$-score is attained by an action profile in $\mathcal{A^{\mathrm{MI}}}$; furthermore, a pure action profile achieves this score in directions other than negative coordinate ones.  Additionally,  $F^{\mathrm{MI}, \pi}=\{v\in \mathbb R^n \big| \lambda \cdot v \leq k^{\mathrm{MI}}(\lambda)~~ \forall \lambda~[\|\lambda\|=1]\}$.
\end{lemma}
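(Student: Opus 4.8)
The plan is to compute the MI-score direction by direction, grouping the unit vectors $\lambda$ into the three cases isolated in the statement, and then to read off the half-space description of $F^{\mathrm{MI},\pi}$. Throughout I write the score program in normalized (FL 1994) form: $k^{\mathrm{MI}}(\lambda)=\sup\lambda\cdot v$ subject to $\alpha\in\mathcal A^{\mathrm{MI}}$, $v_j=g_j(\alpha)+\sum_y\pi(y\mid\alpha)x_j(y)$ for all $j$, the augmented-game Nash constraints $v_j\ge g_j(a_j,\alpha_{-j})+\sum_y\pi(y\mid a_j,\alpha_{-j})x_j(y)$ for all $j,a_j$, and the budget constraint $\lambda\cdot x(y)\le 0$ for every $y$ (equivalently, continuations lie in $H^-(\lambda,\lambda\cdot v)$). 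The budget constraint gives the uniform bound $\lambda\cdot v=\lambda\cdot g(\alpha)+\sum_y\pi(y\mid\alpha)\,\lambda\cdot x(y)\le\lambda\cdot g(\alpha)\le\max_{a\in A}\lambda\cdot g(a)$, so the score is finite in every direction. Since $\mathcal A^{\mathrm{MI}}$ is a finite union of compact stage-game Nash sets it is itself compact; attainment of the supremum by some $\alpha\in\mathcal A^{\mathrm{MI}}$, the first assertion, will follow constructively from the case analysis below.

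Next I treat the non-negative-coordinate directions, i.e. $\lambda$ non-coordinate or $\lambda=e_i$, and show a pure profile attains $k^{\mathrm{MI}}(\lambda)=\max_a\lambda\cdot g(a)$. Pick a pure maximizer $a^*\in\argmax_{a}\lambda\cdot g(a)$; since pure profiles lie in $\mathcal A^{\mathrm{MI}}$, it remains to exhibit transfers realizing the bound. For $\lambda=e_i$ set $x_i\equiv 0$, which satisfies $\lambda\cdot x=x_i\le 0$ and gives $v_i=g_i(a^*)=\max_a g_i(a)$; player $i$ cannot gain by deviating since $g_i(a_i',a^*_{-i})\le g_i(a^*)$, while for each $j\neq i$ the budget leaves $x_j$ free, so IFR of $\pi_j$ lets us choose $x_j(\cdot)$ deterring $j$. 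For non-coordinate $\lambda$, enforceability on the hyperplane $\{\lambda\cdot x=0\}$ is precisely the situation handled by FLM 1994 and FL 1994 under pairwise full rank; under the product structure, IFR for every player is equivalent to pairwise full rank everywhere, so there exist transfers with $\lambda\cdot x(y)=0$ for all $y$ deterring every unilateral deviation from $a^*$, whence $\lambda\cdot v=\lambda\cdot g(a^*)=\max_a\lambda\cdot g(a)$. In both subcases mixing is unnecessary, establishing the second assertion.

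The negative-coordinate directions $\lambda=-e_i$ are where the MI constraint and the product structure do the work, and they are the crux of the argument. Here $k^{\mathrm{MI}}(-e_i)=-\min v_i$, and the budget constraint reads $-x_i(y)\le 0$, i.e. $x_i\ge 0$. Under the product structure $\pi(y\mid a)=\prod_j\pi_j(y_j\mid a_j)$, player $i$'s expected transfer and incentives depend on $x_i$ only through its $y_i$-average $\hat x_i(y_i):=\sum_{y_{-i}}\prod_{j\neq i}\pi_j(y_j\mid\alpha_j)\,x_i(y_i,y_{-i})\ge 0$, so without loss $x_i$ depends on $y_i$ alone; then $v_i=g_i(\alpha)+\sum_{y_i}\pi_i(y_i\mid\alpha_i)x_i(y_i)$, and because $\alpha\in\mathcal A^{\mathrm{MI}}$ makes $g_i(\cdot,\alpha_{-i})$ constant on $\supp\alpha_i$, player $i$'s augmented-game constraints are exactly the constraints of program $P_i^{\mathrm{MI},\pi_i}$. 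The remaining augmented-game constraints, those of players $j\neq i$, involve only $x_j$, which the budget does not restrict (as $\lambda=-e_i$); by IFR of $\pi_j$ one can choose $x_j(y_j)$ keeping $j$'s augmented payoff constant on $\supp\alpha_j$ and strictly lower off it, so these constraints are met at no cost to the $i$-objective. Hence minimizing $v_i$ over the MI-score program coincides with $P_i^{\mathrm{MI},\pi_i}$, giving $k^{\mathrm{MI}}(-e_i)=-\underline v_i^{\mathrm{MI},\pi_i}$; the minimum is attained because $P_i^{\mathrm{MI},\pi_i}$ is a feasible, bounded-below parametric linear program over the compact set $\mathcal A^{\mathrm{MI}}$.

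Finally I assemble the half-space description. Since $x_i\ge 0$ forces $v_i\ge g_i(\alpha)\ge\min_a g_i(a)$ on every feasible point, we have $\underline v_i^{\mathrm{MI},\pi_i}\ge\min_a g_i(a)$. Collecting the computations, for $\|\lambda\|=1$ we have $k^{\mathrm{MI}}(\lambda)=\max_a\lambda\cdot g(a)$ when $\lambda$ is non-coordinate or $e_i$, and $k^{\mathrm{MI}}(-e_i)=-\underline v_i^{\mathrm{MI},\pi_i}$. If $v\in F^{\mathrm{MI},\pi}$ then $v\in\co(g(A))$, so $\lambda\cdot v\le\max_a\lambda\cdot g(a)=k^{\mathrm{MI}}(\lambda)$ in every non-negative-coordinate direction, while $v_i\ge\underline v_i^{\mathrm{MI},\pi_i}$ is exactly $-v_i\le k^{\mathrm{MI}}(-e_i)$; hence $v\in H^*$. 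Conversely, if $v\in H^*$ then $v_i\ge\underline v_i^{\mathrm{MI},\pi_i}\ge\min_a g_i(a)$ supplies the negative-coordinate inequalities $v_i\ge\min_a g_i(a)$ that, together with the non-negative-coordinate inequalities already in $H^*$, characterize $\co(g(A))$; so $v\in\co(g(A))$ and $v_i\ge\underline v_i^{\mathrm{MI},\pi_i}$, i.e. $v\in F^{\mathrm{MI},\pi}$. This yields $F^{\mathrm{MI},\pi}=H^*=\{v\mid\lambda\cdot v\le k^{\mathrm{MI}}(\lambda)\ \forall\lambda,\ \|\lambda\|=1\}$. The main obstacle is the negative-coordinate reduction: verifying that restricting $x_i$ to depend on $y_i$ alone is without loss, and that the incentive constraints of the remaining players can always be satisfied without affecting $v_i$, both of which hinge on combining the product structure with IFR.
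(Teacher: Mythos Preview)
Your proof is correct and follows essentially the same three-case decomposition as the paper's proof: pure action profiles attain the score in non-coordinate and positive-coordinate directions via IFR (equivalently, pairwise full rank under product structure), while the negative-coordinate score reduces to the program $P_i^{\mathrm{MI},\pi_i}$. You go beyond the paper in explicitly working out the product-structure reduction (that $x_i$ may be taken to depend on $y_i$ alone) and in carefully assembling the half-space identity $F^{\mathrm{MI},\pi}=H^*$, which the paper leaves implicit; these additions are helpful and sound.
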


The proof of the lemma is standard. Intuitively, IFR allows us to generate incentives in all but negative-coordinate directions. 
These require us to use mixed actions in general, and indeed, the Blackwell restriction implies that any such action must also be in $\mathcal{A}^{\mathrm{MI}}$; this is just the MI-minmax defined earlier, rather than the standard minmax seen in the literature.

\begin{proof}
 We divide the proof into the above cases.

 \noindent Case 1: $\lambda$ is a non-coordinate direction. One of the highest points in $F$ in direction $\lambda$ must be generated by a pure action profile, which is in $\mathcal{A^{\mathrm{MI}}}$, and using IFR we can satisfy the IC conditions with equality using continuation payoffs in the lower halfspace --- this is the same argument as in Lemma 5.1 in FL 1994.

 \noindent Case 2: $\lambda$ is a positive coordinate direction ${e}_i$. The score maximization reduces to $\max v_i$ subject to the IC constraints. We pick the pure action profile that maximises the payoff of $i$ in $F$. Thanks to IFR we can design continuation payoffs so that incentives hold with equality.

 \noindent Case 3: $\lambda$ is a negative coordinate direction $-{e}_i$.
Only here do might we need to achieve the $\mathrm{MI}$-score using a mixed action which is in $\mathcal{A}^\mathrm{MI}$. In this case it follows from our definition of the program $P_i^{\mathrm{MI},\pi_i}$ and FL 1994 that the score is the negative of the minmax $\underline v_i^{\textrm{MI}, {\pi}_i}$.
\end{proof}

\subsection*{Proof of Lemma \ref{lem:bounding-payoffs}.}
We first bound $U_i(\sigma(\delta,v),\delta^*)$ from above. Notice that terms are
divided based on whether they come from the plus or the minus ball.
Within each ball we can group terms into $1+N+T$
cycles, and finally sum over the $N$ normal phase cycles, replacing
the test phase cycles by the largest possible payoff. This gives
\begin{align*}
\frac{U_i(\sigma(\delta,v),\delta^*)}{1-\delta^{\ast}}
\leq & q^{*} \sum_{j=1}^{N}\delta^{\ast j}\sum_{\tau=0}^{\infty}[(1-p^{+})\delta^{\ast(N+T+1)}]^{\tau}g_i(\sigma^+(h^{*})) \\
+ &  q^{*} \sum_{\tau=0}^{\infty}[(1-p^{+})\delta^{\ast(N+T+1)}]^{\tau}[M(T+1)] \\
+ & (1-q^{*}) \sum_{j=1}^{N}\delta^{\ast j}\sum_{\tau=0}^{\infty}[(1-p^{-})\delta^{\ast(N+T+1)}]^{\tau}g_i(\sigma^-(h^{*})) \\
+ & (1-q^{*}) \sum_{\tau=0}^{\infty}[(1-p^{-})\delta^{\ast(N+T+1)}]^{\tau}[M(T+1)]\\
+ & q^{*}\delta^{\ast(N+T+1)}\sum_{\tau=0}^{\infty}[\delta^{\ast(N+T+1)}(1-p^{+})]^{\tau}p^{+}\frac{U_i(\sigma(\delta,v),\delta^*)}{1-\delta^{\ast}}\\
+ & (1-q^{*})\delta^{\ast(N+T+1)}\sum_{\tau=0}^{\infty}[\delta^{\ast(N+T+1)}(1-p^{-})]^{\tau}p^{-}\frac{U_i(\sigma(\delta,v),\delta^*)}{1-\delta^{\ast}}
\end{align*}
which, by Step 6, grants
\begin{align*}
\frac{U_i(\sigma(\delta,v),\delta^*)}{1-\delta^{\ast}}< & q^{*}\sum_{j=1}^{N}\delta^{\ast j}\sum_{\tau=0}^{\infty}[(1-p^{+})\delta^{\ast(N+T+1)}]^{\tau}(g_i(\sigma^+(h^{*}))+r/2)\\
+ & (1-q^{*})\sum_{j=1}^{N}\delta^{\ast j}\sum_{\tau=0}^{\infty}[(1-p^{-})\delta^{\ast(N+T+1)}]^{\tau}(g_i(\sigma^-(h^{*}))+r/2)\\
+ & q^{*}\delta^{\ast(N+T+1)}\sum_{\tau=0}^{\infty}[\delta^{\ast(N+T+1)}(1-p^{+})]^{\tau}p_{+}\frac{U_i(\sigma(\delta,v),\delta^*)}{1-\delta^{\ast}}\\
+ & (1-q^{*})\delta^{\ast(N+T+1)}\sum_{\tau=0}^{\infty}[\delta^{\ast(N+T+1)}(1-p^{-})]^{\tau}p^{-}\frac{U_i(\sigma(\delta,v),\delta^*)}{1-\delta^{\ast}}.
\end{align*}
Since these actions are generated by the self-generation algorithm,
which keeps all continuation payoffs in a ball of radius $r$ around
$c^{-}$ or $c^{+}$, the Patience Lemma gives a bound on each cycle:
\begin{align*}
\frac{U_i(\sigma(\delta,v),\delta^*)}{1-\delta^{\ast}}< & q^{*}\sum_{j=1}^{N}\delta^{\ast j}\frac{(c^{+}_i+r+r/2)}{1-(1-p_{+})\delta^{\ast(N+T+1)}}+(1-q^{*})\sum_{j=1}^{N}\delta^{\ast j}\frac{(c^{-}_i+r+r/2)}{1-(1-p^{-})\delta^{\ast(N+T+1)}}\\
+ & q^{*}\delta^{\ast(N+T+1)}\sum_{\tau=0}^{\infty}[\delta^{\ast(N+T+1)}(1-p_{+})]^{\tau}p_{+}\frac{U_i(\sigma(\delta,v),\delta^*)}{1-\delta^{\ast}}\\
+ & (1-q^{*})\delta^{\ast(N+T+1)}\sum_{\tau=0}^{\infty}[\delta^{\ast(N+T+1)}(1-p^{-})]^{\tau}p^{-}\frac{U_i(\sigma(\delta,v),\delta^*)}{1-\delta^{\ast}}.
\end{align*}
Collecting all terms with $v(\delta^{*})$ on the left, we have
\begin{align*}
\frac{U_i(\sigma(\delta,v),\delta^*)}{1-\delta^{\ast}}\left[1-\frac{q^{\ast}p^{+}\delta^{\ast(N+T+1)}}{1-(1-p^{+})\delta^{\ast(N+T+1)}}-\frac{q^{\ast}p^{-}\delta^{\ast(N+T+1)}}{1-(1-p^{-})\delta^{\ast(N+T+1)}}\right]\\
<\sum_{j=1}^{N}\delta^{\ast j}\frac{q^{\ast}(c^{+}_i+r+r/2)}{1-(1-p^{+})\delta^{\ast(N+T+1)}}+\frac{(1-q^{\ast})(c^{-}_i+r+r/2)}{1-(1-p^{-})\delta^{\ast(N+T+1)}}.
\end{align*}
This in turn leads to
\begin{align*}
\frac{U_i(\sigma(\delta,v),\delta^*)}{1-\delta^{\ast}}\left[\frac{q^{*}(1-\delta^{\ast(N+T+1)})}{1-(1-p^{+})\delta^{\ast(N+T+1})}+\frac{(1-q^{*})(1-\delta^{\ast(N+T+1)})}{1-(1-p^{-})\delta^{\ast(N+T+1})}\right]\\
<\delta^{\ast}\frac{1-\delta^{\ast N}}{1-\delta^{*}}\left[\frac{q^{\ast}(c^{+}_i+r+r/2)}{1-(1-p^{+})\delta^{\ast(N+T+1)}}+\frac{(1-q^{\ast})(c^{-}_i+r+r/2)}{1-(1-p^{-})\delta^{\ast(N+T+1)}}\right],
\end{align*}
or simply
\begin{align*}
U_i(\sigma(\delta,v),\delta^*)<\delta^{\ast}\frac{1-\delta^{\ast N}}{1-\delta^{\ast(N+T+1)}}\left[3r/2+\frac{\frac{q^{\ast}c_i^{+}}{1-(1-p^{+})\delta^{\ast(N+T+1)}}+\frac{(1-q^{\ast})c^{-}_i}{1-(1-p^{-})\delta^{\ast(N+T+1)}}}{\frac{q^{*}}{1-(1-p^{+})\delta^{\ast(N+T+1})}+\frac{(1-q^{*})}{1-(1-p^{-})\delta^{\ast(N+T+1})}}\right].
\end{align*}
This and Step 4 grants
\[
U_i(\sigma(\delta,v),\delta^*)<\delta^{\ast}\frac{1-\delta^{\ast N}}{1-\delta^{\ast(N+T+1)}}[2r+v_i],
\]
or,
\[
U_i(\sigma(\delta,v),\delta^*)-v_i<\delta^{\ast}\frac{1-\delta^{\ast N}}{1-\delta^{\ast(N+T+1)}}2r-[1-\delta^{\ast}\frac{1-\delta^{\ast N}}{1-\delta^{\ast(N+T+1)}}]v_i,
\]
so that Step 9 gives
\[
U_i(\sigma(\delta,v),\delta^*)-v_i<3r.
\]
Reasoning similarly with the lower bound implies
\[
U_i(\sigma(\delta,v),\delta^*)-v_i>-3r.
\]
The required bound follows. \hfill \qed
\bigskip

\section*{Online Appendix OB: Proofs for Section \ref{sec:Ext}}


\subsection*{Proof of Theorem \ref{LBPPM}}

We start with a Tauberian result, which is used to show that if the average of the first $T$ terms of a sequence of reals converge as $T\to \infty$, then the normalized discounted sum of the whole sequence converges to the same limit as $\delta\uparrow 1$. The following classic result, due to Frobenius, is our starting point.

\begin{lemma}[Frobenius] \label{frob} For any sequence of reals $(x_{t})_{t=0}^{\infty}$ satisfying
\[
\lim_{T\rightarrow\infty}\frac{1}{T+1}\sum_{t=0}^{T}\,\sum_{k=0}^{t}x_{t}=x^{\ast}\in\mathbb{R},
\]
we have $\lim_{\delta\uparrow1}\sum_{t=0}^{\infty}\delta^{t}x_{t}=x^{\ast}$.
\end{lemma}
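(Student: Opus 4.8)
The plan is to establish the classical Frobenius result that Cesàro (i.e.\ $(C,1)$) summability of a series implies Abel summability to the same value. Reading the hypothesis as the convergence of the Cesàro means of the partial sums, I set $s_t := \sum_{k=0}^{t} x_k$ and $\sigma_T := \frac{1}{T+1}\sum_{t=0}^{T} s_t$, so that the assumption is $\sigma_T \to x^{*}$. The first observation is that, since $\sigma_T \to x^{*}$, the sequence $(\sigma_T)$ is bounded; hence $s_t = (t+1)\sigma_t - t\,\sigma_{t-1} = O(t)$ and, in turn, $x_t = s_t - s_{t-1} = O(t)$. Therefore every power series appearing below converges absolutely for $\delta \in (0,1)$, which legitimizes the rearrangements I am about to perform.

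The core of the argument is two successive Abel summations (summation by parts). Writing $x_t = s_t - s_{t-1}$ with the convention $s_{-1}:=0$ and reindexing gives, for every $\delta \in (0,1)$,
\[
f(\delta) := \sum_{t=0}^{\infty} \delta^{t} x_t = (1-\delta)\sum_{n=0}^{\infty} s_n \delta^{n}.
\]
Applying the identical manipulation to the partial sums, with $T_n := \sum_{j=0}^{n} s_j = (n+1)\sigma_n$ and $s_n = T_n - T_{n-1}$, yields
\[
f(\delta) = (1-\delta)^2 \sum_{n=0}^{\infty} (n+1)\,\sigma_n\, \delta^{n}.
\]
I would then invoke the elementary identity $\sum_{n=0}^{\infty}(n+1)\delta^{n} = (1-\delta)^{-2}$, equivalently $(1-\delta)^2\sum_{n\ge 0}(n+1)\delta^{n} = 1$, to subtract off the limit and obtain the clean expression
\[
f(\delta) - x^{*} = (1-\delta)^2 \sum_{n=0}^{\infty} (n+1)\,(\sigma_n - x^{*})\,\delta^{n}.
\]

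The conclusion then follows from a standard head/tail split. Fix $\varepsilon>0$ and pick $N$ with $|\sigma_n - x^{*}| < \varepsilon$ for all $n \ge N$. By the normalization identity, the tail contributes at most $\varepsilon\,(1-\delta)^2\sum_{n\ge 0}(n+1)\delta^{n} = \varepsilon$, uniformly in $\delta$. The head is a fixed finite sum multiplied by the prefactor $(1-\delta)^2$, which tends to $0$ as $\delta \uparrow 1$; hence it is below $\varepsilon$ once $\delta$ is close enough to $1$. Combining the two bounds gives $|f(\delta) - x^{*}| < 2\varepsilon$ for $\delta$ near $1$, which is the claim. The only point requiring care is the justification that the two summation-by-parts identities hold as identities between convergent series; this is precisely what the growth bounds $s_t, x_t = O(t)$ secure. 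I expect this bookkeeping to be the only mild obstacle, since the direction being proved is Abelian and therefore requires no Tauberian side condition on $(x_t)$ itself.
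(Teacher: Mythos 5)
Your proof is correct. The paper itself offers no proof of this lemma: it is stated as a classical result and attributed to Frobenius (1880), so there is nothing internal to compare your argument against. What you have written is the standard textbook derivation that $(C,1)$-summability implies Abel summability: two summations by parts to reach $f(\delta)-x^{*}=(1-\delta)^2\sum_{n\ge 0}(n+1)(\sigma_n-x^{*})\delta^{n}$, the normalization $\sum_{n\ge 0}(n+1)\delta^{n}=(1-\delta)^{-2}$, and a head/tail estimate. The growth bounds $s_t,x_t=O(t)$ that you extract from the boundedness of $(\sigma_T)$ do exactly the work needed to justify the rearrangements, and, as you note, no Tauberian hypothesis is required since the implication runs in the Abelian direction. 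One incidental point: the paper's displayed hypothesis has a typo (the inner summand is written $x_t$ rather than $x_k$); your reading of it as the Ces\`aro mean of the partial sums $s_t=\sum_{k=0}^{t}x_k$ is the intended one, as confirmed by the way the paper's Corollary applies the lemma to the telescoping differences $x_k-x_{k-1}$.
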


\begin{corr}\label{summeq}
If for a sequence of reals $(x_{t})_{t=0}^{\infty}$ we have $\lim_{T\rightarrow\infty}\frac{1}{T+1}\sum_{t=0}^{T}x_{t}=x^{\ast}\in \mathbb R$, then $\lim_{\delta\uparrow1}(1-\delta)\sum_{t=0}^{\infty}\delta^{t}x_{t}=x^\ast$.
\end{corr}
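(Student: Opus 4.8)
The plan is to obtain Corollary \ref{summeq} directly from Lemma \ref{frob} by a summation-by-parts trick that converts a statement about the Cesàro averages of the \emph{terms} $(x_t)$ into the hypothesis of Frobenius's theorem, which is phrased in terms of the Cesàro averages of the \emph{partial sums}. No new analytic input beyond Lemma \ref{frob} should be needed.

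First I would introduce the differenced sequence $z_t := x_t - x_{t-1}$, with the convention $x_{-1}:=0$. Its partial sums telescope, so that $\sum_{k=0}^t z_k = x_t$ for every $t$. The elementary Abel identity then gives, for each $\delta \in (0,1)$,
\[
(1-\delta)\sum_{t=0}^{\infty} \delta^t x_t \;=\; \sum_{t=0}^{\infty} \delta^t (x_t - x_{t-1}) \;=\; \sum_{t=0}^{\infty} \delta^t z_t,
\]
provided both power series converge on $(0,1)$. This convergence is automatic: the hypothesis $\frac{1}{T+1}\sum_{t=0}^{T} x_t \to x^*$ forces the partial sums $\sum_{t=0}^{T} x_t$ to be $O(T)$, whence $|x_t| = O(t)$ and the radius of convergence of $\sum_t \delta^t x_t$ is at least one.

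The crux is simply that the Cesàro average of the partial sums of $(z_t)$ coincides with the Cesàro average of the terms $(x_t)$. Indeed, since $\sum_{k=0}^t z_k = x_t$, applying Lemma \ref{frob} to the sequence $(z_t)$ requires verifying
\[
\lim_{T\rightarrow\infty}\frac{1}{T+1}\sum_{t=0}^{T}\sum_{k=0}^{t} z_k \;=\; \lim_{T\rightarrow\infty}\frac{1}{T+1}\sum_{t=0}^{T} x_t \;=\; x^*,
\]
which is precisely the hypothesis of the corollary. Frobenius's lemma then yields $\lim_{\delta\uparrow 1}\sum_{t=0}^{\infty}\delta^t z_t = x^*$, and combining this with the displayed Abel identity gives $\lim_{\delta\uparrow 1}(1-\delta)\sum_{t=0}^{\infty}\delta^t x_t = x^*$, as required.

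The only genuine obstacle is bookkeeping rather than substance: I must confirm that the differenced sequence has the claimed telescoping partial sums and that the Abel rearrangement is legitimate term-by-term on $(0,1)$. Both follow once the $O(t)$ growth bound on $x_t$ is in hand, so I expect this step to be routine and the whole argument to be short.
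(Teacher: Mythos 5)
Your proposal is correct and follows essentially the same route as the paper: both define $x_{-1}=0$, apply Lemma \ref{frob} to the differenced sequence $x_t-x_{t-1}$ (whose partial sums telescope to $x_t$), and conclude via the Abel rearrangement $\sum_t \delta^t(x_t-x_{t-1})=(1-\delta)\sum_t\delta^t x_t$. Your added remark on the $O(t)$ growth bound justifying convergence of the power series is a small extra care the paper leaves implicit.
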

\begin{proof}
Defining $x_{-1}=0$, we have
\begin{equation}
x^{\ast}=\lim_{T\rightarrow\infty}\frac{1}{T+1}\sum_{t=0}^{T}x_{t}=\lim_{T\rightarrow\infty}\frac{1}{T+1}\sum_{t=0}^{T}\sum_{k=0}^{t}(x_{k}-x_{k-1})
\end{equation}
as $\sum_{k=0}^{t}(x_{k}-x_{k-1})=x_{t}$. Using Lemma \ref{frob},
we obtain
\[
x^{\ast}=\lim_{\delta\uparrow1}\sum_{t=0}^{\infty}\delta^{t}(x_{t}-x_{t-1})=\lim_{\delta\uparrow1}\left[\sum_{t=0}^{\infty}(\delta^{t}-\delta^{t+1})x_{t}\right]=\lim_{\delta\uparrow1}(1-\delta)\sum_{t=0}^{\infty}\delta^{t}x_{t}.
\hfill     \qedhere
\]
\end{proof}

In light of this, if we can construct a Blackwell SPNE whose on-path play yields an undiscounted average payoff of $v$, then $v$ is a limit Blackwell payoff. This leads to the proof of the folk theorem.

Fix $v\in F^{\mathrm{MI}}$. We wish to construct
a strategy profile $\sigma$ such that for any given $\varepsilon>0$
there is some $\ud<1$ above which $\sigma$ is a BE with discounted
payoff within $\varepsilon$ of $ v$.

In view of Corollary \ref{summeq}, feasibility reduces
to obtaining $\bs v$ as the limit of means of a sequence of pure
action payoffs. While this is easy to do, some care is needed because
in order for a sequence of actions to be an equilibrium path, we also
need to make sure that the (discounted) continuation payoffs are individually
rational; in fact our insistence on Blackwell equilibria means that
we should keep continuation payoffs of each $i\in I$ above the corresponding
$\mathrm{MI}$-minmax value $\underline v_{i}^{\mathrm{MI}}$, not just the usual
minmax $\underline v_{i}$.

The crux is that if the target payoff $v$ is the discounted sum of pure-action payoffs, \textit{i.e.}, of points in $C=g(A)$, all continuation payoffs are not necessarily individually rational even if  $v$ is.
To circumvent this, we represent $v$ as the discounted sum of individually rational payoffs.  To this end, construct a full-dimensional set $D=\{d(1),...,d(K)\}$, where $K \in \mathbb{N}$, such that
(i) $v\in \interior \{\co(D)\}\subset \co(C)$;
(ii) each $ d(k)\in D$ is a rational convex combination of points
in $C$; and
(iii) there exists $\gamma>0$ for which $d_{i}(k)>\underline v_{i}^{\mathrm{MI}}+3\gamma$ for all $i$ and all $k=1,2,\dotsc,K$.
Since $v \in \co({D})$, there exists a weight vector (non-negative components adding up to unity) $\lambda = (\lambda(k))_{k=1}^K$  such that $ v=\sum_{k=1}^{K}\lambda(k) d(k)$.
Let  $(\lambda^m)_m$ be a sequence of weight vectors with rational components such that $ \lambda^m \rightarrow \lambda$ as $m \rightarrow \infty$. Let $v^{m}:=\sum_{k=1}^{K}\lambda^{m}(k)d(k)$.
Since each $d(k)$ is a rational convex combination of points in $C$,
a finite sequence (`subcycle $k$') from $C$ averages to $d(k)$. Without loss of generality, these $k$ subcycles have the same length.\footnote{If not, find the least common multiple $L$ of the lengths of the subcycles, and repeat each of the subcycles the appropriate number of times to create $K$ new subcycles each of length $L$.}  Similarly we write $v^{m}$ as a finite sequence (`cycle $m$') of points in $D$. Concatenate the cycles for $m=0,1,2,...$; then
replace each occurance of $d(k)$ in each cycle by subcycle $k$
to create the sequence $(x_{t}:t\geq0)$ of payoff profiles in
$C$, called the payoff path; the corresponding sequence of actions
is the action path. Since there are finitely many distinct subcycles,
choosing $\widetilde{\delta}<1$ high enough ensures that for $\delta\geq\widetilde{\delta}$
the following two conditions hold---(i) the $\delta$-discounted sum
of any subcycle differs from the simple mean by at most $\gamma$;
(ii) $(1-\delta^{L})M<\gamma$, where $L$ is the maximum length of
a subcycle and all individual payoffs of the stage game are in $[-M,M]$.
Property (i) implies that any $\delta$-discounted continuation payoff
of the path from the start of any subcycle is at least $\underline v_{i}^{\mathrm{MI}}+2\gamma$;
properties (i) and (ii) together imply that the continuation payoff
of the path from any time (even when it is not the start of a subcycle)
is at least $\underline v_{i}^{\mathrm{MI}}+\gamma$.

The means of the cycles are $v^{m}$ and since $\| v^{m}- v\| \rightarrow0$,
we have $\frac{1}{T+1}\sum_{t=0}^{T} x_{t}\rightarrow v$; Corollary
\ref{summeq} then implies that for any  some $\widehat{\delta}\in(\widetilde{\delta},1)$
we have
\begin{equation}
\left\|  v-(1-\delta)\sum_{t=0}^{\infty}\delta^{t}x_{t}\right\|
\leq\varepsilon \; \forall \, \delta\geq\widehat{\delta}.\label{eq:approx_bound}
\end{equation}

The rest of the construction, as well as the proof that the resulting strategies
constitute a Blackwell Equilibrium, follows the same path as the proof of Theorem \ref{thm:perfect}.


\begin{thebibliography}{1}
 
\bibitem{key-A88} Abreu, D. (1988). ``On the theory of infinitely
repeated games with discounting,'' \textit{Econometrica}, \textbf{56}, 383--396.

\bibitem{key-ADS94} Abreu, D., Dutta, P.K., and L. Smith (1994). ``The
folk theorem for repeated games: a NEU condition,''  \textit{Econometrica}, \textbf{62}, 939--948.

\bibitem{key-APS91} Abreu, D., Pearce, D., and E. Stacchetti (1990). ``Toward a theory of discounted repeated games with imperfect monitoring,''  \textit{Econometrica}, \textbf{58}, 1041--1063.

\bibitem{key-Ahlfors} Ahlfors, L.V. (1953). \textit{Complex analysis: an introduction to the theory of analytic functions of one complex variable}, McGraw-Hill, New York.

\bibitem{key-3} Arrow, K., and D. Levhari  (1969). ``Uniqueness of the Internal Rate of Return with Variable
Life of Investment,'' \textit{Economic Journal}, \textbf{79}, 560--566.

\bibitem{key-4} Aumann, R.J., and M. Maschler (1995). \textit{Repeated Games with Incomplete Information}, MIT Press, Boston.

\bibitem{} Bhaskar, V., Mailath, G.J., and S. Morris (2008). ``Purification in the infinitely-repeated prisoners' dilemma,'' \textit{Review of Economic Dynamics}, \textbf{11}, 515--528.


\bibitem{key-8b} Blackwell, D. (1962). ``Discrete dynamic programming,'' \textit{The Annals of Mathematical Statistics}, \textbf{33},  719--726.

\bibitem{key-9} Caron, R.J., and T. Traynor (2005). ``The zero set of a polynomial,'' technical report, University of Windsor.

\bibitem{key-4} Dasgupta, A., and S. Ghosh (2021) ``Self-accessibility,''  \textit{Journal of Economic Theory}, \textbf{200}. 

\bibitem{} Ellison, G. (1994). ``Cooperation in the Prisoner's Dilemma with Anonymous Random Matching,'' \textit{Review of Economic Studies}, \textbf{61}, 567--588.

\bibitem{Frobenius} Frobenius , G. (1880). ``Ueber die Leibnitzsche Reihe'', \textit{Journal fur die reine und angewandte Mathematik}, \textbf{89}, 262--264.

\bibitem{FL1994} Fudenberg, D., and D. K. Levine (1994). ``Efficiency and observability with long-run and short-run players," \textit{Journal of Economic Theory}, \textbf{62}, no. 1, 103-135.

\bibitem{key-FKM} Fudenberg, D., Kreps, D., and E. Maskin (1990). ``Repeated Games with Long-run and Short-run Players,'' \textit{The Review of Economic Studies}, \textbf{57}, 555--573.

\bibitem{key-6} Fudenberg, D., and E. Maskin (1986). ``The folk theorem in repeated games with discounting or with incomplete information,'' \textit{Econometrica}, \textbf{54}, 533--554.

\bibitem{key-11} Fudenberg, D., Levine, D.K., and E. Maskin (1994). ``The folk theorem with imperfect public information,'' \textit{Econometrica}, \textbf{62}, 997--1039.

\bibitem{key-12} Fudenberg, D., Levine, D.K., and S. Takahashi (2007). ``Perfect public equilibrium when players are patient,'' \textit{Games and Economic Behavior}, \textbf{61}, 27--49.


\bibitem{gogo} Gollier, C. and R. Zeckhauser (2005). ``Aggregation of Heterogeneous Time Preferences,'' \textit{Journal of Political Economy}, \textbf{113}, 878--896.

\bibitem{key-7} Gossner, O. (2020). ``The robustness of incomplete penal codes in repeated interactions,''
working paper, LSE.

\bibitem{} Harsanyi, J.C. (1973a). ``Oddness of the number of equilibrium points: A new proof,'' \textit{International Journal of Game Theory}, \textbf{2}, 235--250.

\bibitem{}------ (1973b). ``Games with randomly disturbed payoffs: A new rationale for mixed-strategy equilibrium points,'' \textit{International Journal of Game Theory}, \textbf{2}, 1--23.

\bibitem{} Istrefi, K., and S. Mouabbi (2018). ``Subjective interest rate uncertainty and the macroeconomy: A cross-country analysis,'' \textit{Journal of International Money and Finance}, \textbf{88}, 296--313.

\bibitem{jack} Jackson, M. and L. Yariv (2015). ``Collective Dynamic Choice: The Necessity of Time Inconsistency,''  \textit{American Economic Journal: Microeconomics} \textbf{7}, 150--178.

\bibitem{key-KSS88} Kalai, E., Samet, D., and W. Stanford (1988).
``A note on reactive equilibria in the discounted prisoner's dilemma and associated games,'' \textit{International Journal of Game Theory}, \textbf{17}, 177--186.

\bibitem{key-8} Kalai, E., and W. Stanford (1988). ``Finite rationality and interpersonal complexity in repeated games,'' \textit{Econometrica}, \textbf{56}, 397--410.

\bibitem{} Koopmans, T.C. (1960). ``Stationary ordinal utility and impatience,'' \textit{Econometrica}, \textbf{28}, 287--309.

\bibitem{} Lehrer, E. and A. Pauzner (1999). ``Repeated Games with Differential Time Preferences,'' \textit{Econometrica}, \textbf{67}, 393--412.

\bibitem{MS2006} Mailath, G., and Samuelson, L. (2005). \textit{Repeated games and reputations: long-run relationships}. Oxford University Press, Oxford.

\bibitem{key-MWG} Mas-Colell, A., Whinston, M.D., and J.R. Green (1995). \textit{Microeconomic theory}, Oxford University Press, Oxford.

\bibitem{key-matsu0} Matsushima, H. (1991). ``On the theory of repeated games with private information.
Part I: anti-folk theorem without communication,'' \textit{Economics Letters}, \textbf{35}, 253--256.

\bibitem{key-matsu1} Matsushima, H. (2004). ``Repeated Games with Private Monitoring: Two Players,'' \textit{Econometrica}, \textbf{72}, 823--852.


\bibitem{key-10} Neeb, K.-H. (2011). \textit{Holomorphy and convexity in Lie theory}, De Gruyter, Berlin.

\bibitem{} P\k{e}ski, M. (2012). ``An anti-folk theorem for finite past equilibria in repeated games with private monitoring,'' \textit{Theoretical Economics}, \textbf{7}, 25--55.

\bibitem{key-105} Radner, R. (1985). ``Repeated Principal-Agent Games with Discounting,'' \textit{Econometrica}, \textbf{53}, 1173--1198.

\bibitem{key-111} Renault, J. (2014). ``General limit value in dynamic programming,'' \textit{Journal of Dynamics \& Games}, \textbf{1}, 471--484.

\bibitem{key-121} Singh, V.V., Hemachandra, N., and K.S. Mallikarjuna Rao (2013). ``Blackwell optimality in stochastic games,'' \textit{International Game Theory Review}, \textbf{15}, 1--18.

\bibitem{key-13} Singh, V.V., and N. Hemachandra (2016). ``Blackwell-Nash Equilibrium for Discrete and Continuous Time Stochastic Games,'' arXiv preprint, arXiv:1605.00977.

\bibitem{key-20} Wen, Q. (1994). ``The ``folk theorem'' for repeated games with complete information,'' \textit{Econometrica}, \textbf{62}, 27--49.


\end{thebibliography}
\end{document}